\definecolor{darkviolet}{rgb}{0.5,0,0.4}
\definecolor{darkgreen}{rgb}{0,0.4,0.2}
\definecolor{darkblue}{rgb}{0.1,0.1,0.9}
\definecolor{darkgrey}{rgb}{0.5,0.5,0.5}
\definecolor{lightblue}{rgb}{0.4,0.4,1}
\lstdefinestyle{eclipsish}{
    basicstyle=\scriptsize\ttfamily,
    emphstyle=\color{red}\bfseries,
    keywordstyle=\color{darkgreen}\bfseries,
    keywordstyle=[2]\color{darkviolet}\bfseries,
    commentstyle=\color{darkgrey},
    stringstyle=\color{darkblue},
    numberstyle=\color{darkgrey}\ttfamily\tiny,
    emphstyle=\color{red},
    morecomment=[s][\color{lightblue}]{/**}{*/},
  showstringspaces=Sfalse,
  numbers=left,
  numbersep=5pt,
  xleftmargin=2.5ex,
  xrightmargin=2.4ex,
  breakindent=3ex,
  breakautoindent,
  numberblanklines=false,
  escapeinside={(*@}{@*)},
  mathescape=true,
}
\lstdefinelanguage{pseudo}[]{C}
  {morekeywords=[2]{write},%
   alsoletter={^},%
   morekeywords= {def,foreach, elseif}
  }%
\newcommand{\PTIME}{\ensuremath{\mathbf{P}}\xspace}
\newcommand{\PSPACE}{\ensuremath{\mathbf{PSPACE}}\xspace}
\newcommand{\NSPACE}{\mathbf{NSPACE}}
\newcommand{\DSPACE}[0]{\mathbf{DSPACE}}
\newcommand{\DTIME}[0]{\mathbf{DTIME}}
\newcommand{\SPACE}[1]{\ensuremath{\mathbf{SPACE}({#1})}\xspace}
\newcommand{\wrapper}[2][]{%
\ifthenelse{\isempty{#1}}{%
\ensuremath{\mathsf{#2}}\xspace}
{\ensuremath{\mathsf{#1\text{-}#2}}\xspace}}
\newcommand{\HOPS}[1][]{\wrapper[#1]{HOPA}}
\newcommand{\HOCS}[1][]{\wrapper[#1]{HOCA}}
\newcommand{\HOCSO}[1][]{\wrapper[#1]{HOCA^{-}}}
\newcommand{\HOCSM}[1][]{\wrapper[#1]{HOCA^{+}}}
\newcommand{\Storage}{\ensuremath{\mathcal{S}}\xspace} 
\newcommand{\SConfig}{\ensuremath{X}\xspace}           
\newcommand{\Sconfig}{\ensuremath{x}\xspace}           
\newcommand{\SConfigInit}{\ensuremath{x_0}\xspace}     
\newcommand{\CStorage}{\ensuremath{\mathcal{Z}}\xspace}      
\newcommand{\CStorageZero}{\ensuremath{\mathcal{Z}{+}}\xspace}
\newcommand{\PStorage}{\ensuremath{\mathcal{P}}\xspace}        
\newcommand{\TOP}[1]{\mathsf{top}_{#1}}     
\newcommand{\push}[1]{\mathsf{push}_{#1}}   
\newcommand{\invpush}[1]{\mathsf{push}^{-1}_{#1}}   
\newcommand{\stay}[1]{\mathsf{stay}_{#1}}   
\newcommand{\pop}[0]{\mathsf{pop}}
\newcommand{\Temptystack}{\ensuremath{\mathit{empty?}}\xspace} 
\newcommand{\N}{\mathbb{N}}
\newcommand{\Aa}{\mathcal{A}}
\newcommand{\Bb}{\mathcal{B}}
\newcommand{\pre}{\mathsf{pre}}
\newcommand{\Ss}{\mathcal{S}}
\newcommand{\succStar}{\mathsf{post}^*}
\newcommand{\preStar}{\mathsf{pre}^*}
\newcommand{\Tt}{\mathcal{T}}
\newcommand{\Mm}{\mathcal{M}}
\newcommand{\id}{\mathsf{id}}
\newcommand{\Reg}{\mathsf{Reg}}
\newcommand{\Encode}{\mathsf{E}}
\newcommand{\set}[2]{%
\ensuremath{\left\{ #1 \mid #2 \right\}}}
\newcommand{\ret}[1]{\mathsf{ret}_{#1}}
\newcommand{\loops}[1]{\mathsf{lp}_{#1}}
\newcommand{\RegularReachProb}{regular backwards reachability problem\xspace}
\newcommand{\ForwardRegularReachProb}{regular forward reachability
  problem\xspace}
\newcommand{\ReachProb}{control state reachability problem\xspace}
\newcommand{\ReachHOCSO}{\texttt{ReachHOCA-}\xspace}
\newcommand{\ReachPDA}{\texttt{ReachPDA}\xspace}
\newcommand{\GeneratePS}{\texttt{GeneratePDA}\xspace}
\newcommand{\overbar}[1]{\mkern 1.5mu\overline{\mkern-1.5mu#1\mkern-1.5mu}\mkern 1.5mu}
\newcommand{\VAL}{\mathsf{VAL}}
\newenvironment{myitemize}{%
\begin{list}{\labelitemi}{%
\setlength{\topsep}{2pt}%
\setlength{\partopsep}{0pt}
\setlength{\itemsep}{0pt}
\setlength{\itemindent}{0ex}
\setlength{\listparindent}{0ex}
\setlength{\leftmargin}{3ex}%
\setlength{\labelwidth}{2ex}
}}
{\end{list}}
\newcommand{\reflong}[1]{%
#1
}
\begin{document}

\title{Reachability in Higher-Order-Counters
\thanks{The second author is supported by the DFG
    research project GELO.
    We both thank
    M.~Boja{\'n}czyk,
    Ch.~Broadbent, and M.~Lohrey
    for helpful discussions and comments.
}}

\author{Alexander Heu\ss ner \inst{1}\and
  Alexander Kartzow \inst{2}}

\institute{ Otto-Friedrich-Universit\"at Bamberg, Germany
\and Universit\"at Leipzig, Germany}

\maketitle

\begin{abstract}
Higher-order counter automata (\HOCS) can be either seen as a
restriction of higher-order pushdown automata (\HOPS) to a unary
stack alphabet, or as an extension of counter automata to higher
levels. We distinguish two principal kinds
of \HOCS: those that can test whether the topmost counter value is
zero and those which cannot.

We show that control-state reachability for level $k$ \HOCS with
$0$-test is complete for \mbox{$(k-2)$}-fold exponential space;
leaving out the $0$-test leads to completeness for
\mbox{$(k-2)$}-fold exponential time.
Restricting \HOCS (without $0$-test) to level $2$, we prove that
global (forward or backward) reachability analysis is
$\PTIME$-complete.
This  enhances the known result for pushdown systems which are
subsumed by level $2$ \HOCS without $0$-test.

We transfer our results to the formal language setting.
Assuming  that
$\PTIME \subsetneq \PSPACE \subsetneq \mathbf{EXPTIME}$,
we apply proof ideas of Engelfriet and conclude that the
hierarchies of languages of \HOPS and of
\HOCS form strictly interleaving hierarchies.
Interestingly, Engelfriet's constructions also allow to conclude immediately
that the hierarchy of collapsible pushdown languages is strict
level-by-level due to the existing complexity results for reachability
on collapsible pushdown graphs. This answers an open question
independently asked  by Parys and by Kobayashi.
\end{abstract}

\section{From Higher-Order Pushdowns to Counters and Back}
Higher-order pushdown automata (\HOPS) --- also known as iterated pushdown
automata ---
were first introduced by Maslov in \cite{Maslov74} and \cite{Maslov76} as an
extension of classical
pushdown automata where the pushdown storage is replaced by a nested pushdown
of pushdowns of\ \dots\ of pushdowns. After being originally studied
as acceptors of languages,
these automata have nowadays obtained renewed interest as
computational model
due to their connection to safe higher-order recursion schemes.
Recent results focus on algorithmic questions
concerning the underlying configuration graphs, e.g., Carayol
and W\"ohrle \cite{cawo03} showed decidability of the monadic
second-order theories of higher-order pushdown graphs due to the
pushdown
graph's connection to the Caucal-hierarchy~\cite{Caucal02}, and Hague and Ong determined the
precise complexity of the global backwards
reachability problem for \HOPS: for level $k$ it is complete for
$\DTIME(\bigcup_{d\in\N}\exp_{k-1}(n^d))$ \cite{HagueOng08}
.\footnote{We define
  $\exp_{0}(n):=n$ and $\exp_{k+1}(n):=\exp(\exp_{k}(n))$ for any natural number $k$.}

In the setting of classical pushdown automata it is well known that restricting the stack alphabet to one single
symbol, i.e., reducing the pushdown storage to a counter,
often makes solving algorithmic problems easier.
For instance, control
state reachability
for pushdown automata is $\PTIME$-complete whereas it is
$\NSPACE(\log(n))$-complete for counter automata.
Then again, results from counter automata raise new insights to the pushdown case by providing algorithmic lower bounds and important subclasses of accepted languages separating different classes of complexity.
In this paper we lift this idea to the higher-order setting by
investigating reachability problems for higher-order counter
automata (\HOCS), i.e., \HOPS over
a one-element stack alphabet.
Analogously to counter automata,
we introduce level $k$ \HOCS in two variants: with or without
$0$-tests. Throughout this paper, we write \HOCSO[k] for the variant
without $0$-tests and \HOCSM[k] for the variant with $0$-tests.
Transferring our results' constructions back to \HOPS will then allow to answer a recent open question~\cite{Parys12,Kobayashi13}.

To our knowledge,
the only existing publication on \HOCS is by Slaats
\cite{Slaats12}.
She proved that
$\HOCSM[(k+1)]$ can simulate level $k$ pushdown
automata (abbreviated $\HOPS[k]$).
In fact, even $\HOCSO[(k+1)]$ simulate $\HOPS[k]$.
Slaats  conjectured that
$L(\HOCSM[k]) \subsetneq L(\HOPS[k])$ where $L(X)$ denotes the
languages accepted by automata of type $X$.
We can confirm this conjecture by combining
the proof ideas of Engelfriet
\cite{Engelfriet91}  with our main result on control-state
reachability for \HOCS in
Theorems \ref{thm:HOCSO-Reachability} and \ref{thm:HOCSM-Reachability}:
control state reachability on $\HOCSM[k]$ is complete for
$\DSPACE(\bigcup_{d\in\N} \exp_{k-2}(n^d))$ and control state reachability on
$\HOCSO[k]$ is complete for
$\DTIME(\bigcup_{d\in\N}\exp_{k-2}(n^d))$. These results
are obtained by adapting a proof strategy relying on reductions to
bounded space storage automata originally stated for \HOPS by
Engelfriet \cite{Engelfriet91}. His main tool are auxiliary
\SPACE{b(n)} $P^k$ automata  where  $P^k$ denotes the storage type of
a $k$-fold nested pushdown (see Section \ref{Chapter_Preliminaries}
for a precise definition). Such a (two-way) automaton has an
additional storage of type $P^k$, and a Turing machine worktape with
space $b(n)$. His main technical result shows  a trade off between
the space bound $b$ and the number of iterated pushdowns $k$. Roughly
speaking, exponentially more space allows to reduce the number of
nestings of pushdowns by one. Similarly, at the cost of another level
of pushdown, one can trade alternation against nondeterminism.
Here, we also restate reachability on $\HOCSM[k]$ as a membership
problem on alternating auxiliary $\mathsf{SPACE}(\exp_{k-3}(n))$
\CStorageZero automata (where \CStorageZero is the new storage type of a
counter with $0$-test).
For our $\DSPACE(\bigcup_{d\in\N}\exp_{k-2}(n^d))$-hardness proof
we provide a reduction of
$\DSPACE(\bigcup_{d\in\N}\exp(\exp_{k-3}(n^d)))$ to alternating auxiliary
$\SPACE{\exp_{k-3}(n)}$ \CStorageZero automata that is
inspired by Jancar  and Sawa's
$\PSPACE$-completeness proof for the non-emptiness of alternating
automata \cite{JancarS07}.  For containment we adapt the
proof of Engelfriet \cite{Engelfriet91} and show that membership for
alternating auxiliary $\SPACE{\exp_{k-3}(n)}$ \CStorageZero automata
can be reduced to alternating reachability on counter automata of
size $\exp_{k-2}(n)$, where $n$ is the size of the original input,
which is known to be in $\DSPACE(\bigcup_{d\in\N}\exp_{k-2}(n^d))$
(cf.~\cite{Goller08}).

For the case of  $\HOCSO[k]$ the hardness follows directly from the
hardness of reachability for level $(k-1)$ pushdown automata and
the fact that the latter can be simulated by $\HOCSO[k]$.
For containment in $\DTIME(\bigcup_{d\in\N}\exp_{k-2}(n^d))$ the
mentioned machinery of
Engelfriet reduces the problem to the case $k=2$.

The proof
that control-state reachability on $\HOCSO[2]$ is in $\PTIME$ is implied by
Theorem\,\ref{thm:RegularReach-2-HOCS} which proves a stronger result: both the
global regular forward and backward reachability problems for $\HOCSO[2]$ are
$\PTIME$-complete. The backward reachability problem asks, given a
regular set $C$ of
configurations, for a (regular) description of all configurations
that allow to reach one in $C$.
This set is typically
denoted as $\mathsf{pre}^*(C)$. Note that there is no canonical way
of defining
a regular set of configurations of $\HOCSO[2]$. We are aware of
at least three possible notions: regularity via $2$-store
automata~\cite{BouajjaniM04},
via sequences of pushdown-operations~\cite{Carayol05}, and via
encoding in regular sets of trees. We stick to the latter,
and use the encoding of configurations as binary
trees introduced in \cite{Kartzow13}: We call a set $C$ of configurations
regular if the set of encodings of configurations
$\set{\Encode(c)}{c\in C}$ is a regular set of trees (where $\Encode$
denotes the encoding function from \cite{Kartzow13}).
Note that the other two notions of regularity are both strictly weaker
(with respect to expressive power)
than the notion of regularity we use here.
Nevertheless, our result does not carry over to these other
notions of regularity as  they
admit more succinct representations of certain sets of configurations.
See  \reflong{Appendix \ref{app:NotionsOfRegularity}} for details.

Besides computing $\pre^*(C)$ in polynomial time our algorithm also
allows to compute the reachable configurations $\succStar(C)$ in polynomial
time. Thus,
 $\HOCSO[2]$ subsumes the well-known
class of pushdown systems~\cite{bouajjani-a-1997-135-a} while still
possessing the same good complexity with
respect to reachability problems.

\section{Formal Model of Higher-Order Counters}
\label{Chapter_Preliminaries}
\subsection{Storage Types and Automata}
An elegant way for defining \HOCS and \HOPS is the use of storage
types and operators on these (following \cite{Engelfriet91}).
For simplicity, we restrict ourselves to what Engelfriet calls
\emph{finitely encoded} storage types.

\begin{definition}
  For $X$ some set, we call a function \mbox{$t:X\to\{true,false\}$} an
  $X$\emph{-test} and a partial function \mbox{$f:X\to X$} an
  $X$\emph{-operation}. \\
  A \emph{storage type} is a tuple
  $\Storage=(\SConfig,T,F,\SConfigInit)$ where
  $\SConfig$ is the set of \Storage-configurations,
  $\SConfigInit\in \SConfig$ the  initial \Storage-configuration,
  $T$ a finite set of $\SConfig$-tests and $F$ a finite set of
  $\SConfig$-operations
  containing  the identity on $\SConfig$, i.e., $\id_{\SConfig}\in F$.
\end{definition}

Let us fix some finite alphabet $\Sigma$ with a distinguished symbol
$\bot\in\Sigma$.
Let $\PStorage_\Sigma = (\SConfig,T,F,\SConfigInit)$ be the \emph{pushdown
  storage type}
where
$\SConfig=\Sigma^+$,
$\SConfigInit= \bot$,
$T=\set{\TOP{\sigma}}{\sigma\in\Sigma}$ with $\TOP{\sigma}(w)=true$ if
$w\in\Sigma^*\sigma$,
and
$F=\set{\push{\sigma}
}{\sigma\in\Sigma}\cup\{\pop,\id\}$
with
$\id=\id_\SConfig$,
$\push{\sigma}(w) = w\sigma$ for all $w\in\SConfig$, and
$\pop(w\sigma)= w$ for all $w\in\Sigma^+$ and $\sigma\in\Sigma$
and $\pop(\sigma)$ undefined for all $\sigma\in\Sigma$.
Hence, $\PStorage_\Sigma$ represents a classical pushdown stack over the
alphabet $\Sigma$. We write $\PStorage$ for $\PStorage_{\{\bot,0,1\}}$.

We define the storage type \emph{counter without $0$-test}
\mbox{$\CStorage = \PStorage_{\{\bot\}}$}, which is  the pushdown
storage over a unary pushdown alphabet.
We define the storage type \emph{counter with $0$-test}
$\CStorageZero$ exactly like $\CStorage$ but  we
add the test $\Temptystack$ to the set of tests where
$\Temptystack(\Sconfig)=true$ if $\Sconfig=\bot$ (the plus in
$\CStorageZero$ stands for
``with $0$-test''). In other words,
$\Temptystack$ returns false iff the operation $\pop$ is
applicable.
\begin{definition}
  For a storage type $\Storage=(\SConfig,T,F,\SConfigInit)$ we
  define an \emph{$\Storage$ automaton} as a tuple $\Aa=(Q, q_0, q_f, \Delta)$
  where as usual $Q$ is a finite set of states with initial state $q_0$
  and final state $q_f$ and $\Delta$ is the transition
  relation. The difference to a usual automaton is the definition of
  $\Delta$ by $\Delta= Q\times \{ true,false\}^T \times Q \times F$.
\end{definition}
For $q\in Q$ and $\Sconfig\in\SConfig$, a transition $\delta=(q, R, p, f)$ is
applicable to the \emph{configuration} $(q,\Sconfig)$
if $f(\Sconfig)$ is defined
and if for each test
$t\in T$ we have $R(t) = t(\Sconfig)$, i.e., the result of the
storage-tests on the storage configuration $x$ agree with the test
results required by the transition $\delta$. If $\delta$ is applicable,
application of $\delta$ leads to the configuration
$(p,f(\Sconfig))$. The notions of a run, the accepted language, etc.~are
now all defined as expected.

\paragraph{The Pushdown Operator}
We also consider $\PStorage_\Sigma$ as an
operator on other storage types as follows.
Given a storage type
$\Storage=(\SConfig,T,F,\SConfigInit)$
let the
storage type \emph{pushdown of \Storage}
be
$\PStorage_\Sigma(\Storage)=(\SConfig',T',F',\SConfigInit')$ where
$\SConfig'=(\Sigma\times \SConfig)^+$,
$\SConfigInit'= (\bot,\SConfigInit)$,
$T'=\set{\TOP{\sigma}}{\sigma\in\Sigma} \cup
    \set{test(t)}{t\in T}$,
$F'=\set{\push{\gamma,f}}{\gamma\in\Sigma, f\in F}\cup
    \set{\stay{f}}{f\in F} \cup
    \{pop\}$, and where
for all
$\Sconfig'=\beta(\sigma,\Sconfig)$, $\beta\in(\Sigma\times \SConfig)^*$,
$\sigma\in\Sigma$, $\Sconfig\in \SConfig$ it holds that
\begin{myitemize}
\item $\TOP{\tau}(\Sconfig') = (\tau=\sigma)$,
\item $test(t)(\Sconfig')=t(\Sconfig)$,
\item $\push{\tau,f}(\Sconfig')= \beta(\sigma,\Sconfig)(\tau,f(\Sconfig))$\\
  if $f$ is defined on $\Sconfig$ (and undefined otherwise),
\item $\stay{f}(\Sconfig')= \beta(\sigma,f(\Sconfig))$\\
  if $f$ is defined on $x$ (and undefined otherwise), and
\item $\pop(\Sconfig')=\beta$ if $\beta$ is nonempty (and undefined
  otherwise).
\end{myitemize}
\noindent Note that $\stay{\id_\SConfig}=\id_{\SConfig'}$ whence $F'$ contains
the identity.
As for storages, we define the
operator $\PStorage$ to be the operator $\PStorage_{\{\bot,0,1\}}$.

\subsection{\HOPS, \HOCS, and their Reachability Problems}
We can define the iterative application of the operator $\PStorage$ on
some storage $\Storage$ as
follows:
let $\PStorage^0(\Storage)=\Storage$ and
$\PStorage^{k+1}(\Storage)={\PStorage}(\PStorage^k(\Storage))$.
A \emph{level $k$ higher-order pushdown automaton}  is a
$\PStorage^{k-1}(\PStorage)$ automaton.
We abbreviate the class of all these automata with \HOPS[k].
A \emph{level $k$ higher-order counter automaton with zero-test}
is a $\PStorage^{k-1}(\CStorageZero)$ automaton and
\HOCSM[k] denotes the corresponding class.\footnote{A priori
 our definition of
$\HOCSM[k]$ results in a stronger automaton model than that
used by Slaats. In fact, both models are equivalent (cf.~\reflong{Appendix
\ref{app:Storage_Equivs}}).}.
Similarly, \HOCSO[k] denotes the class of
\emph{level $k$ higher-order counter
automata without zero-test} which is the class of
$\PStorage^{k-1}(\CStorage)$ automata. Obviously,
for any level $k$ it holds that
$L(\HOCSO[k])\subseteq L(\HOCSM[k])\subseteq{L(\HOPS[k])}$
where $L(X)$ denotes the
languages accepted by automata of type $X$.

We next define the
reachability problems which
we study in this paper.
\begin{definition}
  Given an \Storage automaton and one of its control
  states $q\in Q$, then the \emph{\ReachProb} asks whether there
  is a configuration $(q,\Sconfig)$ that  is reachable from
  $(q_0,\SConfigInit)$ where $\Sconfig\in \SConfig$ is an
  arbitrary \Storage-configuration.
\end{definition}

Assuming a notion of regularity for sets of \Storage configurations
(and hence for sets of configurations of \Storage automata),
 we can also define a global variant of the \ReachProb.

\begin{definition}
  Given an \Storage automaton $\Aa$ and a regular
  set of configurations $C$, the \emph{\RegularReachProb} demands a
  description of the set of configurations from which there is a path
  to some configuration $c\in C$.
\end{definition}

Analogously, the \ForwardRegularReachProb asks for a description of
the set of configurations reachable from a given regular set $C$.
In the following section, we  consider the regular backwards (and
forwards) reachability problem for the
class of $\HOCSO[2]$ only.

\section{Regular Reachability for \HOCSO[2]}
\label{sec:RegularReachability}

The goal of this section is to prove the following theorem extending a known result on regular reachability on
pushdown systems to \HOCSO[2]:

\begin{theorem}
\label{thm:RegularReach-2-HOCS}
  Reg. backwards/forwards reachability on \HOCSO[2] is
  $\PTIME$-complete.
\end{theorem}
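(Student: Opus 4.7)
The $\PTIME$-hardness follows from the known $\PTIME$-hardness of regular backward reachability on pushdown systems, combined with the fact that pushdown systems embed faithfully into $\HOCSO[2]$: a pushdown rule that rewrites a single top symbol corresponds to a $\stay{\id}$ step with a state change, a push is realised by $\push{\tau,\id}$, and a pop by $\pop$, in every case leaving the counter untouched. A regular set of pushdown configurations with a fixed counter value translates, under the tree encoding $\Encode$ of \cite{Kartzow13}, into a regular tree language, so the standard hardness reduction carries over verbatim.

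For containment in $\PTIME$, the plan is to generalise the saturation algorithm of Bouajjani, Esparza and Maler from word automata on pushdown configurations to tree automata on $\Encode$-images of $\HOCSO[2]$ configurations. Given a tree automaton $\Aa$ with $L(\Aa)=\Encode(C)$, I would build an automaton $\Aa^*$ with $L(\Aa^*)=\Encode(\preStar(C))$ by iteratively adding new transitions to $\Aa$ that realise, at the tree level, the inverse of each rule of the given $\HOCSO[2]$. Crucially, no new states are introduced; this is the invariant that will eventually give the polynomial bound.

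Each kind of operation contributes a controlled family of new transitions. A $\stay{f}$ rule with $f\in\{\id,\push{\bot},\pop\}$ modifies only the top counter and hence touches only a bounded neighbourhood of one distinguished leaf of the encoded tree, so the added transitions have a shape fixed by $\Aa$ up to a constant. A $\push{\tau,f}$ rule adds a fresh top frame whose counter is obtained from the current top counter by the unary operation $f$; in tree terms this means attaching a new spine node carrying $\tau$ together with a chain representing the $\pm 1$-shifted current counter. Since the counter alphabet is unary, this shift is captured by a saturation rule that reads and re-emits suitable automaton states at the split position. The $\pop$ rule is handled symmetrically, and the forward case $\succStar(C)$ is obtained by the dual saturation with the roles of $\push$ and $\pop$ exchanged.

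The main obstacle is verifying that the saturation terminates in polynomial time and computes exactly the intended set. Termination follows from the no-new-states invariant, since only polynomially many distinct transitions over a fixed state set can ever be generated; the bound on the number of saturation rounds must be made explicit, which is the most delicate combinatorial step. Correctness has the usual two directions: soundness by induction on the construction (every added transition witnesses membership in $\preStar$), and completeness by induction on the length of a witnessing $\HOCSO[2]$ run. Together with the matching hardness reduction, this yields $\PTIME$-completeness of both directions of regular reachability on $\HOCSO[2]$.
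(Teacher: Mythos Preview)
Your hardness argument is fine and matches the paper's. The containment argument, however, has a genuine gap.

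The crux of your plan is the ``no new states'' invariant for the tree-automaton saturation. You assert it but do not justify it, and the step where it is most doubtful is precisely the one you gloss over: the inverse of $\push{\tau,f}$. A level-$2$ push \emph{copies} the topmost counter. In the tree encoding this means the new rightmost leaf sits at (essentially) the same depth as the previous one, and the two branches share a long common spine. To saturate backwards through such a step, the automaton processing $\Encode(c)$ must be able to decide whether $\Encode(\push{\tau,f}(c))$ would have been accepted; that requires relating the automaton's run along the existing rightmost branch to a hypothetical run along its copy. Nothing in a BEM-style rule of the form ``add a transition'' captures this without tracking pairs (or tuples) of states along the spine, which breaks the no-new-states invariant. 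Your sentence ``since the counter alphabet is unary, this shift is captured by a saturation rule that reads and re-emits suitable automaton states at the split position'' is exactly where the argument needs content and has none. Note also that your description nowhere uses the absence of the $0$-test; if the saturation worked as you describe, it would give a $\PTIME$ algorithm for $\HOPS[2]$ as well, contradicting the known $\mathbf{EXPTIME}$ lower bound.

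The paper does \emph{not} attempt a direct saturation on tree automata. Instead it exploits two structural facts that are specific to $\CStorage$ (no $0$-test): monotonicity of $\PStorage(\CStorage)$, and the resulting stabilisation of \emph{returns} and \emph{loops}. One shows that $\ret{k}((\sigma,m))$ and $\loops{k}((\sigma,m))$ become constant once $m$ and $k$ exceed bounds polynomial in $\lvert\Sigma\rvert\cdot\lvert Q\rvert^2$ (Lemmas~\ref{lem:BoundCounterForKReturns} and~\ref{lem:ZerolessQtoFourReturnsareEnough}). From this one computes, in polynomial time, a small deterministic word automaton that on input $\bot^m$ outputs $(\ret{\infty}((\sigma,m)),\loops{\infty}((\sigma,m)))_{\sigma\in\Sigma}$. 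Plugging this word automaton into the tree-automatic reachability construction of~\cite{Kartzow13} (Corollary~\ref{thm:KartzohPHDMain}) then yields the tree automaton $\Tt_\Aa$ for the full reachability relation in polynomial time; a product with the input automaton for $C$ gives $\preStar(C)$ (and dually $\succStar(C)$). The polynomial bound thus comes from the return/loop stabilisation, not from a transition-counting argument on a saturated tree automaton. If you want to pursue a saturation approach, you will need to identify and prove an analogue of this stabilisation inside your construction; without it the plan does not go through.
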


\subsection{Returns, Loops, and Control State Reachability}
\label{sec:Returns}

Proving Theorem\,\ref{thm:RegularReach-2-HOCS}
is based on  the ``returns-\&-loops''
construction for \HOPS[2] of~\cite{Kartzow13}. As a first step,
we  consider the simpler case of control-state reachability:

\begin{proposition}
   \label{prop:ReachHOCS-0}
  Control state reachability for \HOCSO[2]
  is $\PTIME$-complete.
\end{proposition}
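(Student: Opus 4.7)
The plan is to establish matching lower and upper bounds.

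For the lower bound I would observe that by~\cite{Slaats12} every ordinary pushdown automaton can be simulated by a $\HOCSO[2]$, so $\PTIME$-hardness of control-state reachability for pushdown automata transfers verbatim.

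For the upper bound I would adapt the returns-\&-loops decomposition of~\cite{Kartzow13}. Call a subrun a \emph{return} from $p$ to $q$ if it leads from a configuration $(p,\beta(\bot,c))$ to $(q,\beta)$ without ever dropping the level-$2$ stack height below $|\beta|$, and a \emph{loop} from $p$ to $q$ if it leads from $(p,\beta(\bot,c))$ to $(q,\beta(\bot,c'))$ without dropping below $|\beta|+1$. The decisive observation, specific to $\HOCSO$ (no $0$-test), is that the enabledness of every atomic $\CStorage$-operation is monotone in the value of the top counter. Hence the set of starting counter values admitting a return $p\to q$ is upward-closed in $\N$ and therefore determined by a single threshold $N(p,q)\in\N\cup\{\infty\}$.

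I would then compute the thresholds $N(p,q)$ by a least-fixpoint iteration. Start with $N(p,q):=\infty$; in each round use the current estimates as guards on \emph{push-pop macro transitions} $p\to q$ that demand the current top counter to be at least $N(p,q)$ before firing. The updated $N(p,q)$ is the minimum initial counter that admits a loop, built from stay-transitions and such macros, reaching some $p'$ from which a single $\pop$-transition leads to $q$; this inner minimization reduces to a shortest-path-style question on a polynomial-size one-counter net with lower-bound guards, which is in $\PTIME$. Once the thresholds are in hand, control-state reachability for the given $\HOCSO[2]$ itself reduces to reachability in a polynomial-size one-counter net: each stay-transition becomes a $\pm 1$ or $0$ edge, and each macro of threshold $T$ is implemented by a gadget forcing $T$ decrements followed by $T$ increments. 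One-counter-net reachability is in $\PTIME$ by standard results (cf.~\cite{Goller08}), which closes the upper bound.

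The main obstacle I anticipate is to prove that the stabilized thresholds $N(p,q)$ are polynomially bounded in $|Q|$, so that the fixpoint converges in polynomially many rounds and the auxiliary one-counter net has polynomial size. The intuition is that a minimum-counter witness for a return can always be chosen so that, modulo the macro abstraction, its top-counter trajectory traces a simple path through the control-state graph, because inserting a cycle can only preserve or raise the required initial counter. All of this relies crucially on the monotonicity granted by the absence of the $0$-test, which is exactly why the same strategy will not transfer to $\HOCSM[2]$.
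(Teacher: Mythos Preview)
Your proposal follows essentially the same strategy as the paper: both use the returns-and-loops decomposition of~\cite{Kartzow13}, both exploit the monotonicity of $\PStorage(\CStorage)$ in the top counter value (which fails for $\CStorageZero$), both reduce the problem to computing, for each pair of states, a threshold value above which a return exists, and both obtain these thresholds by an inductive fixpoint computation that repeatedly solves reachability on a simpler system. The lower bound argument is identical.

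There are two noteworthy differences in execution. First, the paper reduces each round of the fixpoint to reachability on an ordinary level-$1$ \emph{pushdown automaton} rather than a one-counter net: it enlarges the pushdown alphabet to $\{\bot_0,\dots,\bot_{h_0},\bot_\infty\}$ so that the current counter height (up to the relevant bound) is visible as the top symbol, which makes threshold guards syntactic. Your one-counter-net reduction with decrement/increment gadgets is also correct and yields the same complexity, but note that the storage $\PStorage(\CStorage)$ carries a non-trivial level-$2$ alphabet $\Sigma=\{\bot,0,1\}$, so your thresholds must be indexed $N(\sigma,p,q)$ rather than $N(p,q)$; the paper tracks this throughout.

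Second, and more importantly, the obstacle you explicitly flag---a polynomial bound on the thresholds and on the number of fixpoint rounds---is precisely what the paper establishes via two pigeonhole arguments (Lemmas~\ref{lem:BoundCounterForKReturns} and~\ref{lem:ZerolessQtoFourReturnsareEnough}): the thresholds stabilise at counter value $\lvert\Sigma\rvert\lvert Q\rvert^2$, and the height increase needed for any return is at most $\lvert\Sigma\rvert^2\lvert Q\rvert^4$, bounding the number of rounds. Your intuition about simple paths is in the right direction but would need to be turned into these concrete bounds to close the argument.
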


In \cite{Kartzow13} it has been shown that certain runs,
so-called
\emph{loops} and \emph{returns}, are
the building blocks of any run of a \HOPS[2] in the sense that
solving a reachability problem amounts to deciding whether
certain loops and returns exist. Here, we analyse these notions more
precisely in the context of
\HOCSO[2] in order to derive
a polynomial control state reachability algorithm.
Using this algorithm we can then also solve the \RegularReachProb
efficiently.

{For this section, we fix a ${\PStorage}(\CStorage)$-automaton
  $\Aa=(Q, q_0, F, \Delta)$}.
Recall that the ${\PStorage}(\CStorage)$-configurations of $\Aa$ are
elements of $(\Sigma\times \{\bot\}^+)^+$. We identify $\bot^{m+1}$
with the natural number $m$ and the set of storage configurations with
$(\Sigma\times \N)^+$.

\begin{definition}
  Let $s\in (\Sigma\times\N)^+$, $t \in \Sigma\times \N$
  and $q,q'\in Q$ be states of $\Aa$.
  A \emph{return} of $\Aa$ from $(q,st)$ to $(q',s)$ is  a run $r$ from
  $(q,st)$ to $(q',s)$ such that
  except for the final configuration no
  configuration of $r$ is in $Q\times \{s\}$.

  Let $s\in (\Sigma\times\N)^*$, $t \in \Sigma\times \N$.
  A \emph{loop} of $\Aa$ from $(q,st)$ to $(q',st)$ is  a run $r$ from
  $(q,st)$ to $(q',st)$ such that no configuration of $r$ is in
  $Q\times\{s\}$.
\end{definition}

One of the underlying reasons why control state reachability for
pushdown systems can be efficiently solved is the fact that it is
always possible to reach a certain state without increasing the
pushdown by more than polynomially many elements.
In the following, we prove an analogue of this fact for
${\PStorage}(\CStorage)$.  For a given configuration, if there is a
return or loop starting in this configuration, then this return or
loop can be realised without increasing the (level 2) pushdown
more than polynomially.
This is due to the
monotonic behaviour of $\CStorage$: given a $\CStorage$ configuration
$\Sconfig$, if we can apply a sequence $\varphi$ of
transitions to $\Sconfig$ then we can apply $\varphi$
to all bigger configurations, i.e., to any configuration of the form
$\push{\bot}^n(\Sconfig)$. Note that this depends on the fact that
$\CStorage$ contains only trivial tests (the test $\TOP{\bot}$
always returns true). In contrast, for $\CStorageZero$, if $\varphi$
applies a couple of $\pop$ operations and then tests for zero and
performs a transition, then this is not
applicable to a bigger counter because the $0$-test would now fail.

For a ${\PStorage}(\CStorage)$
configuration $\Sconfig=(\sigma_1,
n_1)(\sigma_2,n_2)\dots(\sigma_m,n_m)$, let
$\lvert \Sconfig\rvert = m$ be its \emph{height}.
Let $r$ be some run starting in $(q, \Sconfig)$ for some $q\in
Q$. The run $r$
\emph{increases the height by at most $k$} if
$\lvert \Sconfig'\rvert \leq \lvert \Sconfig\rvert + k$
for all configurations  $(q',\Sconfig')$ of $r$.

\begin{definition}
  Let $s\in (\{\bot\}\times \N)^+$.
  We write $\ret{k}(s)$ and $\loops{k}(s)$, resp., for the
  set
  of pairs of initial and final
  control states of returns or loops starting in  $s$ and
  increasing the height by at most $k$.
  We write $\ret{\infty}(s)$ and $\loops{\infty}(s)$,resp., for
  the union of all $\ret{k}(sw)$ or  $\loops{k}(s)$.
\end{definition}

The existence of a return (or loop) starting in
$sw$ \mbox{(or $s'w$)} (with $s\in (\{\bot\}\times \N)^+, s'\in
(\{\bot\}\times \N)^*$ and $w\in \{\bot\}\times \N$)
does
not depend on the concrete choice of $s$ or $s'$. Thus, we also write
$\ret{k}(w)$ for $\ret{k}(sw)$ and $\loops{k}(w)$ for
$\loops{k}(s'w)$.

By induction on the length of a run, we first prove that
${\PStorage}(\CStorage)$ is \emph{monotone} in the following sense:
let $s\in(\Sigma\times\N)^*, t=(\sigma,n)\in\Sigma\times\N$,
$q,q'\in Q$ and $r$ a run starting in $(q, st)$ and ending in state
$q'$. If the topmost counter of each configuration of $r$ is at least
$m$, then for each $n'\geq n-m$ there is a run $r'$ starting in
$(q, s(\sigma,n'))$ and performing exactly the same transitions as $r$.
In particular, for all $k\in\N\cup\{\infty\}$, $\sigma\in\Sigma$ and
$m_1\leq m_2\in\N$,
  $\ret{k}((\sigma,m_1))\subseteq \ret{k}((\sigma,m_2))$ and
  $\loops{k}((\sigma,m_1))\subseteq \loops{k}((\sigma,m_2))$.

We next show that the sequence $(\ret{k}((\sigma,m)))_{m\in\N}$
stabilises at $m=\lvert \Sigma\rvert \lvert Q \rvert^2$. From this we
conclude that $\ret{\infty}=\ret{\lvert \Sigma\rvert^2 \lvert Q
  \rvert^4}$, i.e., in order to realise a return with arbitrary
fixed initial and final configuration, we do not have to
increase the height by more than
$\lvert \Sigma\rvert^2 \lvert Q \rvert^4$ (if there is such a return
at all).

\begin{lemma}\label{lem:BoundCounterForKReturns}
  For $k\in\N\cup\{\infty\}$, $\sigma\in\Sigma$,
  $m\geq\lvert \Sigma\rvert \lvert Q\rvert^2$, and
  $m'\geq  2 \cdot \lvert \Sigma\rvert \lvert Q\rvert^2$,
  we have
  $\ret{k}((\sigma,m))=\ret{k}((\sigma,\lvert \Sigma\rvert \lvert Q
  \rvert^2))$ and
  $\loops{\infty}((\sigma,m'))=\loops{\infty}((\sigma,2 \cdot \lvert
  \Sigma\rvert \lvert Q  \rvert^2))$.
\end{lemma}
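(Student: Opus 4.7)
The plan is to handle the $\supseteq$ inclusion in both equalities by the monotonicity property already established just before the lemma, and the $\subseteq$ inclusion by a pigeonhole-and-splice argument on the \emph{outer trace} of a given return or loop $r$, namely the subsequence of configurations of $r$ at stack height exactly $|s|+1$. In the outer trace the top element is always of the form $(\sigma,n_j)$, and two consecutive positions are connected either by a single stay operation (which changes the counter by $\pm 1$ or $0$) or by a \emph{push-block}: a push, a sub-run at heights $>|s|+1$, and the matching pop, which leaves the counter unchanged. Setting $n^* := \min_j n_j$, the shifting monotonicity lets us replay $r$ starting from counter $m-n^*$, so it suffices to bound the descent $m-n^*$.

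For the return case I consider the first-visit positions $L_{m-1} < L_{m-2} < \cdots < L_{n^*}$; for each $k<m$ the step $L_k-1 \to L_k$ is necessarily a stay-pop and so carries a pair $(q_{L_k-1}, q_{L_k}) \in Q^2$. Whenever the descent exceeds $|Q|^2$, two such pairs coincide at heights $k_1 > k_2$, and I splice by concatenating the prefix up to $L_{k_1}$ with the tail from $L_{k_2}$ onward, shifting every counter value of the tail up by $k_1-k_2$. Stays are unaffected by this absolute shift (stay-pops still have the required headroom), and push-blocks remain applicable by the monotonicity of the higher-level return relation $R_\tau(\cdot)$ in its counter argument. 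The spliced object is again a return of the same $(q,q')$ type but with strictly smaller descent; iterating drives the descent below $|Q|^2 \leq |\Sigma||Q|^2$, after which the shift produces a return starting at counter $\leq |\Sigma||Q|^2$.

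For loops the same outer-trace analysis applies, but naive descent-splicing destroys the closure constraint $n_N = n_0 = m$. The fix is a \emph{symmetric} splice: along with each first visit $L_k$ I also record the last visit $R_k$ to counter $k$, which must be followed by a stay-push. If two heights $k_1 > k_2$ yield identical pairs $(q_{L_{k_1}},q_{R_{k_1}}) = (q_{L_{k_2}},q_{R_{k_2}}) \in Q^2$, I stitch together the prefix $[0,L_{k_1}]$, the shifted middle piece $[L_{k_2},R_{k_2}]$ with all counters raised by $k_1-k_2$, and the suffix $[R_{k_1},N]$; states and counters line up at both junctions, the shifted middle is legal by monotonicity, and the initial and final configurations $(q,m)$ and $(q',m)$ of the loop are preserved. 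Iterating as before brings the descent comfortably below the declared bound $2|\Sigma||Q|^2$.

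The main obstacle is verifying that the shifted tail (or, for loops, the shifted middle) is indeed a legal run: stay operations transfer easily, but each push-block encodes a whole return at a higher height, and its survival under an upward counter shift rests precisely on the monotonicity of $R_\tau(\cdot)$. Every combinatorial step in the argument is therefore paired with a careful appeal to that monotonicity, which is the technical heart of the proof.
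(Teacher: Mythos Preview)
Your splicing argument (the up-shift) is sound: shifting the tail's outer-trace counters up by $k_1-k_2$ is legal because stays only need nonnegativity and each push-block's sub-return survives at the higher counter by the stated monotonicity of $\ret{k-1}((\tau,\cdot))$. So you can indeed drive the outer-trace descent below $\lvert Q\rvert^2$ while keeping the same initial and final states and the same height bound.

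The gap is the \emph{down-shift}. You assert that ``shifting monotonicity lets us replay $r$ starting from counter $m-n^*$'', where $n^*$ is the minimum counter along the outer trace. But monotonicity only lets you shift the starting counter \emph{up}; to shift down by $n^*$ you would need every topmost counter throughout $r$ (not just along the outer trace) to be at least $n^*$. Inside a push-block the topmost counter is the counter of the \emph{pushed} element, and the sub-return there may well take it all the way to $0$ even when the outer-trace counter sits at $n^*$ or above. So literally replaying the transitions fails. If instead ``replay'' means reconstructing the outer trace and replacing each push-block by some sub-return at the lower counter, you now need $\ret{k-1}((\tau,n))\subseteq\ret{k-1}((\tau,n-n^*))$, which is exactly the reverse of monotonicity and is what the lemma is trying to establish (one level down). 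An induction on $k$ does not rescue this directly either, because the inductive hypothesis only gives $\ret{k-1}((\tau,n))=\ret{k-1}((\tau,\lvert\Sigma\rvert\lvert Q\rvert^2))$ for $n\geq \lvert\Sigma\rvert\lvert Q\rvert^2$, while after your shift the push-block counters $n-n^*$ can be arbitrarily small.

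For comparison, the paper never down-shifts. It runs a pigeonhole on the \emph{cardinality} of $\bigsqcup_{\sigma}\ret{k}((\sigma,i))$ as $i$ grows: since this is monotone and bounded by $\lvert\Sigma\rvert\lvert Q\rvert^2$, there is some $m_k\leq\lvert\Sigma\rvert\lvert Q\rvert^2$ with $\ret{k}((\sigma,m_k))=\ret{k}((\sigma,m_k{+}1))$ for all $\sigma$. Then, given a putative new return at some $m>m_k{+}1$, the paper locates the last position where the topmost counter hits $0$, isolates the surrounding sub-return starting at counter $m_k{+}1$, and replaces it (using the one-step stabilisation plus an \emph{up}-shift) by a sub-return that never touches $0$; iterating removes all visits to $0$ and then ordinary monotonicity finishes. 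Every replacement here is an up-shift, which is why the argument goes through.
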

The proof uses the fact that we can find an $m'\leq \lvert\Sigma\rvert \lvert
Q\rvert^2$ with $\ret{k}((\sigma,m'))= \ret{k}(\sigma,m'+1)$ for all
$\sigma$ by the pigeonhole-principle. Using monotonicity of
$\PStorage(\CStorage)$ we conclude that
$\ret{k}(\sigma,m')=\ret{k}(\sigma,m)$ for all $m\geq m'$.
A similar application of the pigeonhole-principle shows that there is
a $k\leq \lvert \Sigma\rvert^2 \cdot \lvert Q\rvert^4$ such that
$\ret{k}((\sigma,i))=\ret{k+1}((\sigma,i))$ for all $\sigma$ and all
$i \leq \lvert \Sigma\rvert \lvert Q \rvert^2$ (or equivalently for
all $i\in\N$). By induction on $k'\geq k$ we show that
$\ret{k'}=\ret{k}$ because
any subreturn that
increases the height by $k+1$ can be replaced by a subreturn that only
increases the height by $k$. Thus, we obtain the following lemma.

\begin{lemma} \label{lem:ZerolessQtoFourReturnsareEnough}
  For all $i\in\N$ and $\sigma\in\Sigma$, we have
  $\ret{\infty}((\sigma,i))=\ret{\lvert \Sigma\rvert^2 \cdot \lvert
    Q\rvert^4}((\sigma,i))$ and
  $\loops{\infty}=\loops{\lvert \Sigma\rvert^2\lvert Q \rvert^4+1}$.
\end{lemma}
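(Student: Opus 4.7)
The plan is to establish the return part first by a pigeonhole stabilization argument combined with an induction on the height-increase parameter, and then to deduce the loop part as a direct consequence.

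For the pigeonhole step, I appeal to Lemma~\ref{lem:BoundCounterForKReturns} to restrict attention to pairs $(\sigma, i)$ with $i \leq \lvert\Sigma\rvert\lvert Q\rvert^2$; there are at most $\lvert\Sigma\rvert^2\lvert Q\rvert^2$ such pairs, each labelling a set $\ret{k}((\sigma,i)) \subseteq Q \times Q$ of size at most $\lvert Q\rvert^2$. Since a return of height increase $\leq k$ is also one of height increase $\leq k+1$, the family $\{\ret{k}((\sigma,i))\}_{\sigma,i}$ is pointwise non-decreasing in $k$, so its total cardinality is a non-decreasing function of $k$ bounded by $\lvert\Sigma\rvert^2\lvert Q\rvert^4$. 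Consequently there exists some $k_0 \leq \lvert\Sigma\rvert^2\lvert Q\rvert^4$ at which every $\ret{k_0}((\sigma,i))$ coincides with $\ret{k_0+1}((\sigma,i))$; by Lemma~\ref{lem:BoundCounterForKReturns} this equality extends to all $i \in \N$.

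I would then prove $\ret{k}((\sigma,i)) = \ret{k_0}((\sigma,i))$ for all $k \geq k_0$ by induction on $k$. The base case is trivial. For the step, given a return $r \in \ret{k+1}((\sigma, i))$ from $(q, s(\sigma, i))$ to $(q', s)$, I decompose $r$ into its stretches at height $\lvert s\rvert+1$ and its maximal sub-returns: a maximal sub-return starts whenever $r$ executes a $\push{\tau, f}$ raising the height to $\lvert s\rvert+2$ and ends at the matching pop. Each such sub-return has height increase at most $k$ above its base $\lvert s\rvert+2$ and, by the induction hypothesis applied to the corresponding top element, its control-state pair lies in $\ret{k_0}$. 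Replacing every maximal sub-return by one of height increase at most $k_0$ yields a valid return whose maximum height is at most $\lvert s\rvert+2+k_0$, hence of height increase at most $k_0+1$; the pigeonhole base case $\ret{k_0}=\ret{k_0+1}$ then delivers $(q,q') \in \ret{k_0}((\sigma,i))$.

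For the loop part, decompose any loop analogously into stay-operations at its base height and maximal sub-returns that start one step higher. Each such sub-return has height increase one less than that of the whole loop, and by the return statement just proved can be replaced by one of height increase at most $\lvert\Sigma\rvert^2\lvert Q\rvert^4$. The resulting loop has height increase at most $\lvert\Sigma\rvert^2\lvert Q\rvert^4 + 1$, yielding $\loops{\infty} = \loops{\lvert\Sigma\rvert^2\lvert Q\rvert^4+1}$.

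The main technical subtlety is the legality of the replacement in the inductive step: the substituted sub-return must be applicable at the exact configuration occurring in the original run. This is guaranteed because membership in $\ret{k_0}((\tau,j))$ is determined purely by the top element $(\tau,j)$, and because any sub-return by definition leaves the stack strictly below its initial top untouched; thus a replacement plugs in seamlessly without disturbing what comes before or after it in the outer run. Once this substitutability is in place, both parts of the lemma follow by the bookkeeping of heights sketched above.
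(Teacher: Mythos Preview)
Your proposal is correct and follows essentially the same approach as the paper: a pigeonhole argument (using Lemma~\ref{lem:BoundCounterForKReturns} to bound the relevant range of $i$) to find a stabilization point $k_0$, followed by an induction on $k$ that decomposes a return into base-level segments and maximal sub-returns and replaces the latter by shorter ones; the loop case is then an immediate corollary. The paper phrases the inductive part as a minimality-plus-contradiction argument rather than a direct induction, but the content is identical, and your explicit remark on why the substitution is legal is a welcome clarification that the paper leaves implicit.
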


\begin{figure}[t]
  \scriptsize
  \texttt{{\GeneratePS}($\Aa,A$\texttt{):}}\\
  {\bf Input:} \HOCSO[2] $\Aa=(Q,q_0,\Delta)$ over $\Sigma$,
 matrix $A=(a_{\sigma,p,q})_{(\sigma,p,q)\in \Sigma\times Q^2}$
     over $\N\cup \{\infty\}$\\
     {\bf Output:} \HOPS[1] $\Aa'$ simulating $\Aa$
  \begin{lstlisting}
$k_0$ := $\lvert \Sigma\rvert^2 \cdot \lvert Q\rvert^4$; $h_0$ :=$\lvert \Sigma \rvert \cdot \lvert Q^2\rvert$; $\Delta'$ := $\emptyset$
foreach $\delta\in\Delta$:
   if $\delta$ == $(q, (\sigma,\bot), \stay{\pop}, p)$:
      foreach $i$ in $\{0, \dots, h_0\}$: $\Delta'$:=$\Delta'\cup\{ ((q,\sigma),\bot_i, \pop, (p,\sigma)),((q,\sigma), ,\bot_\infty, \pop, (p,\sigma))\}$
   elseif $\delta$==$(q,(\sigma,\bot),\stay{\push{\bot}},p)$
      $\Delta'$:=$\Delta' \cup  \{ ((q,\sigma),\bot_{\infty},\push{\bot_\infty},  (p,\sigma))\} \cup  \{ ((q,\sigma),\bot_{h_0}, \push{\bot_\infty},  (p,\sigma))\}$
      foreach $i$ in $\{0, \dots, h_0-1\}$}: $\Delta':=\Delta' \cup  \{ ((q,\sigma),\bot_i, \push{\bot_{i+1}},  (p,\sigma))\}$
   elseif $\delta$==$(q, (\sigma,\bot), \push{\tau,\id}, p)$
      foreach $r$ in $Q$ such that $a_{\tau,p,r}\neq\infty$:
          foreach $i$ in $\{a_{\tau,p,r}, a_{\tau,p,r}+1, \dots, h_0\}\cup\{\infty\}$: $\Delta':=\Delta'\cup\{((q,\sigma), \bot_i, \id, (r,\sigma))$
$\Aa'$:=$(Q\times\Sigma,  (q_0,\bot), \Delta')$
return $\Aa'$
  \end{lstlisting}
  \caption{\HOCSO[2] to \HOPS[1] Reduction Algorithm}\label{algo:GeneratePS}
\end{figure}

We now can prove that control-state reachability on \HOCSO[2] is
$\PTIME$-complete.

\begin{proof}[of Proposition \ref{prop:ReachHOCS-0}]
  Since $\HOCSO[2]$  can trivially simulate pushdown automata,
  hardness follows from the analogous hardness result for pushdown automata.
  Containment in $\PTIME$  uses the
  following ideas:
  \begin{enumerate}
  \item We assume that
    the input $(\Aa, q)$ satisfies that
    $q$ is reachable in
    $\Aa$ iff $(q, (\bot,0))$ is reachable
    and that $\Aa$ only uses instructions of the forms
    $\pop{}$, $\push{\sigma,\id}$, and $\stay{f}$. Given any
    $\HOCSO[2]$ $\Aa'$ and a state $q$, it is straightforward to
    construct (in polynomial time) a $\HOCSO[2]$ $\Aa$ that satisfies
    this condition such that $q$ is reachable in $\Aa'$ iff
    it is reachable in $\Aa$.
  \item
    Recall that $\ret{\infty}(w)=\ret{k_0}(w)$ for
    $k_0=\lvert \Sigma\rvert^2 \cdot
    \lvert Q\rvert^4$ and for all $w\in \Sigma\times \N$.
    Set $h_0=\lvert \Sigma \rvert \cdot \lvert Q^2\rvert$.
    We want to compute a table $(a_{\sigma,p,q})_{{\sigma,p,q}\in
      \Sigma\times Q^2}$ with
    values in
    $\{\infty, 0, 1, 2, \dots, h_0\}$   such that
    $a_{\sigma,p,q}=\min\{i \mid  (p,q)\in\ret{k_0}((\sigma,i))\}$
    (where we set $\min\{\emptyset\}=\infty$).
    Due to Lemmas \ref{lem:BoundCounterForKReturns} and
    \ref{lem:ZerolessQtoFourReturnsareEnough} such a table represents
    $\ret{\infty}$  in the sense that
    $(p,q)\in\ret{\infty}((\sigma,i))$ iff $i\geq a_{\sigma,p,q}$.
  \item With the help of the table
    $(a_{\sigma,p,q})_{(\sigma,p,q)\in \Sigma\times Q^2}$ we
    compute in polynomial time a $\PStorage$ automaton
    $\Aa_{\infty}$ which executes
    the same level $1$ transitions as $\Aa$ and simulates
    loops of $\Aa$ in the following
    sense: if there is a loop of $\Aa$ starting in $(q,(\sigma,i))$
    performing first a $\push{\tau,\id}$ operation  and then
    performing a  return with final  state $p$, we allow $\Aa'$ to
    perform an $\id$-transition from $(q,(\sigma,i))$ to
    $(p,(\sigma,i))$.  This new system basically keeps track of the
    height of the
    pushdown up to $h_0$ by
    using a pushdown alphabet $\{\bot_0, \dots, \bot_{h_0}, \bot_\infty\}$
    where the topmost symbol of the pushdown is $\bot_i$ iff the
    height of the pushdown is $i$ (where $\infty$ stands for values
    above $h_0$). After this change of pushdown alphabet, the
    additional $\id$-transitions   are easily computable from
    the table $(a_{\sigma,p,q})_{(\sigma,p,q)\in \Sigma\times Q^2}$.
    The resulting system has size
    $O( h_0^2 \cdot
    (\lvert \Ss \rvert +1))$, i.e., is polynomial in the original
    system $\Aa$.
  \item Using
    \cite{bouajjani-a-1997-135-a},
    check for reachability of $q$ in
    the pushdown automaton  $\Aa_{\infty}$.
  \end{enumerate}
  In fact, for step $2$ we already use a variant of
  steps $3$ and $4$: we  compute
  $\ret{\infty}= \ret{\lvert \Sigma\rvert^2\lvert Q \rvert^4}$
  by induction starting with $\ret{0}$.
  If we remove all level $2$ operations from $\Aa$ and store the
  topmost level $2$ stack-symbol in the control state we obtain a
  pushdown automaton $\Bb$ such that $(q,q')\in\ret{0}(\sigma,k)$
  (w.r.t.~$\Aa$) iff there is a transition $(p,(\sigma,\bot),\pop,q')$
  of $\Aa$ and the control state $(p,\sigma)$ is reachable from
  $((p,\sigma),k)$ in $\Bb$. Thus, the results of polynomially many
  reachability queries for $\Bb$ determine the table for $\ret{0}$.
  Similarly, we can use the table of $\ret{i}$ to compute the table of
  $\ret{i+1}$ as follows. A return extending the height of the
  pushdown by $i+1$ decomposes into parts that do not increase the
  height at all and parts that perform a $\push{\tau,\id}$ followed by
  a return increasing the height by at most $i$. Using the table for
  $\ret{i}$ we can easily enrich $\Bb$ by $\id$-transitions that
  simulate such push operations followed by returns increasing the
  height by at most $i$. Again,  determining whether
  $(q,q')\in\ret{i+1}(\sigma,k)$ reduces to one reachability query
  on this enriched $\Bb$ for each pop-transition of $\Aa$.

  With these ideas in mind, it is straightforward to check that
  algorithm \ReachHOCSO
  in Figure\,\ref{algo:Reach-2HOCSO} (using  algorithm \GeneratePS
  of Figure\,\ref{algo:GeneratePS} as subroutine for step $3$)
  solves the reachability problem for
  \HOCSO[2] (of the form described in step $1$) in polynomial time.
  In this algorithm,
  \ReachPDA($\Aa', c,q$) refers to the classical
  polynomial time algorithm that
  determines  whether in the (level $1$) pushdown automaton $\Aa'$
  state $q$ is reachable when  starting in configuration $c$; a
  transition $(q,(\sigma,\tau),f,p)$ refers to a transition from state
  $q$ to state $p$ applying operation $f$ that is executable if
  the (level 2) test $\TOP{\sigma}$ and the (level 1) test
  $test(\TOP{\tau})$ both succeed.
  \qed
\end{proof}

\begin{figure}[t!]
  \scriptsize
  \texttt{{\ReachHOCSO}(}$\Aa,q_f$\texttt{):}\\
  {\bf Input:} \HOCSO[2] $\Aa=(Q,q_0,\Delta)$ over $\Sigma$, $q_f\in
  Q$\\
  {\bf Output:} whether $q_f$ is reachable in $\Aa$
  \begin{lstlisting}
$k_0$ := $\lvert \Sigma\rvert^2 \cdot \lvert Q\rvert^4$; $h_0$ :=$\lvert \Sigma \rvert \cdot \lvert Q^2\rvert$;
foreach $(\sigma,p,q)$ in $\Sigma\times Q^2$: $a_{\sigma,p,q}:=\infty$
for $k=1, 2, \dots, k_0$:
   $\Aa_k:=$ GeneratePDA($\Aa, (a_{\sigma,p,q})_{(\sigma,p,q)\in \Sigma\times Q^2}$)
   foreach $(r,(\tau,\bot),\pop,q)$ in $\Delta$ and $(\sigma,p)$ in $\Sigma\times Q$:
      for $i=h_0, h_0-1, \dots, 1,0$:
      if ReachPDA($\Aa_k, ((p, \sigma), i), (r,\tau)$): $a'_{\sigma,p,q}:=i$
   foreach $(\sigma,p,q)$ in $\Sigma\times Q^2$: $a_{\sigma,p,q}$:=$a'_{\sigma,p,q}$
$\Aa_\infty$:=GeneratePDA($\Aa, (a_{\sigma,p,q})_{(\sigma,p,q)\in \Sigma\times Q^2}$)
if Reach($\Aa_\infty, ((q_0,\bot), 0), (q_f,\bot)$): return true else return false
\end{lstlisting}
\caption{Reachability on $2$-\HOCSO Algorithm 2
  \label{algo:Reach-2HOCSO}}
\end{figure}

\subsection{Regular Reachability}
\label{sec:RegReach}

In order to define regular sets of configurations, we
recall the  encoding of
\HOPS[2] configurations as trees
from \cite{Kartzow13}.
Let  $p=(\sigma_1,v_1)(\sigma_2,v_2)\dots(\sigma_m,v_m)\in
\PStorage(\CStorage)$.
If $v_1=0$, we set $p_l=\emptyset$ and $p_r=
(\sigma_2,v_2)\dots(\sigma_m,v_m)$.
Otherwise, there is a maximal $1\leq j \leq m$ such that $v_1, \dots,
v_j\geq 1$ and we set $p_l=(\sigma_1,v_1-1)\dots(\sigma_j,v_j-1)$ and
$p_r=(\sigma_{j+1},v_{j+1})\dots(\sigma_m,v_m)$ if $j<m$ and
$p_r=\emptyset$ if $j=m$.
The \emph{tree-encoding} $\Encode$ of $p$ is given as follows:\\[-1.5ex]
\begin{minipage}{0.7\linewidth}
  \begin{align*}
    \Encode(p)=
    \begin{cases}
      \emptyset &\text{if } p=\emptyset\\
      \bot(\sigma_1,\Encode(p_r)) & \text{if }p=(\sigma_1,0) p_r\\
      \bot(\Encode(p_l), \Encode(p_r)) & \text{otherwise},
    \end{cases}
  \end{align*}
\end{minipage}
\begin{minipage}[r]{0.25\linewidth}
  \begin{align*}
    \scalebox{0.5}{
      \begin{tikzpicture}
        [
        each edge/.style=
        {draw, shorten <= 1, shorten >= 1},
        gu/.style={grow=right, level distance=20},
        gr/.style={grow=up, level distance=20}
        ]
        \node (root)  {$q$}
        child[gr]{
          node {$\bot$}
          child[gr]{
            node {$\bot$}
          child[gr]{
            node {$\bot$}
            child[gr]{
              node {$a$}
            }
            child[gu]{
              node{$\bot$}
              child[gr]{
                node{$a$}
              }
            }
          }
        }
        child[gu, level distance=40]{
          node {$\bot$}
          child[gr]{
            node{$a$}
          }
          child[gu]{
            node {$\bot$}
            child[gr]{
              node{$\bot$}
              child[gr]{
                node{$b$}
              }
            }
          }
        }
      }
      ;
    \end{tikzpicture}
  }
\end{align*}
\end{minipage}\\
where  $\bot(t_1,t_2)$ is the tree with root $\bot$ whose left
subtree is $t_1$ and whose right subtree is $t_2$.
For a configuration $c=(q,p)$ we define
$\Encode(c)$ to be the tree $q(\Encode(p),\emptyset)$. The picture
beside the definition of $\Encode$ shows the encoding of the configuration
$\left(q,(a,2)(a,2)(a,0)(b,1)\right)$.
Note that for each element $(\sigma,i)$ of $p$, there
is a path to a leaf $l$ which is labelled by $\sigma$ such that the path
to $l$ contains $i+2$ left successors. Moreover, the inorder traversal
of the tree induces an order of the leaves which corresponds to the
left-to-right order of the elements of $p$.
We call a set $C$ of configurations \emph{regular} if the set
$\{\Encode(c) \mid c\in C\}$ is a regular set of trees.

$\Encode$ turns
the reachability predicate on $\HOCSO[2]$ into
a tree-automatic relation \cite{Kartzow13}, i.e., for a given
$\HOCSO[2]$ $\Aa$, there is a
tree-automaton $\Tt_\Aa$ accepting the convolution of $\Encode(c_1)$
and $\Encode(c_2)$ for $\HOCSO[2]$ configurations $c_1$ and $c_2$ iff
there is a run of $\Aa$ from $c_1$ to $c_2$.
This allows to solve the regular backwards reachability problem as
follows. On input a $\HOCSO[2]$ and a tree automaton $\Tt$ recognising
a regular set $C$ of configurations, we first compute the tree-automaton
$\Tt_\Aa$. Then using a simple product construction of $\Tt_\Aa$ and
$\Tt$ and projection, we obtain an automaton $\Tt_{\mathsf{pre}}$
which accepts $\preStar(C)=\{\Encode(c) \mid \exists c'\in C \text{
  and a run from } c \text{ to } c'\}$.
The key issue for the complexity of this construction is the
computation of $\Tt_\Aa$ from $\Aa$.
The explicit construction of $\Tt_\Aa$ in
\cite{Kartzow13} involves an
exponential blow-up. In this construction
the blow-up is only caused by a part of $\Tt_\Aa$ that computes
$\ret{\infty}(\sigma,m)$ for each $\sigma\in\Sigma$ on input
a path whose labels form the word $\bot^m$.
Thus, we can exhibit the following consequence.

\begin{corollary}[\cite{Kartzow13}]\label{thm:KartzohPHDMain}
  Given a $2$-\HOCSO $\Aa$ with state set $Q$, we can compute
  the tree automaton $\Tt_\Aa$ in $\PTIME$, if we can compute from
  $\Aa$ in $\PTIME$ a deterministic word automaton $\Tt'$ with state
  set  $Q'\subseteq \prod_{\sigma\in\Sigma} (2^{Q\times Q})^2$ such
  that for all $m\in\N$ the state of $\Tt'$ on input $\bot^m$ is
  $\left(\ret{\infty}(\sigma, m),
  \loops{\infty}(\sigma,m)\right)_{\sigma\in\Sigma}$.
\end{corollary}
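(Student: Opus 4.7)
The plan is to revisit the explicit construction of $\Tt_\Aa$ given in \cite{Kartzow13} and isolate exactly which subcomponent is responsible for the exponential blow-up. As the excerpt notes just before the corollary statement, the only source of the blow-up is a piece of $\Tt_\Aa$ that, along a branch of the encoding tree labelled $\bot^m$, needs to produce for every top-symbol $\sigma\in\Sigma$ the pair $(\ret{\infty}(\sigma,m),\loops{\infty}(\sigma,m))$. Every other ingredient of the Kartzow construction (matching pairs of configurations via the convolution of their tree encodings, checking local consistency of individual transitions of $\Aa$ with the returns/loops profile passed at each branching node, projecting out the second tape, final determinisation at the bounded alphabet level, etc.) is carried out by automata whose size is already polynomial in $|\Aa|$. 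The hypothesis of the corollary supplies exactly the missing component.

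First I would formalise a ``plug-in'' step: replace the exponentially large profile-computing subautomaton used in \cite{Kartzow13} by the deterministic word automaton $\Tt'$ from the hypothesis. Since $\Tt'$ by assumption is of polynomial size and its state after reading $\bot^m$ is precisely $(\ret{\infty}(\sigma,m),\loops{\infty}(\sigma,m))_{\sigma\in\Sigma}$, I can run $\Tt'$ along the spine of left-successors within each subtree of $\Encode(c)$ and feed its current state into the polynomial-size ``matching'' transducer of Kartzow's construction wherever that construction previously consulted an explicitly enumerated returns/loops table. Taking the product of $\Tt'$ with the remaining polynomial components yields a tree automaton whose total size is polynomial in $|\Aa|$, and the whole construction runs in polynomial time because $\Tt'$ itself is computable in polynomial time by hypothesis.

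The main obstacle I anticipate is a synchronisation issue: the Kartzow construction may need the returns/loops profile at a given branching node before it has finished exploring the $\bot$-branch below, i.e.~it may consume the information in a direction opposite to the one in which $\Tt'$ produces it. Two resolutions are available, both of which preserve polynomial size. Either one reverses $\Tt'$ — since we only need the value after $m$ steps on the unary input $\bot^m$, the reversal is a polynomial-size NFA over $\{\bot\}$ that can be run from the leaf upwards — or one pipelines $\Tt'$ top-down and delays the consistency check to the node where $\Tt'$ has reached the appropriate height, which requires only a polynomial amount of extra control in the tree automaton. No subset construction over the exponential profile space is needed, because the profile is never materialised explicitly; it is always accessed through the state of $\Tt'$.

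Once this substitution is in place and the direction issue above is resolved, the remainder of the argument is bookkeeping that copies \cite{Kartzow13} verbatim: close the product automaton under the tree-automaton operations used there, and check that each stage of the construction preserves polynomial size and is computable in polynomial time. This gives a polynomial-time procedure that outputs $\Tt_\Aa$, proving the corollary.
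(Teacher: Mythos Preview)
Your proposal is correct and matches the paper's own treatment: the paper does not give a standalone proof of this corollary but simply records it as a consequence of the explicit construction in \cite{Kartzow13}, noting in the paragraph preceding the statement that the only exponential component of that construction is the subautomaton computing $(\ret{\infty}(\sigma,m),\loops{\infty}(\sigma,m))_{\sigma\in\Sigma}$ along $\bot^m$-labelled paths. Your plan to plug a polynomially computable $\Tt'$ into that slot and leave the remaining polynomial components untouched is exactly the intended argument; the direction/synchronisation issue you flag is a reasonable implementation concern, but it is not addressed in the paper (which defers all details to \cite{Kartzow13}).
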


Thus, the following lemma completes the proof of
Theorem
\ref{thm:RegularReach-2-HOCS}.

\begin{lemma}\label{lem:MainLemComputationLpRet}
  Let $\Aa$ be a \HOCSO[2] with state set $Q$.
  We can compute in polynomial time a deterministic
  finite word automaton $\Aa'$ with state set $Q'$ of size at most
  $2\cdot (\lvert \Sigma \rvert \cdot \lvert Q\rvert^2+1)$ such that
  $\Aa'$ is in state
  $\left(\ret{\infty}((\sigma,n)),
    \loops{\infty}((\sigma,n))\right)_{\sigma\in\Sigma}$
  after reading $\bot^n$   for every $n\in\N$.
\end{lemma}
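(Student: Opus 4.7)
My plan is to compute the return and loop information as two small tables and then read the DFA directly off them. The $Q'$-size bound comes for free from the stabilisation lemmas, so the core work is computing the loop table.

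First, by Lemma~\ref{lem:BoundCounterForKReturns} and the monotonicity of $\PStorage(\CStorage)$ discussed before it, the sequences $\ret{\infty}((\sigma,n))$ and $\loops{\infty}((\sigma,n))$ are $\subseteq$-monotone in $n$ and become constant for $n \ge \lvert\Sigma\rvert\lvert Q\rvert^2$ and $n \ge 2\lvert\Sigma\rvert\lvert Q\rvert^2$, respectively. Thus the tuple $R_n := (\ret{\infty}((\sigma,n)),\loops{\infty}((\sigma,n)))_{\sigma\in\Sigma}$ stabilises at $n = 2\lvert\Sigma\rvert\lvert Q\rvert^2$ and attains at most $2\lvert\Sigma\rvert\lvert Q\rvert^2 + 1 \le 2(\lvert\Sigma\rvert\lvert Q\rvert^2 + 1)$ distinct values, matching the required bound on $\lvert Q'\rvert$.

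Next I would compute two tables. The return table $a_{\sigma,p,q} := \min\{n : (p,q) \in \ret{\infty}((\sigma,n))\}$ (with $\min\emptyset := \infty$) is exactly the matrix produced by algorithm \ReachHOCSO in the proof of Proposition~\ref{prop:ReachHOCS-0}, so no new work is needed. The loop table $b_{\sigma,p,q} := \min\{n : (p,q) \in \loops{\infty}((\sigma,n))\}$ is the new ingredient. The key idea is that once the return table is known, the pushdown automaton $\Aa_\infty$ produced by \GeneratePS faithfully simulates $\Aa$'s behaviour on the topmost level~$2$ cell: $\Aa$'s stay-transitions become $\Aa_\infty$'s own level~$1$ moves, and $\Aa$'s push--return combos are encoded by the $\id$-transitions synthesised from the return table. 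Consequently, $(p,q) \in \loops{\infty}((\sigma,n))$ iff the configuration $((q,\sigma),w_n)$ of $\Aa_\infty$ is reachable from $((p,\sigma),w_n)$, where $w_n$ is the unique canonical pushdown content $\bot_0\bot_1\cdots\bot_{\min(n,h_0)}\bot_\infty^{\max(0,n-h_0)}$ representing counter value $n$. Each such same-configuration reachability query is answerable in polynomial time by the saturation algorithm of \cite{bouajjani-a-1997-135-a}, and we need only one query per pair $(\sigma,n)$ for $n \in \{0,1,\ldots,2\lvert\Sigma\rvert\lvert Q\rvert^2\}$; the minimum $n$ witnessing each $(p,q)$ gives $b_{\sigma,p,q}$.

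To assemble $\Aa'$, I list the (polynomially many) distinct tuples $R_0, R_1, \ldots, R_{2\lvert\Sigma\rvert\lvert Q\rvert^2}$, each read directly off the two tables via $\ret{\infty}((\sigma,n)) = \{(p,q) : a_{\sigma,p,q} \le n\}$ and the analogous formula for loops, and add a $\bot$-transition from $R_n$ to $R_{n+1}$ plus a self-loop on the stable tuple. By construction, after reading $\bot^n$ the automaton is in state $R_n$. The main obstacle I expect is the loop-to-PDA-reachability correspondence: both directions must be argued carefully, relying on the invariant that $\Aa_\infty$'s pushdown always has the canonical shape above, so that ``same counter value'' and ``same pushdown content'' coincide and every same-configuration run of $\Aa_\infty$ genuinely lifts to a loop of $\Aa$. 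Everything else---table construction, DFA assembly, and the size bookkeeping---builds straightforwardly on the machinery already developed for Proposition~\ref{prop:ReachHOCS-0}.
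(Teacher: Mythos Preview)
Your proposal is correct and follows essentially the same approach as the paper: both use the stabilisation bound from Lemma~\ref{lem:BoundCounterForKReturns} for the state-set size, reuse the return table and the simulating pushdown automaton $\Aa_\infty$ from Proposition~\ref{prop:ReachHOCS-0}, characterise $(p,q)\in\loops{\infty}((\sigma,n))$ via same-configuration reachability in $\Aa_\infty$ on the canonical stack $\bot_0\bot_1\cdots\bot_n$, and assemble the DFA from the finitely many tuples $R_0,\ldots,R_{n_0}$ with a self-loop at the end. The only minor imprecision is the claim of ``one query per pair $(\sigma,n)$''---you actually need one saturation per starting state (or per target state), but this is still polynomially many queries and does not affect correctness.
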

\begin{proof}
  Let $n_0=2 \cdot \lvert
  \Sigma\rvert \cdot  \lvert Q \rvert^2$.
  Recall algorithm \ReachHOCSO of Figure\,\ref{algo:Reach-2HOCSO}. In
  this  polynomial time
  algorithm we computed a matrix
  $A=(a_{\sigma,p,q})_{(\sigma,p,q)\in\Sigma \times Q^2}$ representing
  $\ret{\infty}$ and a pushdown automaton $\Aa_\infty$ (of level $1$)
  simulating $\Aa$ in the sense that $\Aa_\infty$ reaches a
  configuration $((q,\sigma) p)$ for a pushdown $p$ of height $n$ if and
  only if $\Aa$ reaches $(q, (\sigma, n))$.
  It is sufficient to describe a
  polynomial time algorithm that computes
  $M_i= \left(\ret{\infty}((\sigma,n)),
  \loops{\infty}((\sigma,n))\right)_{\sigma\in\Sigma}$ for all $n\leq n_0$.
  $\Aa'$ is then the automaton with state set $\{M_i\mid i\leq n_0\}$,
  transitions from $M_i$
  to $M_{i+1}$ for each $i< n_0$ and a transition from $M_{n_0}$ to
  $M_{n_0}$. The correctness of this construction follows from Lemma
  \ref{lem:BoundCounterForKReturns}.

  Let us now describe how to compute $M_i$ in polynomial time.
  Since $\Aa_{\infty}$ simulates $\Aa$ correctly, there is a loop from
  $(q, (\sigma,i))$ to $(q', (\sigma,i))$ of $\Aa$ if and only if there
  is a run of $\Aa_{\infty}$ from $((q,\sigma), p_i)$ to
  $((q',\sigma), p_i)$ for $p_i=\bot_0\bot_1\dots\bot_i$ (where we
  identify $\bot_j$ with $\bot_\infty$ for all $j> h_0$). Thus, we can compute
  the loop part of  $M_i$ by $n_0$ many calls to an algorithm for
  reachability on pushdown systems.
  Note that $(p,q)\in\ret{\infty}((\sigma,i))$ with respect to $\Aa$
  if there is a state $r$ and some $\tau\in\Sigma$ such that
  $(r, \tau, \pop{}, q)$ is a transition of $\Aa$ and
  $(r,\tau)$ is reachable in $\Aa_{\infty}$ from $((q,\sigma),i)$.
  Thus, with a loop over all transitions of $\Aa$ we reduce the
  computation of the returns component of $M_i$ to polynomially many
  control state reachability problems on a pushdown system.\qed
\end{proof}

\section{Reachability for \HOCSO[k] and \HOCSM[k]}
\label{sec:ReachGeneral}
Using slight adaptations of Engelfriet's seminal paper
\cite{Engelfriet91}, we can lift the result on reachability for
\HOCSO[2] to reachability for \HOCSO[k]
(cf.~\reflong{Appendix \ref{app:ReachGeneral}}).

\begin{theorem}\label{thm:HOCSO-Reachability}
  For $k\geq 2$,
  the control state reachability problem for \HOCSO[k] is
  complete for $\DTIME(\bigcup_{d\in\N}\exp_{k-2}(n^d))$.
  For $k\geq 1$,
  the alternating control state reachability problem for \HOCSO[k]
  is complete for
  $\DTIME(\bigcup_{d\in\N}\exp_{k-1}(n^d))$.
\end{theorem}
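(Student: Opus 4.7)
The strategy is to adapt Engelfriet's space trade-off framework \cite{Engelfriet91} to the counter setting and to bottom out at the polynomial-time algorithm for $\HOCSO[2]$ given by Proposition \ref{prop:ReachHOCS-0}. For hardness of the nondeterministic statement, we exploit that $\HOCSO[k]$ simulates $\HOPS[k-1]$ (Slaats' simulation survives removing the $0$-test, as noted in the introduction); hence Hague--Ong's $\DTIME(\bigcup_{d\in\N}\exp_{k-2}(n^d))$-completeness of reachability on $\HOPS[k-1]$ \cite{HagueOng08} transfers directly. For the alternating lower bound we invoke the analogous Engelfriet-style completeness of \emph{alternating} reachability on $\HOPS[k-1]$ for $\DTIME(\bigcup_{d\in\N}\exp_{k-1}(n^d))$, with $\PTIME$-hardness from the circuit-value problem covering the base case $k=1$.

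For containment, the plan is to introduce auxiliary $\SPACE{s(n)}$ $\PStorage^{j}(\CStorage)$-automata, i.e., two-way machines with a worktape of size $s(n)$ and an extra storage of type $\PStorage^{j}(\CStorage)$, so that a $\HOCSO[k]$ is precisely a $\SPACE{0}$ $\PStorage^{k-1}(\CStorage)$-automaton. The central technical step is a counter version of Engelfriet's trade-off lemma: reachability for a nondeterministic $\SPACE{s(n)}$ $\PStorage^{j+1}(\CStorage)$-automaton reduces in polynomial time to reachability for a nondeterministic $\SPACE{2^{O(s(n))}}$ $\PStorage^{j}(\CStorage)$-automaton. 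Iterating this $k-2$ times converts the $\HOCSO[k]$ instance into a nondeterministic $\SPACE{\exp_{k-2}(n^{O(1)})}$ $\PStorage(\CStorage)$-automaton; folding the worktape contents into the control state then yields a $\HOCSO[2]$ instance of size $\exp_{k-2}(n^{O(1)})$, on which Proposition \ref{prop:ReachHOCS-0} runs in time polynomial in this size, giving the promised $\DTIME(\bigcup_{d\in\N}\exp_{k-2}(n^d))$ upper bound.

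For the alternating variant, we prepend Engelfriet's second trade-off: an alternating $\SPACE{s(n)}$ $\PStorage^{j}(\CStorage)$-automaton is simulated by a nondeterministic $\SPACE{s(n)}$ $\PStorage^{j+1}(\CStorage)$-automaton, encoding the universal branching into the extra pushdown level. Applied to an alternating $\HOCSO[k]$ this produces a nondeterministic $\HOCSO[k+1]$-reachability instance, which by the nondeterministic statement lies in $\DTIME(\bigcup_{d\in\N}\exp_{k-1}(n^d))$.

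The main obstacle will be to verify that Engelfriet's trade-off constructions, originally written for arbitrary pushdown alphabets, still go through when the innermost storage is a unary, test-free counter. Engelfriet uses a rich bottom-level alphabet to store compressed data from the higher levels; in our setting this information must instead be encoded into the counter value (bounded by $\exp$ of the space bound of the current level), and we must check that the absence of non-trivial tests on $\CStorage$ does not break the monotonicity arguments underlying his simulations. Once this bookkeeping is in place, the inductive structure of the proof is expected to follow Engelfriet's verbatim.
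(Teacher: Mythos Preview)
Your plan is essentially the paper's own: reduce by Engelfriet's trade-offs down to $\HOCSO[2]$ and invoke Proposition~\ref{prop:ReachHOCS-0}, and for the alternating case trade alternation for one extra pushdown level. The one point worth correcting is your ``main obstacle'': Engelfriet's Theorems~2.2 and~2.4 are stated for the operator $\PStorage(\cdot)$ applied to an \emph{arbitrary} storage type $\Storage$ treated as a black box, so with $\Storage=\PStorage^{j}(\CStorage)$ they apply verbatim---the outer $\PStorage$ already carries the alphabet $\{\bot,0,1\}$ used in his encodings, and the innermost unary counter is never inspected by the construction. No adaptation to the test-free unary setting is needed; the only additional ingredient beyond Engelfriet is the reachability\,$\leftrightarrow$\,membership bridge (the analogue of his Lemma~7.11, here Lemmas~\ref{lem:AltAutomataReducibleToAlternatingLogSpace} and~\ref{lem:spaceboundedTMtoReachabilityofAutomata}), which you should make explicit when passing between control-state reachability and the auxiliary-$\SPACE{s(n)}$ membership problems.
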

Hardness follows from the hardness of control state reachability for
\HOPS[(k-1)] \cite{Engelfriet91} and the
trivial fact that the storage type $\PStorage^{k-1}$ of \HOPS[(k-1)]
can be trivially simulated by the storage type
$\PStorage^{k-1}(\CStorage)$ of  \HOCSO[k].
Containment for the first claim is proved by induction on $k$ (the
base case $k=2$ has been proved in the previous section). For $k\geq
3$, we
use  Lemma 7.11,  Theorems 2.2 and 2.4 from \cite{Engelfriet91}
and reduce reachability of $\HOCSO[k]$ to reachability on
(exponentially bigger) $\HOCSO[(k-1)]$.
For the second claim, we adapt Engelfriet's  Lemma 7.11 to a version
for the setting of alternating automata (instead of
nondeterministic automata)  and use his Theorems 2.2. and 2.4 in order
to show equivalence (up to logspace reductions) of alternating
reachability for $\HOCSO[(k-1)]$ and reachability for $\HOCSO[k]$.

We can also reduce reachability for
 $\HOCSM[k]$ to  reachability for $(k-1)$-fold exponentially
bigger $\HOCSM[1]$. Completeness for $\NSPACE(\log(n))$
of  reachability for $\HOCSM[1]$ (cf.~\cite{Goller08})
yields the  upper bounds for  reachability for
$\HOCSM[k]$. The corresponding lower bounds follow by
applications of Engelfriet's theorems and
an adaptation of the $\PSPACE$-hardness proof for emptiness
of alternating finite automata by Jancar and Sawa
\cite{JancarS07}.

\begin{theorem}\label{thm:HOCSM-Reachability}
  For $k\geq 2$, (alternating) control state reachability for \HOCSM[k] is
  complete for ($\DSPACE(\bigcup_{d\in\N} \exp_{k-1}(n^d))$)
  $\DSPACE(\bigcup_{d\in\N} \exp_{k-2}(n^d))$.
\end{theorem}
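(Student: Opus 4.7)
The strategy follows the same recipe as Theorem~\ref{thm:HOCSO-Reachability}: adapt Engelfriet's~\cite{Engelfriet91} level-for-space trade-off (his Lemma~7.11 together with Theorems~2.2 and~2.4) to the $\CStorageZero$-setting and iterate it $k-1$ times, reducing (alternating) reachability on $\HOCSM[k]$ to (alternating) reachability on a $(k-1)$-fold exponentially larger $\HOCSM[1]$. The decisive difference with the $\HOCSO$ case is that $\CStorageZero$ is not monotone: a sequence of transitions that applies to a configuration is in general not applicable after pushing further symbols, because the $0$-test may then fail. Hence the polynomial-time shortcut used for $\HOCSO[2]$ in Section~\ref{sec:RegularReachability} disappears, which is precisely where the additional exponential in the complexity comes from.

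For the upper bounds I would feed the reduction's output into known $\HOCSM[1]$ results. In the nondeterministic case, Goller's $\NSPACE(\log n)$ bound~\cite{Goller08} applied to an input of size $\exp_{k-1}(\mathrm{poly}(n))$ yields $\NSPACE(\exp_{k-2}(\mathrm{poly}(n)))$, and Savitch's theorem gives the claimed $\DSPACE(\exp_{k-2}(\mathrm{poly}(n)))$ bound. In the alternating case I would first argue that alternating $\HOCSM[1]$ reachability lies in $\PSPACE$ (by bounding reachable counter values polynomially and using an $\ASPACE(\log n)$ decision procedure in the style of known results for alternating one-counter automata); applied to the exponentially larger input this gives $\DSPACE(\exp_{k-1}(\mathrm{poly}(n)))$, the extra exponential reflecting the inclusion $\ASPACE(f)\subseteq\DSPACE(2^{O(f)})$.

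For the lower bounds the base case is $\PSPACE$-hardness of alternating $\HOCSM[1]$ reachability, obtained by an essentially direct transfer of Jancar and Sawa's~\cite{JancarS07} proof: $\HOCSM[1]$ subsumes alternating unary NFA, with the counter encoding the input head position and the $0$-test serving as the end-of-input marker. Hardness at higher levels is then obtained by iterating the same Engelfriet-style reductions, which are logspace and alternation-preserving; $(k-1)$-fold iteration lifts $\PSPACE$-hardness to $\DSPACE(\exp_{k-2}(\mathrm{poly}(n)))$-hardness for nondeterministic $\HOCSM[k]$ reachability and to $\DSPACE(\exp_{k-1}(\mathrm{poly}(n)))$-hardness for alternating $\HOCSM[k]$ reachability.

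The principal obstacle I expect is checking that Engelfriet's constructions, originally written for $\PStorage$-automata, remain correct under the much more restricted operations of $\CStorageZero$: his marking techniques using the stack alphabet must now be simulated using counter values and $0$-tests, and the alternation-preserving variant of Lemma~7.11 (only implicit in~\cite{Engelfriet91}) requires separate verification. A secondary but essential challenge is establishing the $\PSPACE$ upper bound for alternating $\HOCSM[1]$ reachability, which amounts to showing that the reachable counter values in an accepting alternating run stay polynomial; this replaces the role that Goller's result plays in the nondeterministic case, and is the place where the final bookkeeping of the extra exponential must be carried out carefully.
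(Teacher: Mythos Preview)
Your overall strategy---Engelfriet's trade-off lemmas plus a level-$1$ base case---matches the paper's, and the upper-bound argument is essentially the same. Two points deserve correction, however.

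First, your ``principal obstacle'' is not one. Engelfriet's Theorems~2.2 and~2.4 (the paper's Lemmas~\ref{lem:Engelfriet2.2} and~\ref{lem:Engelfriet2.4}) are stated for an \emph{arbitrary} base storage type $\Storage$; the operator $\PStorage$ that they peel off is the full pushdown operator with alphabet $\{0,1,\bot\}$. Since $\HOCSM[k]=\PStorage^{k-1}(\CStorageZero)$, levels $2$ through $k$ carry a genuine stack alphabet, and Engelfriet's marking tricks apply verbatim. Only the innermost level is a unary counter, and that level is treated as an opaque $\Storage$ throughout. No simulation of markers by counter values is needed.

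Second, and more seriously, your hardness argument has a gap. You propose to take the $\PSPACE$-hardness of alternating $\HOCSM[1]$ reachability and ``lift'' it by iterating Engelfriet's reductions. But those reductions are \emph{equivalences} (polynomial-time translations in both directions); they transport a hardness class from one model to another, they do not enlarge it. Starting from $\PSPACE$ you will only ever get $\PSPACE$-hardness, never $\DSPACE(\exp_{k-2}(n^d))$-hardness. The paper closes this gap with a separate lemma (Lemma~\ref{lem:DSPACE-ALOGSPACE+C}): for every $k$ there is a direct logspace reduction from an arbitrary language in $\DSPACE(\exp_{k-2}(n^d))$ to membership for an alternating auxiliary $\SPACE{\exp_{k-3}(n^d)}$ $\CStorageZero$ automaton. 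The construction is the Jancar--Sawa idea applied at an arbitrary space bound, not just at the base case: the counter stores the \emph{time step} of the deterministic space-bounded computation, the work tape stores a cell index, and alternation is used to verify local consistency of a tableau cell-by-cell backwards in time. Only after this direct reduction does one apply Engelfriet's lemmas to translate the resulting auxiliary $\CStorageZero$ automaton into a nondeterministic $\HOCSM[k]$.

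For the alternating statement the paper does not repeat the argument; it observes (via Lemmas~\ref{lem:AltAutomataReducibleToAlternatingLogSpace}, \ref{lem:Engelfriet2.4}, and~\ref{lem:spaceboundedTMtoReachabilityofAutomata}) that alternating reachability for $\Storage$ automata and nondeterministic reachability for $\PStorage(\Storage)$ automata are logspace-interreducible, so the alternating claim for level $k$ follows immediately from the nondeterministic claim for level $k+1$.
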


\section{Back to HOPS: Applications to Languages}
\label{sec:Languages}
Engelfriet \cite{Engelfriet91} also  discovered a close connection
between the complexity
of the control state reachability problem for a class of automata and
the class of languages recognised by this class.
We restate a slight extension (cf.~\reflong{Appendix\ref{app:Languages}})
of these results and use them to
confirm Slaat's conjecture from \cite{Slaats12}.
\begin{proposition}\label{prop:EngelfrietLanguageReachConnection}
  Let $\Storage_1$ and $\Storage_2$ be storage types and $C_1,C_2$ complexity
  classes such that $C_1\subsetneq C_2$. If control state reachability for
  nondeterministic $\Storage_i$ automata is complete for $C_i$, then
  there is a deterministic $\Storage_2$ automaton accepting some
  language $L$ such that no nondeterministic $\Storage_1$-automaton
  accepts $L$.
\end{proposition}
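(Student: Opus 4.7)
The plan is to follow Engelfriet's classical two-step strategy.

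First, I would show that every language accepted by a nondeterministic $\Storage_1$-automaton lies in $C_1$. Given such an automaton $\Bb$ and a candidate input $w=w_1w_2\cdots w_n$, the standard hardcoding construction produces in polynomial time a $\Storage_1$-automaton $\Bb_w$ whose control states first deterministically walk through the symbols of $w$ without touching the storage, and then simulate $\Bb$'s remaining transitions. Thus $w\in L(\Bb)$ iff the accepting control state of $\Bb_w$ is reachable from its initial configuration, which reduces membership in $L(\Bb)$ to control-state reachability for $\Storage_1$-automata. Since the latter lies in $C_1$, we obtain $L(\Bb)\in C_1$.

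Second, I would exhibit a deterministic $\Storage_2$-automaton $U$ whose language is $C_2$-hard; by the previous step combined with $C_1\subsetneq C_2$, no nondeterministic $\Storage_1$-automaton can then accept $L(U)$. Let $L^{*}$ be the set of encodings $\langle\Aa,q_f\rangle$ such that $\Aa$ is a nondeterministic $\Storage_2$-automaton reaching control state $q_f$; by hypothesis $L^{*}$ is $C_2$-complete. I would design $U$ as a universal deterministic simulator: on input $\langle\Aa,q_f\rangle$, $U$ first copies the description of $\Aa$ onto its storage and then deterministically runs the $C_2$-bounded procedure that decides $\Storage_2$-reachability, accepting iff the procedure answers yes. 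Hence $L(U)=L^{*}$, which is $C_2$-hard and in particular lies outside $C_1$.

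The main obstacle lies in the construction of $U$ in the second step: one must show that the deterministic algorithm witnessing $\Storage_2$-reachability $\in C_2$ can itself be implemented by a deterministic $\Storage_2$-automaton — a self-simulation property. This is exactly the place where Engelfriet's key lemma is invoked, and the ``slight extension'' alluded to in the paper concerns lifting it from his original $\HOPS$ setting to the counter variants $\HOCSO$ and $\HOCSM$. Concretely, one verifies self-simulation by encoding a $C_2$-bounded Turing worktape on the higher-order counter storage, reusing the very same encoding techniques that underpin the containment directions of Theorems \ref{thm:HOCSO-Reachability} and \ref{thm:HOCSM-Reachability}; granting this, the separation follows immediately from the two steps above.
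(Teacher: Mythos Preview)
Your first step is essentially correct and matches the spirit of the paper's argument (the paper phrases it via auxiliary $\SPACE{\log(n)}$ $\Storage$ automata and logspace transducers, but the content is the same: membership for nondeterministic $\Storage_1$-automata reduces to $\Storage_1$-reachability and hence lies in $C_1$).

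Your second step, however, contains a genuine gap. You want a deterministic $\Storage_2$-automaton $U$ that accepts a $C_2$-hard language by \emph{running the $C_2$-bounded decision procedure} for $\Storage_2$-reachability. This ``self-simulation'' is not available in general, and the proposition is stated for \emph{arbitrary} storage types. Already for the concrete cases in the paper it fails: if $\Storage_2=\PStorage$ then $C_2=\PTIME$, yet the languages of deterministic one-way pushdown automata are exactly the deterministic context-free languages, which (under standard assumptions) contain no $\PTIME$-hard set. The encoding tricks behind Theorems~\ref{thm:HOCSO-Reachability} and~\ref{thm:HOCSM-Reachability} do not help here: those constructions produce \emph{auxiliary} two-way automata with a logspace worktape, not plain one-way deterministic $\Storage_2$-automata, and it is precisely the auxiliary worktape that carries the complexity-class power.

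The paper avoids this obstacle entirely. Instead of a universal simulator it uses the language $\VAL(\Storage_2)$ of valid $\Storage_2$ operation sequences, which is accepted by a \emph{one-state} deterministic $\Storage_2$-automaton that simply executes each operation symbol it reads. The punchline is Engelfriet's Lemma~7.1: every language in $\text{N-aux-}\SPACE{\log(n)}\text{-}\Storage_2$ is a logspace transducer preimage of $\VAL(\Storage_2)$. Combined with the lemma that $\text{N-aux-}\SPACE{\log(n)}\text{-}\Storage_i = C_i$ whenever $\Storage_i$-reachability is $C_i$-complete, one gets: if some nondeterministic $\Storage_1$-automaton accepted $\VAL(\Storage_2)$, then $C_2 \subseteq T^{-1}(\{\VAL(\Storage_2)\}) \subseteq T^{-1}(1\text{N-}\Storage_1) = C_1$, a contradiction. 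So the separating language is not a self-simulation of a $C_2$-algorithm but the trivially recognisable $\VAL(\Storage_2)$, and the hardness is pushed into the logspace-transducer closure rather than into the automaton itself.
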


In fact, Engelfriet's proof can be used to derive a separating
language.
For a storage type $\Storage=(\SConfig,T,F,\SConfigInit)$, we
define the \emph{language of valid storage sequences} $\VAL(\Storage)$
as follows.
For each test $t\in T$ and $r\in\{true,false\}$ we set
$t_r:=\id{\restriction}_{\{\Sconfig\in\SConfig\mid t(\Sconfig)=r
  \}}$ and set
$\Sigma= F\cup \{ t_r \mid t\in T, r\in\{true,false\}\}$.
For $s\in \Sigma^*$ such that $s=a_1\dots
a_n$, and
$\Sconfig\in\SConfig$ we write $s(\Sconfig)$ for
$a_n(a_{n-1}(\dots a_1(\Sconfig)\dots))$. We define
  $\VAL(\Storage) =
  \left\{ s\in \Sigma^* \left|
      s(\SConfigInit) \text{ is defined}
    \right.\right\}.$

If the previous proposition separates the languages of
$\Storage_2$ automata from those of $\Storage_1$ automata, then it
follows from the proof that
$\VAL(\Storage_2)$ is not accepted by any $\Storage_1$ automato (cf. \reflong{Appendix\ref{app:Languages}}).

\begin{corollary} \label{cor:HOCA-Language-Separation}
  If
  $
      \DTIME(\bigcup_{d\in\N}\exp_k(n^d)) \subsetneq
      \DSPACE(\bigcup_{d\in\N}\exp_k(n^d)) \subsetneq$\\
      $\DTIME(\bigcup_{d\in\N}\exp_{k+1}(n^d)),$
      then
      $L(\HOPS[(k-1)]) \subsetneq L(\HOCSO[k])
      \subsetneq$ \\
      $L(\HOCSM[k])
      \subsetneq L(\HOPS[k]).
$
\end{corollary}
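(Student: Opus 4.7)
The plan is to prove each of the three strict containments by invoking Proposition~\ref{prop:EngelfrietLanguageReachConnection} on a suitably chosen pair of storage types, using the reachability-complexity data supplied by Hague--Ong's theorem for $\HOPS$ and by Theorems~\ref{thm:HOCSO-Reachability} and~\ref{thm:HOCSM-Reachability} for the counter classes. The upward inclusions $L(\HOPS[(k-1)])\subseteq L(\HOCSO[k])\subseteq L(\HOCSM[k])\subseteq L(\HOPS[k])$ are given for free by the obvious simulations (Slaats's inclusion, ignoring $\Temptystack$, and embedding $\CStorageZero$ into $\PStorage$), so only strictness must be shown at each step; in every case where the corresponding reachability complexities are strictly separated, Proposition~\ref{prop:EngelfrietLanguageReachConnection} provides the witness $\VAL(\Storage_2)$ explicitly.

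For $L(\HOCSO[k])\subsetneq L(\HOCSM[k])$ I would instantiate Proposition~\ref{prop:EngelfrietLanguageReachConnection} with $\Storage_1=\PStorage^{k-1}(\CStorage)$ and $\Storage_2=\PStorage^{k-1}(\CStorageZero)$: by Theorems~\ref{thm:HOCSO-Reachability} and~\ref{thm:HOCSM-Reachability} the two reachability problems are complete for $C_1=\DTIME(\bigcup_d\exp_{k-2}(n^d))$ and $C_2=\DSPACE(\bigcup_d\exp_{k-2}(n^d))$ respectively, and $C_1\subsetneq C_2$ is the left half of the complexity hypothesis (with indexing aligned to the current level $k$), producing the witness $\VAL(\PStorage^{k-1}(\CStorageZero))$. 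For $L(\HOCSM[k])\subsetneq L(\HOPS[k])$ I take $\Storage_1=\PStorage^{k-1}(\CStorageZero)$ and $\Storage_2=\PStorage^{k-1}(\PStorage)$: Theorem~\ref{thm:HOCSM-Reachability} together with Hague--Ong give $C_1=\DSPACE(\bigcup_d\exp_{k-2}(n^d))$ and $C_2=\DTIME(\bigcup_d\exp_{k-1}(n^d))$, separated strictly by the right half of the complexity hypothesis.

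The remaining inclusion $L(\HOPS[(k-1)])\subsetneq L(\HOCSO[k])$ is the step I expect to be the main obstacle, because Hague--Ong and Theorem~\ref{thm:HOCSO-Reachability} both place control-state reachability into the \emph{same} class $\DTIME(\bigcup_d\exp_{k-2}(n^d))$, so Proposition~\ref{prop:EngelfrietLanguageReachConnection} does not apply off the shelf with these choices of $C_1,C_2$. My approach would be to open up the proof of the Proposition and observe that $\VAL(\PStorage^{k-1}(\CStorage))$ is accepted by a natural deterministic $\HOCSO[k]$-automaton, and then to argue that if some nondeterministic $\HOPS[(k-1)]$-automaton also accepted this language, then composing the resulting simulation with the hardness reduction underlying Theorem~\ref{thm:HOCSO-Reachability} would tighten the quantitative $\HOCSO[k]$-reachability upper bound past a threshold that the hardness side rules out for $\HOPS[(k-1)]$. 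Calibrating the two-sided size estimate so that a genuine contradiction emerges---rather than merely a formal match of completeness classes---is the technical heart of this step and is what I expect will require the most care; an alternative fallback would be to bypass reachability and exhibit a direct language-theoretic separation generalising the $\{a^nb^nc^n\}$-style argument at the base case $k=2$ to arbitrary $k$ via iterated counting.
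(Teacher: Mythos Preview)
Your treatment of the second and third strict inclusions is exactly the paper's: apply Proposition~\ref{prop:EngelfrietLanguageReachConnection} with the reachability-completeness results for $\HOCSO[k]$, $\HOCSM[k]$, and $\HOPS[k]$, and invoke the two halves of the complexity hypothesis (at index $k-2$) to get $C_1\subsetneq C_2$ in each case.

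For the first strict inclusion $L(\HOPS[(k-1)])\subsetneq L(\HOCSO[k])$, however, your primary approach has a genuine gap. You correctly observe that reachability for both $\HOPS[(k-1)]$ and $\HOCSO[k]$ is complete for the \emph{same} class $\DTIME(\bigcup_d\exp_{k-2}(n^d))$, so Proposition~\ref{prop:EngelfrietLanguageReachConnection} does not apply. But your proposed workaround---assume $\VAL(\PStorage^{k-1}(\CStorage))$ is accepted by some $\HOPS[(k-1)]$ and derive a quantitative contradiction---cannot succeed: tracing through Engelfriet's machinery (his Lemma~7.1 and Corollary~7.2), that assumption would only yield that the languages of nondeterministic auxiliary $\SPACE{\log n}$ $\PStorage^{k-1}(\CStorage)$ automata are contained in those of nondeterministic auxiliary $\SPACE{\log n}$ $\PStorage^{k-2}(\PStorage)$ automata. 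Both of these classes already equal $\DTIME(\bigcup_d\exp_{k-2}(n^d))$, so no contradiction emerges, quantitative or otherwise. There is simply no complexity-theoretic leverage at this step.

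The paper bypasses complexity entirely for this inclusion and gives an \emph{unconditional} separation: the language $\{a^nb^m\mid m\leq\exp_{k-1}(n)\}$ is accepted by a $\HOCSO[k]$ (counting to $\exp_{k-1}(n)$ with a level-$k$ counter, cf.\ Blumensath) but is not accepted by any $\HOPS[(k-1)]$ (a pumping-style argument from Carayol--W\"ohrle). Your fallback idea of a direct language-theoretic separation is therefore the right instinct, but the concrete witness is not an $\{a^nb^nc^n\}$-style equality language---it is a growth-rate language exploiting that a level-$k$ counter can count one exponential higher than a level-$(k-1)$ pushdown can bound.
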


The crucial underlying construction detail of the proof of
Proposition \ref{prop:EngelfrietLanguageReachConnection} is quite
hidden within the details of Engelfriet's  technical and long
paper. Its usefulness in other
contexts --- e.g., for higher-order pushdowns or counters ---  has been overseen
so far. Here we give another application to collapsible pushdown
automata:
reachability for collapsible pushdown automata of level~$k$
 is $\DSPACE(\exp_{k-1}(n))$-complete
(cf.~\cite{BroadbentCHS12}).
Thus, Proposition \ref{prop:EngelfrietLanguageReachConnection}
trivially shows that the language of valid level~$(k+1)$ collapsible
pushdown storage sequences separates the collapsible pushdown languages of
level~$k+1$ from those of level~$k$. This
answers a question asked by several experts in this field
(cf.~\cite{Parys12,Kobayashi13}).
In fact,
\cite{Parys12} uses a long and technical construction to prove the
weaker result that
there are more level $2k$ collapsible pushdown
languages than level $k$ collapsible pushdown languages.
From Proposition \ref{prop:EngelfrietLanguageReachConnection} one also easily
derives the level-by-level strictness of the collapsible pushdown tree
hierarchy and the collapsible pushdown graph hierarchy
(cf.~\cite{KartzowP12,Kobayashi13}).

\section{Future Work}
\label{sec:Conclusion}
Our result on regular reachability  gives hope
that also complexity results on model checking for logics like the
$\mu$-calculus extend from pushdown automata to \HOCS[2].
\HOCSO[2] probably is a generalisation of pushdown
automata that retains the good complexity results for basic
algorithmic questions.
It is also  interesting whether the result on
regular reachability extends to the different notions of
regularity for  \HOCS[k] mentioned in the
introduction.
\HOCS also can be seen as a new formalism in the context of register
machines
as currently used in the verification of concurrent systems.
\HOCS allow to store pushdown-like structures of register values and
positive results on model checking \HOCS could be transferred to
verification questions in this concurrent setting.

\clearpage
\appendix
\noindent\textsc{Note that there is an additional bibliographic
reference section at the end of the appendix; we use capital letters to
refer to these works, e.g., \cite{woehrle05}.}

\section{Omitted Proofs}
\label{sec:OmmitedProofs}

If $r$ is a run with domain $\{0, 1, \dots, m\}$ and $0\leq i \leq j \leq
m$ we write $r{\restriction}_{[i,j]}$ for the subrun from position $i$
to $j$, i.e., for the run $r'$ with domain $\{0, 1, \dots, j-i\}$ such
that  $r'(k) = r(i+k)$ for all $0\leq k \leq j-i$. 

\begin{proof}[of Lemma \ref{lem:BoundCounterForKReturns}]
  We first prove the claim for returns.
  Set $\hat m=\lvert \Sigma\rvert \lvert Q \rvert^2$.
  By induction on $k$, there is a $m_k\leq \hat m$ such
  that for all $k'\leq k$, all $\sigma\in\Sigma$ and all $m\geq m_k$
  $\ret{k'}((\sigma,m)) = \ret{k'}((\sigma,m_k))$ and
  $\left\lvert \bigsqcup_{\sigma\in\Sigma} \ret{k}((\sigma,m_k))
  \right\lvert \geq m_k$.\footnote{We use $\bigsqcup$ as the symbol
    for the disjoint union.}

  For the base case $k=-1$, let $m_{-1}=0$ and
  $\ret{-1}((\sigma,m)=~\emptyset$ for all $\sigma\in\Sigma$ and $m\in\N$.

  For the induction step note that for every $i\in\N$
  \begin{enumerate}
  \item
    $\bigsqcup_{\sigma\in\Sigma} \ret{k+1}((\sigma,i))$ contains at most
    $\lvert \Sigma \rvert \cdot \lvert Q\rvert^2$ many elements, and
  \item $\bigsqcup_{\sigma\in\Sigma} \ret{k}(\sigma,m_k) \subseteq
    \bigsqcup_{\sigma\in\Sigma} \ret{k+1}(\sigma,m_k)$.
  \end{enumerate}
  Since these sets are monotone in $i$,
  there is a  minimal number
  $m_k\leq m_{k+1}\leq \hat m$
  such that
  $\ret{k+1}((\sigma,m_{k+1}))=\ret{k+1}((\sigma,m_{k+1}+1))$ and
  $\bigsqcup_{\sigma\in\Sigma} \ret{k+1}((\sigma,m_{k+1}))$ contains at
  least $m_{k+1}$ elements.

  In order to complete our proof, we have to show that for all
  $n> m_{k+1}+1$,
  $\ret{k+1}((\sigma,n))\setminus
  \ret{k+1}((\sigma,m_{k+1}))=\emptyset$. Heading for a
  contradiction, assume that there
  is a minimal $n$ and a return $r$ witnessing that
  $(q,q')\in
  \ret{k+1}((\sigma,n))\setminus \ret{k+1}((\sigma,m_{k+1}))$.
  \begin{enumerate}
  \item If $r$ never visits a configuration of the form
    $r(j)=(q_j, s_j(\sigma_j, 0))$, we obtain 
    by monotonicity of $\PStorage(\CStorage)$  a run $r'$
    witnessing
    $(q,q')\in   \ret{k+1}((\sigma,n-1))\setminus
    \ret{k+1}((\sigma,m_{k+1}))$ contradicting minimality of $n$.
  \item Otherwise, there is a maximal $j$ such that
    $r(j)=(q_j, s_j(\sigma_j, 0))$. Since any operation alters the
    value of the topmost counter by at most $1$, we find a maximal
    $j_1\leq j$ such that $r(j_1)=(q_{j_1}, s_{j_1}(\sigma_{j_1},
    m_{k+1}+1))$. Since $r$ is a return and $j_1$ is not the last
    position in $r$, there is a $j_2\geq j$ such that
    $r(j_2)=(q_{j_2}, s_{j_1})$, i.e., the restriction of $r$ to
    $[j_1,j_2]$ is a return witnessing
    $(q_{j_1},q_{j_2})\in\ret{k'}((\sigma_{j_1},m_{k+1}+1))
    = \ret{k'}((\sigma_{j_1},m_{k+1}))$ for some $k'\leq k+1$.
    Due to 
    monotonicity of $\PStorage(\CStorage)$, we
    can lift a return from
    $(q_{j_1}, s_{j_1}(\sigma_{j_1},
    m_{k+1}))$ to state $q_{j_2}$ to a return $r'$ from
    $(q_{j_1}, s_{j_1}(\sigma_{j_1}, m_{k+1}+1))$ to state $q_{j_2}$
    such that the topmost counter of all configurations are at least
    $1$. Now replace in $r$ the subrun $r{\restriction}_{[j_1,j_2]}$
    by $r'$ and repeat this case distinction on the resulting run
    again.
  \end{enumerate}
  In the second case, we always choose a maximal $j$  such
  that the topmost counter is $0$. We then replace all occurring
  configurations  by others that do not assume the counter value $0$
  on the topmost  counter. Thus, if we iterate this process, the number
  $j$ in each step strictly decreases. Since the run is finite,
  after some iterations, we must reach the contradiction to the first
  case.

  Thus, we conclude that
  $\ret{k}((\sigma,m))=\ret{k}((\sigma,\hat m))$ for all $k\in\N$ and
  all $m\geq \hat m$. This immediately implies the analogous result for
  $k=\infty$.

  The claim for loops is proved completely analogous: there is a value
  $m_\infty$ between $\lvert \Sigma \rvert \lvert Q\rvert^2$ and
  $2\cdot \lvert \Sigma \rvert \lvert Q\rvert^2$ such that
  \begin{equation}
    \label{eq:loopsStabilise}
    \loops{\infty}((\sigma,m_\infty)) =
    \loops{\infty}((\sigma,m_\infty+1)).
  \end{equation}
  By a similar case distinction as in the return case, also
  from this point on the loops stabilise. The only difference now is
  that a counter value $0$ can occur  within a return starting
  with a topmost counter value $m_\infty+1$ or within a loop starting
  with topmost counter value $m_\infty+1$. Nevertheless either the
  first part of the lemma or \eqref{eq:loopsStabilise} allow to
  replace this subrun by one not visiting configurations with topmost
  counter value $0$. \qed
\end{proof}

\begin{proof}[of Lemma \ref{lem:ZerolessQtoFourReturnsareEnough}]
  Since for all $\sigma\in\Sigma$ and $i\in\N$ the sequence
  $\ret{k}((\sigma,i))$ is monotone in $k$,
  there is a number
  $k_0\leq \lvert \Sigma \rvert^2 \cdot \lvert Q\rvert^4$ such that
  $\ret{k_0}((\sigma,i))=\ret{k_0+1}((\sigma,i))$
  for all $1\leq i \leq \lvert \Sigma \rvert \cdot \lvert Q\rvert^2$
  and all $\sigma\in\Sigma$.
  Due to Lemma \ref{lem:BoundCounterForKReturns}, we conclude that for
  all $i\in\N$ and all $\sigma\in\Sigma$,
  $\ret{k_0}((\sigma,i))=\ret{k_0+1}((\sigma,i))$.

  Similar to the previous proof we now show that $\ret{k_0}=\ret{k}$
  for all $k\geq k_0$. For $k\leq k_0+1$, this is already guaranteed
  by choice of $k_0$.
  Assume that there are $\sigma\in\Sigma$, $k>k_0+1$ and $i\in\N$ such
  that $(q,q')\in \ret{k}((\sigma,i))\setminus\ret{k_0}((\sigma,i))$
  and that $r$ is a return witnessing this fact. We assume that $k$
  is minimal whence
  \begin{equation}
    \label{eq:retk=retk0}
    \ret{k_0}=\ret{k-1}.
  \end{equation}
  Thus, $r$ is a run that increases the
  height by $k$. It decomposes as $r=m_0\circ p_0 \circ r_0
  \circ m_1 \circ p_1 \circ r_1 \circ\dots\circ m_i \circ p_i \circ
  r_i \circ m_{i+1} \circ s$ where each $m_i$ is a subrun
  only using $\stay{f}$-operations (whence
  all configurations have the same height as the initial one),
  $p_i$ is a subrun performing only one $\push{\sigma_i,f_i}$, $r_i$ is a
  return, and $s$ is a subrun performing only one $\pop$-operation.
  Now, some of the $r_i$ increase the height by $k-1$.
  By \eqref{eq:retk=retk0}, we can replace each such $r_i$ by some
  return $r'_i$ that increases the height by at most $k_0$.
  This shows $(q,q')\in \ret{k-1}((\sigma,
  i))=\ret{k_0}((\sigma,i))$ contradicting our assumption.

  Thus, $\ret{k}=\ret{k_0}$ for all $k\geq k_0$ whence also
  $\ret{\infty}=\bigcup_{k\in\N}\ret{k} = \bigcup_{k\leq k_0}
  \ret{k} = \ret{\lvert\Sigma\rvert^2 \cdot \lvert Q \rvert^4}$.

  The proof for $\loops{\infty}=\loops{\lvert\Sigma\rvert^2 \cdot
    \lvert Q \rvert^4+1}$ follows because whenever a loop increases
  the height of the stack, it continues with some return. By the
  result for returns, this subreturn can be replaced by one that only
  increases the height by 
  $\lvert\Sigma\rvert^2 \cdot \lvert Q \rvert^4$.\qed
\end{proof}

\section{Reachability for \HOCSO[k] and \HOCSM[k]}
\label{app:ReachGeneral}
\subsection{Auxiliary Storage Automata}
Following Engelfriet's approach \cite{Engelfriet91}, we use
\emph{Auxiliary \SPACE{b(n)} \Storage automata} (where
$b:~\N\rightarrow\N$ is some function)  for the analysis of \Storage
automata. 
The former
are a general model
for computing. 
An instance is given by 
\begin{enumerate}
\item a finite control structure with control
  states $Q$,
\item an initial state $q_0\in Q$,
\item  transition rules $\Delta$,
\item a two-way read-only input tape,
\item a worktape (like for Turing machines) of size $b(n)$ where $n$
  is the size of the input, and
\item a storage \Storage.
\end{enumerate}
The storage \Storage
can be any known storage type used for automata, e.g., stacks, pushdowns, or
counters.  As usual, the above introduced automata can be deterministic,
nondeterministic, or alternating. 
We refer to \cite{Engelfriet91} for a detailed formal
introduction, the connection to Turing-machine based notions of time and space
complexity, as well as for references to the classical literature on
these machine models.

Note, that a \emph{1-way} auxiliary \SPACE{b(n)} \Storage automaton is
defined analogously whereas the input tape is only read one-way.
Most classical automata models can be directly rendered into this
framework, e.g., 
nondeterministic 1-way \Storage automata where \Storage is the trivial
storage correspond to nondeterministic finite automata;  2-way
auxiliary \SPACE{b(n)}  $\Storage$ automata where $\Storage$ is the
trivial storage are the classical
$b(n)$-space bounded Turing machines; 1-way
\Storage automata
where \Storage is a pushdown  correspond to pushdown
automata. 

The configuration of an auxiliary \SPACE{b(n)} \Storage automaton is the
tuple containing the current finite state $q\in Q$,
the contents of the auxiliary work tape, i.e., a word $w$ of size
bounded in \SPACE{b(n)}, as well as the configuration $\Sconfig\in
\SConfig$ of \Storage.
As usual, we define a run of an automaton as a sequence of
configurations that is conform with the underlying transition rules
and the semantics of the storage type. The applicable transition rules
depend on the outcome of the storage tests applied to the current
storage configuration, the current control state and the next input
symbol.

\subsection{Technical Results}

Before we analyse control-state reachability on \HOCSM[k] and \HOCSO[k], we recall
and extend some results of Engelfriet.
The following results are Theorems 2.2 and 2.4 of
\cite{Engelfriet91}.

\begin{lemma}\label{lem:Engelfriet2.2}
  Let $b$ be some function satisfying  $b(n)\geq\log(n)$ for all
  $n\in\N$ and let $\Storage$ be a storage type.
  In polynomial time, we can translate
  an
  alternating auxiliary $\SPACE{ \bigcup_{d\in\N} \exp(d b(n))}$
  \Storage automaton  into an
  alternating auxiliary $\SPACE{b(n)}$ {\PStorage}(\Storage)
  automaton such that both automata accept the same language and vice versa.
\end{lemma}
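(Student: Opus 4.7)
Our plan is to adapt Engelfriet's original proofs of Theorems 2.2 and 2.4 in \cite{Engelfriet91}, which established the analogous equivalence between $\SPACE{b(n)}$ $\PStorage(\Storage)$ automata and $\SPACE{\exp(O(b(n)))}$ $\Storage$ automata in the nondeterministic setting. The polynomial-time bound on the translation is inherited from Engelfriet's constructions, and the key conceptual ingredients carry over to the alternating setting with only minor book-keeping changes.

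For the direction that translates an alternating $\SPACE{b(n)}$ $\PStorage(\Storage)$ automaton $A$ into an alternating $\SPACE{\exp(db(n))}$ $\Storage$ automaton $A'$, we will work with the notion of a \emph{surface configuration}: a tuple consisting of $A$'s control state, worktape content, input head position, topmost pushdown-alphabet symbol, and outcomes of the storage tests. There are at most $\exp(O(b(n)))$ such surface configurations, so $A'$ can afford to keep, on its own worktape, either an explicit encoding of the pushdown content below the current top, or a table of ``realizable pairs'' $(c_1,c_2)$ of surface configurations at the same pushdown level. The single $\Storage$-configuration of $A'$ will mirror the topmost inner $\Storage$-configuration inside $A$'s pushdown. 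A $\push{\sigma,f}$ step of $A$ is then simulated by an alternating branch that on one side continues above and on the other side verifies through the realizable-pair table which returns to the present level are possible; a $\pop$ step consults the topmost saved surface configuration.

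Conversely, given an alternating $\SPACE{\exp(db(n))}$ $\Storage$ automaton $B$ we will encode $B$'s exponentially long worktape into the $\PStorage$-level of the $\PStorage(\Storage)$ automaton $B'$. Each pushdown cell stores a single worktape symbol of $B$; the internal $\Storage$-configuration is kept at the default value at every non-top cell via consistent use of $\push{\sigma,\id}$, while the topmost cell carries $B$'s current $\Storage$-configuration. $B'$'s own worktape then only needs $O(b(n))$ bits for a binary position counter together with $B$'s finite control information. Random access to $B$'s worktape is emulated by pushdown navigation using the standard scanning and reconstruction tricks.

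The principal obstacle we foresee is the preservation of $B$'s topmost $\Storage$-configuration whenever $B'$ has to scan its pushdown in order to perform such random access: each pop exposes a lower $\Storage$-configuration, and each subsequent push creates a fresh default one. We plan to resolve this as Engelfriet does, by making essential use of alternation: $B'$ existentially guesses the $\Storage$-configuration that should reappear at the top after navigation and universally verifies both the guess and the continuation of $B$'s computation through independent branches. Assuming the constructions of both directions yield automata with only polynomially many states and transitions, the whole translation is polynomial in the input size, which completes the plan.
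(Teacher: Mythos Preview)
The paper does not prove this lemma at all: it is stated verbatim as Engelfriet's Theorem~2.2 and simply cited. Your sketch therefore goes well beyond what the paper does, and the first direction (from $\PStorage(\Storage)$ with space $b(n)$ to $\Storage$ with space $\exp(O(b(n)))$) is along the right lines.

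However, your second direction contains a genuine gap. You propose to store $B$'s exponential worktape in the pushdown, keep $B$'s current $\Storage$-configuration at the top, and, whenever $B'$ must scan below the top, ``existentially guess the $\Storage$-configuration that should reappear at the top after navigation and universally verify the guess.'' This cannot work: a $\Storage$-configuration is an element of an arbitrary (typically infinite) set $\SConfig$, and $B'$ has only a finite control plus a $b(n)$-bounded worktape with which to record a guess. There is also no test in the $\PStorage(\Storage)$ interface that compares two $\Storage$-configurations for equality, so even if a guess could be written down it could not be verified. Once you pop, the topmost $\Storage$-configuration is irretrievably lost.

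The way Engelfriet's construction avoids this is not by navigating down and back up, but by pushing a \emph{history} of $B$'s steps so that the $\Storage$-configuration at the top is always $B$'s current one and is never destroyed in the main branch. When $B$ must read the symbol at some tape position $p$, $B'$ existentially guesses the \emph{tape symbol} (a finite object), then universally branches: one branch continues the simulation with the $\Storage$-configuration intact, while the other branch pops back through the history to locate the most recent write to position $p$ and checks the guess. The verifying branch never needs to return, so losing the $\Storage$-configuration there is harmless. Your use of alternation is thus the right instinct, but it should be applied to guessing tape symbols, not to guessing storage configurations.
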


\begin{lemma}\label{lem:Engelfriet2.4}
  For $b(n)\geq\log(n)$ and every storage type $\Storage$,
  there are polynomial time algorithms that translate
  a nondeterministic auxiliary $\SPACE{b(n)}$ ${\PStorage}(\Storage)$
  automaton into an
  alternating auxiliary $\SPACE{b(n)}$
  \Storage automaton accepting the same language and vice versa.
\end{lemma}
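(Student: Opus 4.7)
}
The plan is to construct the two translations independently, each in polynomial time, in a way that preserves acceptance. For the direction from a nondeterministic auxiliary $\SPACE{b(n)}$ ${\PStorage}(\Storage)$ automaton $M$ to an alternating auxiliary $\SPACE{b(n)}$ $\Storage$ automaton $M'$, the idea is to simulate $M$ while keeping only the topmost frame $(\sigma,\Sconfig)$ of $M$'s pushdown explicit: the topmost symbol $\sigma$ and the current $M$-control-state are kept on the finite-state/worktape part of $M'$, while the $\Storage$-component of the top frame lives in $M'$'s $\Storage$. A $\stay{f}$-transition of $M$ then becomes the direct application of $f$ in $M'$. A $\push{\gamma,f}$-transition from $(q,(\sigma,\Sconfig))$ to $(q',(\sigma,\Sconfig)(\gamma,f(\Sconfig)))$ is handled by an existential guess of the control state $p$ in which $M$ will be after the matching $\pop$ (the one that eventually brings $M$'s pushdown back to the height it had before the $\push{\gamma,f}$), together with a universal branch: branch A continues the simulation in state $q'$ with top symbol $\gamma$ on the storage $f(\Sconfig)$; branch B resumes the simulation from state $p$ with top symbol $\sigma$ on the unchanged $\Storage$-configuration $\Sconfig$. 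In branch A the first $\pop$ occurring at the matching pushdown height is intercepted: $M'$ verifies that the current control state is indeed $p$, and accepts. Pre-matching acceptance of $M$ inside branch A is lifted to acceptance of $M'$ without involving branch B. A routine induction on run length shows that $M'$ has an accepting alternation tree on input $w$ iff $M$ has some accepting run on $w$, and the space usage of $M'$ only differs from that of $M$ by an additive constant per branching.

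For the reverse direction, an alternating auxiliary $\SPACE{b(n)}$ $\Storage$ automaton $M'$ is simulated by a nondeterministic auxiliary $\SPACE{b(n)}$ ${\PStorage}(\Storage)$ automaton $M$ that uses the pushdown layer to implement a depth-first traversal of the alternation tree of $M'$. At each existential choice of $M'$, $M$ simply guesses one alternative and continues without touching the pushdown. At each universal choice of $M'$, $M$ picks one alternative to explore immediately and pushes a new frame recording (i) a label encoding the deferred alternative together with the current $M'$-control-state, and (ii) an identity-copy of the current $\Storage$-configuration obtained via $\push{\gamma,\id}$; the current input head position, which needs $\log(n)\leq b(n)$ bits, is encoded within a short sequence of additional pushdown symbols above the frame. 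When the currently explored branch reaches an accepting configuration of $M'$, $M$ performs $\pop$-operations to uncover the topmost deferred branch, restores the corresponding control state and input head position from the popped symbols, and resumes the simulation on that branch; the restored $\Storage$-configuration is automatically the correct saved one, since each frame inherited the $\Storage$-state at the moment of the universal split. If $M$'s pushdown contains only the initial frame when it reaches acceptance, $M$ itself accepts. Only one branch of $M'$'s alternation tree is active at any time, so the worktape size of $M$ is the same as that of $M'$ up to a constant factor, and the whole construction is computable in polynomial time.

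\paragraph{Main obstacle.} The subtle point in the first direction is to precisely enforce that branch A of every $\push{\gamma,f}$-step of $M'$ terminates exactly at the matching $\pop$ of $M$, and not earlier; this is handled implicitly by the alternation tree, since unmatched $\pop$-operations of $M$ are never reached by branch A unless a matching $\pop$ has already fired and returned to branch B. In the second direction, the bookkeeping of input head positions across arbitrarily many deferred branches is the main concern, but the hypothesis $b(n)\geq \log(n)$ makes each position fit in $O(\log(n))$ pushdown symbols, so nothing beyond standard encoding is needed. Correctness of both translations reduces to straightforward structural inductions on runs (for the first direction) and on alternation trees (for the second), and polynomial-time computability is immediate from the local nature of the constructions.
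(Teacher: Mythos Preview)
The paper does not give its own proof of this lemma; it simply cites it as Theorem~2.4 of Engelfriet~\cite{Engelfriet91}. So there is nothing to compare your argument against in the paper itself. Your outline follows the classical construction that Engelfriet uses, and the overall architecture is right, but there are two concrete gaps worth fixing.

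\textbf{Direction $\PStorage(\Storage)\to\Storage$: unmatched pushes.} You say that at each $\push{\gamma,f}$ the simulator existentially guesses the return state $p$ and then \emph{always} universally splits into branches~A and~B, and later add that ``pre-matching acceptance of $M$ inside branch~A is lifted to acceptance of $M'$ without involving branch~B''. These two statements are in tension: once the universal split has happened, branch~B exists and must accept. If $M$ accepts while the pushdown is still above the level of this $\push{}$ (so the matching $\pop$ never occurs), branch~A can accept, but branch~B is simulating from a phantom state $p$ that $M$ never reaches, and there is no reason B should accept. The standard fix is to let the existential guess include the option ``this push is never matched''; in that case there is \emph{no} universal split and only the A-style continuation runs, which is free to accept when $M$ does. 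Conversely, if ``matched'' was guessed, branch~A must \emph{not} accept just because $M$ accepts---it may only accept upon seeing the matching $\pop$ in state $p$. Your obstacle paragraph suggests you sensed this issue but did not state the needed case distinction.

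\textbf{Direction $\Storage\to\PStorage(\Storage)$: saving the worktape.} In the DFS simulation you push, at a universal branching, a frame recording the deferred alternative, the current $\Storage$-configuration (via $\push{\gamma,\id}$), and the input-head position. But the $b(n)$-size worktape contents at the branching point must also be saved: after exploring the first child the worktape will in general be different, and the deferred child must start from the original tape. The remedy is the same as for the input head: push $O(b(n))$ extra symbols encoding the current worktape, and restore it symbol by symbol while popping. Since $b(n)\ge\log(n)$ this fits the same mechanism you already use; without it the simulation is incorrect.

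Both fixes are local and do not change the polynomial-time nature of the translation.
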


A detailed look on Engelfriet's proof of Lemma 7.11 in~\cite{Engelfriet91} reveals
that its analogue for alternating automata holds if we replace
the role of nonemptyness by the role of control state reachability.
Moreover, the
logspace reduction of membership for auxiliary $\SPACE{log(n)}$
\Storage automata
to control state reachability for 1-way $\Storage$ automata
extends to an
$b(n)$ space-bounded reduction of
membership for auxiliary $\SPACE{b(n)}$
\Storage automata
to control state reachability for $\Storage$ automata (which now may
have size $\exp(b(n))$ when starting with an input of size $n$).
Before we prove these claims, let us first define alternating
reachability in our setting.

\begin{definition}
  Let $\Aa$ be an alternating auxiliary $\SPACE{b(n)}$ $\Storage$
  automaton. For a
  set $C$ of configurations of $\Aa$ we define
  $\mathsf{pre}^*(C)$ as the set of configurations $c$ such that there
  is a computation tree $T$ of $\Aa$
  \begin{enumerate}
  \item the root of $T$ is labelled by $c$,
  \item all leaves of $T$ are labelled by configurations $c'\in C$,
  \item for each inner node $t$ of $T$ labelled by an existential configuration
    $c$  there is exactly one successor $t'$ in $T$ and $t'$ is labelled by a
    successor $c'$ of $c$ (w.r.t $\Aa$), and
  \item for each inner node $t$ of $T$ labelled by a universal configuration
    $c$ there is, for each  successor $c'$ of $c$ (w.r.t $\Aa$) a
    successor $t'$ of $t$ labelled by $c'$.
  \end{enumerate}
  For a state $q$ of $\Aa$, we say that $q$ is  alternatingly
  (control state) reachable in $\Aa$ if the
  initial configuration of $\Aa$ belongs to
  $\mathsf{pre}^*(\{(q,\Sconfig)\mid \Sconfig$ is an
  $\Storage$-configuration$\})$.
\end{definition}
\begin{remark}
  Note that we disallow that the computation tree $T$ may contain a
  leaf labelled by some universal state $c$ not in $C$ such that no
  transition of $\Aa$ is applicable to $c$. 
  This restrictive definition is necessary for the results provided in
  the following. 
\end{remark}

\begin{definition}
  The alternating control state reachability problem for some class
  $\mathcal{C}$ of automata is the
  following.\\
  Input: $\Aa\in\mathcal{C}$, $q$ a state of $\Aa$\\
  Output: Is $q$ alternatingly reachable in $\Aa$.
\end{definition}

The following lemmas extend Engelfriet's result on the connection
between emptiness (or equivalently, control state reachability) of
nondeterministic $\Storage$ automata and membership for
nondeterministic auxiliary
$\SPACE{\log(n)}$ \Storage automata to the setting of alternating
automata.
\begin{lemma}\label{lem:AltAutomataReducibleToAlternatingLogSpace}
  Let $\Storage$ be a  storage type.
  Alternating (or nondeterministic, respectively)
  control state reachability of (1-way) \Storage automata  reduces
  to membership of alternating (nondeterministic,
  respectively) auxiliary $\SPACE{\log(n)}$ \Storage automata via
  logspace reductions.
\end{lemma}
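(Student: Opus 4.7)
The plan is to exhibit a fixed alternating auxiliary $\SPACE{\log(n)}$ $\Storage$ automaton $\Aa'$, depending only on the storage type $\Storage$, together with a logspace computable encoding $(\Aa,q)\mapsto w$. Given an instance of alternating control state reachability with $\Aa=(Q,q_0,q_f,\Delta)$ and target state $q$, the word $w$ lists the transitions in $\Delta$, marks $q_0$ as initial and $q$ as target, and indicates the partition of $Q$ into existential and universal states. This encoding is clearly logspace computable; since $\Aa'$ is fixed, the overall reduction is logspace.

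The simulator $\Aa'$ stores on its work tape the index of the currently simulated state of $\Aa$, requiring $O(\log \lvert Q\rvert)=O(\log n)$ cells, and uses its storage of type $\Storage$ to faithfully mirror the storage of $\Aa$, initialised to $\SConfigInit$. Its main loop proceeds in three deterministic phases followed by one branching phase. First, $\Aa'$ compares the stored state with the target $q$ read from $w$ and accepts on a match. Second, it computes the vector $R\in\{true,false\}^T$ of test outcomes on the current storage configuration by applying each of the finitely many tests in $T$. Third, it scans $w$ to enumerate the transitions $(p,R,p',f)\in\Delta$ whose source state matches the stored state and whose test vector matches $R$. Finally, $\Aa'$ branches over these enumerated transitions, existentially or universally according to the marking of the source state, commits to one transition in each branch, applies $f$ to its storage, overwrites the stored state with $p'$, and returns to the start of the loop. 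If no transition is applicable at a universal state which is not the target, $\Aa'$ rejects, matching the restrictive semantics of the preceding remark.

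Correctness is established by a straightforward induction on the depth of the alternating computation tree: the accepting computation trees of $\Aa'$ on $w$ correspond bijectively to the alternating computation trees witnessing reachability of $q$ from $(q_0,\SConfigInit)$ in $\Aa$. The nondeterministic variant of the claim is handled identically by treating every state as existential, so that $\Aa'$ becomes an ordinary nondeterministic auxiliary automaton; the $1$-way case is a special case in which the simulated automaton additionally consumes its own input word, which $\Aa'$ can interleave on its two-way input tape alongside the scanning of $w$.

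The main subtlety is ensuring that the bookkeeping phases (reading test results, locating matching transitions on the input) do not introduce spurious alternation; this is done by realising these phases through deterministic sequences of intermediate control states of $\Aa'$, so that alternation arises only at the dedicated branching step. A secondary issue is the correct handling of universal dead ends, which must yield rejection in order to faithfully implement the restrictive definition of alternating reachability given in the preceding remark.
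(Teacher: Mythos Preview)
Your approach coincides with the paper's: a fixed simulator $\Aa'$ reads an encoding of $(\Aa,q)$, keeps a state pointer on its $O(\log n)$ work tape, mirrors $\Aa$'s storage, and performs existential or universal simulation steps according to the type of the current simulated state.

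There is, however, a genuine gap in how you handle storage operations that may be undefined. A transition $(p,R,p',f)$ of $\Aa$ is applicable only when $f(\Sconfig)$ is defined, and definedness of $f$ is \emph{not} a test in $T$, so $\Aa'$ cannot query it. In your universal branching, a branch that ``applies $f$'' with $f$ undefined has no successor in $\Aa'$; depending on the state type this either wrongly rejects (the non-applicable transition should be ignored, not falsified) or, when \emph{no} transition at all is applicable, wrongly accepts (a universal $\Aa'$-state with no successors accepts vacuously). You acknowledge the latter case but give no mechanism for detecting it. The paper resolves both points explicitly: for each candidate transition the universal subprocess branches to an unconditionally accepting state \emph{and} to a branch attempting $f$, so if $f$ is undefined only the accepting branch survives and the transition is harmlessly skipped; dually, one extra universally-spawned subprocess performs an \emph{existential} simulation step, which succeeds iff some transition applies, so the overall universal step rejects exactly when the simulated configuration has no successor. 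Your precomputation of the test vector $R$ is a clean variant of the paper's per-transition test check, but it does not touch definedness of $f$, which is the crux of the construction. (Your final remark about interleaving an input word of $\Aa$ is also superfluous: control state reachability involves no input to $\Aa$.)
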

\begin{proof}
  Let $A(\Storage)$ denote the set of 1-way alternating  \Storage
  automata  and fix an effective encoding $\overbar{A}(\Storage)$ of this
  set as binary strings. We
  write $\overbar{\Mm} \in \bar A(\Storage)$ for the encoding of the
  automaton $\Mm \in A(\Storage)$.  Analogously, we write
  $\overbar{q}$ for
  the encoding of some state $q$.
  We define an auxiliary $\SPACE{\log(n)}$ \Storage automaton
  which we call $\Aa$ such that
  $L(\Aa)=\{\overbar{\Mm}\#\overbar{q} \mid \Mm\in A(\Storage), q$ a state of $\Mm$ and
  $q$ alternating reachable by $\Mm\}$.
  Given an input string $s$, $\Aa$ first checks that
  $s=\overbar{\Mm}\#\overbar{q}$ for some $\Mm\in A(\Storage)$ and some state $q$ of $\Mm$. Now $\Aa$
  simulates $\Mm$
  storing two pointers on its tape, one called state pointer and one
  called transition pointer.

  As initialisation the state pointer is set to the position of
  the input where the initial state of $\Mm$ is encoded.

  We now iterate the following case distinction.
  If the state pointer points to (the encoding of) $q$, $\Aa$
  accepts.
  Otherwise, the state pointer points to some state $q'$. Scanning the
  input string we determine whether $q'$ is an existential state of
  $\Mm$. If this is the case, we do an existential simulation step,
  otherwise we do a universal simulation step.
  \begin{itemize}
  \item \emph{Existential simulation step.}
    The state pointer points to some state $q'$.
    Now we guess a transition  applicable to the current configuration
    of $\Mm$ (which is $(q',\Sconfig)$ for $\Sconfig$ the current
    storage configuration of $\Aa$.
    This is done by setting the transition pointer to some value $i$ such
    that at position $i$ in the input string the encoding of a
    transition $\delta=(p, t, f, r$) starts.
    $\Aa$ now checks that $p=q'$. Then it checks that the test formula
    $t$ is satisfied by the current storage configuration. If this is
    not the fact, $\Aa$ rejects. Otherwise it applies $f$ to the
    storage and changes the state pointer such that it points to the
    encoding of $r$.
  \item \emph{Universal simulation step.}
    The state pointer points to some universal state $q'\neq q$ and
    the current $\Storage$ configuration (of $\Aa$ and of $\Mm$) is
    $\Sconfig$.
    Recall that $q$ is alternatingly reachable from $(q',\Sconfig)$ if
    there is a computation tree where the root is labelled by
    $(q',\Sconfig)$ and is not a leaf (because $q\neq q'$) and $q$ is
    alternatingly reachable
    from every successor of $(q',\Sconfig)$ in the computation tree.

    In order to guarantee that $(q',\Sconfig)$ has a successor
    configuration with respect to $\Mm$, we universally spawn a
    subprocess that performs an existential simulation step.
    If this branch accepts, we still have to show
    that for any applicable transition, $q$ is alternatingly reachable
    from the resulting configuration.
    For this purpose the transition pointer iterates over all
    positions in the encoding $\overbar{\Mm}$ of $\Mm$.
    As soon as this iteration has been finished, this main process
    accepts. During the iteration it may spawn subprocesses as
    follows.

    If the current position of the pointer points to a transition
    $\delta=(p, t, f, r)$ we check whether $q'=p$. In this case we
    universally spawn a subprocess. It
    checks whether the test formula $t$ is satisfied by the current
    storage configuration. If not, the process accepts. Otherwise,
    $\Aa$ universally branches to an accepting branch and another branch
    by first applying $f$ to the current storage configuration and
    then setting the
    state pointer to the position of the encoding of $r$ and starting
    the next simulation step.
    It is straightforward to see that one of the following holds.
    \begin{enumerate}
    \item $f$ is not applicable to the current storage configuration,
      thus $\delta$ does not provide a successor of the current
      configuration of $\Mm$. In this case, the universal branching
      only spawns one accepting branch whence the subprocess dealing
      with $\delta$ accepts.
    \item $f$ is applicable to the current storage configuration. Then
      the subprocess applying $f$ accepts if and only if $q$ is
      alternatingly reachable from the
      $\delta$-successor of the current configuration.
    \end{enumerate}
  \end{itemize}
  It is straightforward to prove that $\Aa$ accepts
  $\overbar{\Mm}\#\overbar{q}$
  if and only if $q$ is alternatingly reachable by $\Mm$.
  Moreover, $\Aa$ only needs universal states for the universal
  simulation step. Thus, restricting the input to nondeterministic
  $\Storage$ automata, the simulating machine $\Aa$ will also be
  nondeterministic instead of alternating. \qed
\end{proof}

Engelfriet also provided a logspace reduction in the other direction
in the nondeterministic case. We extend this result again to the
alternating case and to auxiliary $\SPACE{b(n)}$ \Storage automata for
arbitrary space bound $b$.

\begin{lemma}\label{lem:spaceboundedTMtoReachabilityofAutomata}
  Let $b(n)\geq \log(n)$ and $\Mm$ be an
  alternating (or nondeterministic) auxiliary
  $\SPACE{b(n)}$ \Storage automaton.
  The membership problem for $\Mm$ is
  reducible to
  alternating (nondeterministic, respectively) control state
  reachability  for  \Storage automata via a
  $\DSPACE(b(n))$-computation.
\end{lemma}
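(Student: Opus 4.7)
The plan is the standard configuration-graph construction, adapted so that the $\Storage$-component of a configuration of $\Mm$ is handled by the $\Storage$-component of the simulating automaton, while everything else (finite control, worktape contents and head, input head position) is moved into the finite control of an $\Storage$-automaton $\Aa_{\Mm,w}$. Given an input $w$ of length $n$, the non-storage part of any reachable configuration of $\Mm$ on $w$ is described by a tuple $(q,u,i,j)$ where $q$ is the finite control state of $\Mm$, $u\in\Gamma^{\leq b(n)}$ the worktape contents, $i\leq b(n)$ the worktape head position and $j\leq n$ the input head position. Let $Q_{\Aa}$ be the set of all such tuples together with a distinguished sink state $q_f$; then $\lvert Q_{\Aa}\rvert\leq \exp(O(b(n)))$ and each element has a description of length $O(b(n))$.

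Next I would define $\Aa_{\Mm,w}$ as the $\Storage$-automaton whose state set is $Q_{\Aa}$, whose initial state encodes the initial configuration of $\Mm$ on $w$ (empty worktape, input head on position~$1$), and whose transitions are obtained by transcribing the transition table of $\Mm$ locally: for each transition rule of $\Mm$ that, depending on the currently scanned input symbol $w[j]$, the currently scanned worktape symbol $u[i]$ and the outcome of a set $R$ of $\Storage$-tests, moves $\Mm$ into some successor configuration by applying an $\Storage$-operation $f\in F$, one adds to $\Aa_{\Mm,w}$ a transition $((q,u,i,j),R,(q',u',i',j'),f)$ where $(q',u',i',j')$ is the resulting non-storage configuration. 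Existential/universal states of $\Mm$ translate to existential/universal states of $\Aa_{\Mm,w}$ (in the nondeterministic case there are only existential ones). Finally, one adds $\id$-transitions into $q_f$ from every tuple whose first component is an accepting state of $\Mm$. By construction, $\Mm$ accepts $w$ (alternatingly, resp.\ nondeterministically) iff $q_f$ is (alternatingly, resp.\ nondeterministically) control-state reachable in $\Aa_{\Mm,w}$.

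The remaining task is to check that the transformation $(\Mm,w)\mapsto (\Aa_{\Mm,w},q_f)$ can be carried out in $\DSPACE(b(n))$. Since each state $(q,u,i,j)\in Q_{\Aa}$ has a description of length $O(b(n))$ and the transition rules of $\Mm$ are of constant size, a deterministic Turing machine with a worktape of size $O(b(n))$ can enumerate all pairs of states and, for each pair, iterate over all transition rules of $\Mm$ to decide locally (using $O(b(n))$ space for bookkeeping of $u,u',i,j,\dots$) whether the corresponding transition of $\Aa_{\Mm,w}$ should be output. No global data structure of super-$b(n)$ size is required. The existential/universal marking of each state is directly read off from the marking of its first component in $\Mm$.

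The main technical point is to be precise about the semantics of alternating control-state reachability for $\Aa_{\Mm,w}$: we need the convention that a universal configuration with no applicable transition is not accepted (cf.~the remark after the definition of $\mathsf{pre}^*$ in the alternating setting), because exactly this convention matches the accepting semantics of an alternating auxiliary $\SPACE{b(n)}$ storage automaton. Once this is matched up on both sides, the induction along any computation tree of $\Mm$ translates verbatim into a computation tree witnessing $\mathsf{pre}^*$-membership of the initial configuration of $\Aa_{\Mm,w}$, and conversely. I do not foresee any genuine difficulty beyond carefully writing out the translation of transitions and verifying the space bound of the reduction. \qed
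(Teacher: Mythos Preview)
Your proposal is correct and follows essentially the same configuration-graph construction as the paper: encode the non-storage part $(q,u,i,j)$ of a configuration of $\Mm$ into the finite control of an $\Storage$-automaton and let the storage component carry over unchanged. The only cosmetic difference is that the paper first normalises $\Mm$ to have a single accepting state entered only with blank tape and heads at position~$1$ (so the target state is $(q_f,\Box^{b(\lvert w\rvert)},1,1)$), whereas you add $\id$-transitions into a fresh sink $q_f$; your variant is fine provided you also make the accepting-state tuples existential (or assume accepting states of $\Mm$ have no outgoing transitions), so that a universal accepting tuple is not forced to satisfy its other successors as well.
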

\begin{proof}
  Let $\Gamma$ be the tape alphabet with blank symbol $\Box$,
  $\Sigma$ the input alphabet,
  $Q$ the state set and
  $q_0\in Q$ the initial state
  of $\Mm$.
  Without loss of generality $\Mm$ has only $1$ accepting state
  and it enters this state if and only if the tape is completely blank
  and the heads of the input and the reading tape are on the first
  cell.

  On input a word $w$,
  we construct an automaton
  $\Aa$ with state set
  $Q\times \Gamma^{b(\lvert w \rvert)} \times
  \{1,2,\dots b(\lvert w \rvert)\} \times \{1,2,\dots, \lvert w
  \rvert\}$.
  Note that each configuration fits into space $O(b(\lvert w
  \rvert))$.
  The initial state is
  $(q_0, w\Box^{b(\lvert w \rvert)-\lvert w \rvert},1,1)$.
  Some state $c=(q,\gamma_1\dots \gamma_{b(\lvert w \rvert)}, i,
  j)$ represents the configuration of $\Mm$ where the work
  tape contains the letters $\gamma_1\dots \gamma_{b(\lvert w
    \rvert)}$, $\Mm$ is in state $q$ the head of the work tape
  is at position $i$ and the head of the input tape is at position
  $j$. This state is an existential one if and only if  $q$ is an
  existential state of $\Mm$. $\Aa$ has a transition
  $(c,t,f,c')$
  to state
  $c'=(q', \gamma_1'\dots, \gamma_{b(\lvert w \rvert)}', i', j')$
  if and
  only if $\Mm$ has a transition with test-formula $t$ and storage
  operation $f$  whose application would translate configuration $c$
  to configuration  $c'$ (for all storage configurations  where $t$ is
  satisfied and $f$ is applicable).
  The final state of $\Aa$ is
  $c_f=(q_f, \Box^{b(\lvert w \rvert)}, 1, 1)$.

  It is straightforward to prove that $c_f$ is alternatingly reachable
  by $\Aa$  if
  $\Mm$ accepts $w$.
  Note that $\Aa$ contains universal states if and only if $\Mm$
  contains universal states.\qed
\end{proof}

Analogously to Engelfriet's proof that a pushdown can replace
alternation, we now investigate tradeoffs concerning
the storage type $\CStorageZero$.
This proof is inspired by the  $\PSPACE$-hardness proof for emptiness
of alternating finite automata recently published by Jancar and Sawa \cite{JancarS07}.

\begin{lemma}\label{lem:DSPACE-ALOGSPACE+C}
  Let $b(n)=\exp_{k}(n)$ for some $k\geq 0$.
  Let $\Mm$
  be a deterministic auxiliary $\SPACE{b(n)}$ automaton, i.e., a
  deterministic  $\SPACE{b(n)}$ Turing-machine.
  We can compute in logspace 
  an
  alternating auxiliary $\SPACE{\log(b(n))}$ \CStorageZero automaton
  $\Aa$ such that $\Mm$ accepts $w$ iff $\Aa$
  accepts $w$ for all $w\in\Sigma^*$.
\end{lemma}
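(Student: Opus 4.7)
The plan is to adapt Jancar and Sawa's construction from their $\PSPACE$-hardness proof for emptiness of one-letter alternating finite automata \cite{JancarS07}, lifting it from polynomial space to the tower bound $b(n)=\exp_k(n)$. The counter $\CStorageZero$ takes the role of their one-letter input: its unary value encodes a time step $t$ in the unique deterministic run of $\Mm$ on $w$, and the zero-test supplies the base case of the recursion. The $\log(b(n))$-sized work tape holds, in binary, a tape-position $p\le b(n)$ (which fits, since $\log(b(n))$ bits suffice to address $b(n)$ cells), a control state of $\Mm$, and a tape symbol.

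The central ingredient is an alternating subroutine $\mathsf{Cell}(t,p,\sigma)$, with $(p,\sigma)$ on the work tape and $t$ the current counter value, that accepts iff at time $t$ of $\Mm$'s run on $w$ the cell at position $p$ contains $\sigma$. If the counter is zero, the routine compares $\sigma$ with the $p$-th input symbol via a binary-counted two-way scan of the input tape (returning blank if $p>|w|$). Otherwise it decrements the counter, existentially guesses the local picture at time $t-1$ (head position $h$, state $q$, and the three symbols at cells $h-1,h,h+1$), checks via $\Mm$'s transition table that this local picture indeed writes $\sigma$ at position $p$ at time $t$, and universally forks into $\mathsf{Cell}(t-1,\cdot,\cdot)$ for each neighbour together with analogous predicates $\mathsf{Head}(t-1,h)$ and $\mathsf{State}(t-1,q)$ (defined with the same recursive pattern). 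Acceptance of $w$ by $\Mm$ is then equivalent to the top-level query $\mathsf{State}(T,q_{acc})$, where, WLOG, $q_{acc}$ is the unique accepting state and $T$ is the acceptance time.

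The main obstacle is that $T$ can be as large as $2^{O(b(n))}$, far beyond what $\log(b(n))$ bits of work tape can represent, so $\Aa$ cannot precompute $T$. I would sidestep this by having $\Aa$ begin with an existential initialisation that nondeterministically increments the counter an arbitrary number of times (at each step choosing between ``increment further'' and ``stop''), thereby existentially guessing $T$. By determinism of $\Mm$: if $\Mm$ accepts $w$ at time $T_{acc}$, then guessing $T=T_{acc}$ makes all recursive sub-verifications succeed; if $\Mm$ does not accept, no guess can, since the state at any time is uniquely determined and never equals $q_{acc}$. Each activation of the subroutine uses only $O(\log b(n))$ work-tape cells (for $p$, $\sigma$, the guessed local picture, and constant bookkeeping), the counter decreases monotonically to its zero-test base case along every branch, and the alternation depth is bounded by the guessed $T$, which is permissible for alternating auxiliary automata. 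The translation $\Mm \mapsto \Aa$ is logspace because the transition lookup, the fork structure, and the existential-increment gadget are all hard-coded from $\Mm$'s description, and correctness follows by induction on $t$ from determinism of $\Mm$.
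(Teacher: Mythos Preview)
Your overall strategy is the paper's: the counter stores a time step, the $\log(b(n))$-bit work tape stores a cell index, and an alternating backward recursion verifies a local claim about $\Mm$'s computation tableau; the existential initialisation that increments the counter an unbounded number of times to guess $T$, and the zero-test base case that compares against the input, are exactly what the paper does.

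There is, however, a concrete gap in your recursive step for $\mathsf{Cell}(t,p,\sigma)$. You guess the head position $h$ at time $t{-}1$ together with the three symbols at cells $h{-}1,h,h{+}1$, and then ``check that this local picture writes $\sigma$ at position $p$''. But when $p\notin\{h{-}1,h,h{+}1\}$---the generic case---the window around the head says nothing whatsoever about cell $p$; the only correct consequence is that cell $p$ is unchanged, which would require a branch $\mathsf{Cell}(t{-}1,p,\sigma)$ that your universal fork list (``$\mathsf{Cell}(t{-}1,\cdot,\cdot)$ for each neighbour together with $\mathsf{Head}(t{-}1,h)$ and $\mathsf{State}(t{-}1,q)$'') does not contain. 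As written, your procedure would accept $\mathsf{Cell}(t,p,\sigma)$ for \emph{any} $\sigma$ once $p$ is far from the head, provided the guessed window around $h$ is verified.

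The paper avoids this by never separating head position and state from cell content: it works over the enriched alphabet $Z=\Gamma\cup(\Gamma\times Q)$, so the ``content'' of a cell already records whether the head is there and in which state. With this encoding the content of cell $p$ at time $t$ is determined \emph{locally} by the contents of cells $p{-}1,p,p{+}1$ at time $t{-}1$ (standard Cook-style locality of a single-tape TM). The recursion is then a single predicate, the existential guess is just a triple $(z_1,z_2,z_3)\in\mathsf{preds}(z)$, and the universal fork is exactly three-way onto positions $p{-}1,p,p{+}1$; no global head position $h$ is ever guessed or stored. Your scheme can be repaired (add the ``head not at $p$'' branch verifying $\mathsf{Head}(t{-}1,h)$ for some $h\neq p$ together with $\mathsf{Cell}(t{-}1,p,\sigma)$), but the enriched-alphabet formulation is both simpler and what the paper actually uses.
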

\begin{proof}
  Assume that
  $\Mm$ has state set $Q$,
  initial state $q_0\in Q$,
  final state   $q_f\in Q$  and
  tape alphabet  $\Gamma$.
  The main states of $\Aa$ come from the set
  $Z=\Gamma\cup(\Gamma\times Q)$.
  Moreover the state set $Q'$ of $\Aa$ contains $p(\lvert Z\rvert)$
  many auxiliary states for some polynomial $p$. For simplicity of the
  presentation we omit the formal specification of these states.
  Our goal is to construct an automaton $\Aa$
  whose configurations are of the form   $(z,t,i)\in Z\times
  \{0,1\}^{\log(b(n))}\times \N$ where
  $z$ is the current state of
  $\Aa$,
  $t$ is the content of its tape (which we identify with a
  binary encoded natural number between $0$ and $b(n)$) and
  $i$ is the current counter value.

  Our goal is to define $\Aa$ in such a way that
  $\Aa$ accepts from configuration
  $(z,t,i)$
  on
  input $w$ if at time step $i$ of the computation of $\Mm$ at the
  $t$-th cell of $\Mm$'s tape, the content is $z$ (where we say
  that the $t$-th cell content is $(q,\gamma)\in Q\times \Gamma$ if
  the cell contains $\gamma$ and $\Mm$ is reading this cell in state $q$).
  Let $\mathsf{preds}(z)$ be the set of triples $(z_1,z_2,z_3)$
  such that the one-step computation of $\Mm$ on the tape described by
  $z_1z_2z_3$ leads to the replacement of $z_2$ by $z$.
  If $\Aa$ is in some configuration $(z,t,i)$ with $0 < t < b(n)$ and
  $i>0$, it
  nondeterministically chooses the hopefully correct triple
  $(z_1,z_2,z_3)\in\mathsf{preds}(z)$ and universally branches to
  configurations $(z_1,t-1,i-1), (z_2,t,i-1), (z_3,t+1,i-1)$.
  Note that a finite amount of auxiliary states suffices to calculate
  the tape content $t+1$ and $t-1$ from $t$.
  We now specify the acceptance condition.
  Configurations $(\Box, 0, i)$ and
  $(\Box, b(n), i)$ are accepting (for all $i\in\N$)
  while all other configurations with tape $t=0$ or
  $t=b(n)$ are rejecting (again, only finitely many states are
  needed to check whether we are in one of these configurations).
  Assuming that the input is $w=a_1\dots a_n$,
  let configuration
  $((q_0,a_1), 1, 0)$,  configurations
  $(a_i, i, 0)$ for $2 \leq i \leq n$, and configurations
  $(\Box, j, 0)$ for $j>n$
  be all accepting.
  All other configurations with counter value $0$ are
  rejecting. Note that this acceptance condition relies on the input
  and can be checked with finitely many auxiliary states.
  In a configuration $(z,t,0)$ we parse the input word to the
  $t$-th letter and compare z to this letter (if $w$ ended before,
  then $z$ has to be the blank symbol $\Box$).

  An easy induction on $i$ shows that there is an accepting
  computation of $\Aa$
  starting in  $(z,t,i)$ with input $w$ if and only if
  in the computation of $\Mm$ on $w$ the $t$-th letter
  of the $i$-th configuration is $z$ (where as before
  $z=(q,\gamma) \in Q\times \Gamma$ means that
  the content of the  $t$-th cell is $\gamma$,
  $\Mm$'s head is positioned at the $t$-th cell and
  $\Mm$ is in state $q$).

  Now we add to $\Aa$ an initialisation phase that,
  on input $w$, guesses
  a letter $\gamma\in\Gamma$,
  a number $t\leq b(\lvert w \rvert)$ and
  some number $i\in\N$ switching to
  configuration
  $((q_f,\gamma), t, i)$. Now $\Aa$ accepts $w$ if and only if the
  computation of $\Mm$ on $w$ is accepting.\qed
\end{proof}

\subsection{Control-State Reachability on $\HOCSM[k]$}
\label{app:applications}

We prove the part of Theorem \ref{thm:HOCSM-Reachability} on 
reachability. The claim for alternating reachability follows directly
from this result as we will explain in Section \ref{sec:altCont}.
We determine the exact complexity of reachability on
\HOCSM[k].
For the base case we use a result mentioned by
G\"oller \cite{Goller08}.

\begin{lemma}\label{lem:PSPACE-c-of-ReachCPlus}
  Alternating control state reachability for alternating
  (1-way) $\CStorageZero$ automata is $\PSPACE$-complete.
\end{lemma}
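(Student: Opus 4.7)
My plan has two parts.

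For \PSPACE-hardness, I would chain Lemma~\ref{lem:DSPACE-ALOGSPACE+C} (specialised to $k=0$, so that $b(n)=n=\exp_0(n)$) with Lemma~\ref{lem:spaceboundedTMtoReachabilityofAutomata} (instantiated at $b(n)=\log n$). A generic \PSPACE-complete language is decided by a deterministic $\SPACE{n}$ Turing machine $\Mm$; Lemma~\ref{lem:DSPACE-ALOGSPACE+C} turns $\Mm$ in logspace into an equivalent alternating auxiliary $\SPACE{\log n}$ \CStorageZero automaton $\Aa$, and Lemma~\ref{lem:spaceboundedTMtoReachabilityofAutomata} in turn reduces membership for $\Aa$, again in $\DSPACE(\log n)$, to alternating control state reachability for a polynomial-size $1$-way \CStorageZero automaton. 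Composing the two logspace reductions gives a logspace reduction from an arbitrary \PSPACE problem to our problem, and hence \PSPACE-hardness.

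For containment in \PSPACE, I would first apply Lemma~\ref{lem:AltAutomataReducibleToAlternatingLogSpace} to reduce, in logspace, alternating control state reachability for $1$-way \CStorageZero automata to membership for alternating auxiliary $\SPACE{\log n}$ \CStorageZero automata. A configuration of the latter decomposes as a ``head'' part $(q,t,i)$ -- state, worktape contents, and input-head position -- of polynomial bit-size admitting only polynomially many values, together with an a priori unbounded counter $c\in\N$; after absorbing $t,i$ and the input symbol into the control state, such machines are exactly alternating one-counter automata with $0$-test over a polynomial state set, whose alternating reachability problem is in \PSPACE by G\"oller \cite{Goller08}. If one prefers a direct proof, one adapts Jancar and Sawa's pumping idea from \cite{JancarS07}: a winning strategy for the existential player may be chosen so that the counter stays below $2^{p(n)}$ for a polynomial $p$, because any sub-strategy along which the counter stays strictly above a polynomial threshold must eventually revisit a head configuration and is therefore prunable without sacrificing reachability of $q_f$; under this bound the alternating reachability game lives on a polynomial-bit configuration space and admits a standard \PSPACE algorithm.

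The hard part will be carrying out that pumping step in the presence of $0$-tests. Unlike $\HOCSO[2]$ in Section~\ref{sec:Returns}, $\CStorageZero$ is not monotone: dropping or pumping counter values can switch the outcome of an $\Temptystack$-test and thereby enable or disable transitions taken by either player in the alternating game. The pigeonhole-on-head-configurations step must therefore be performed on sub-strategies rather than on linear runs, and the replacement sub-strategy has to be designed so that every $0$-test encountered along the original strategy is still triggered with the correct truth value after pumping; working out exactly at which threshold this can be guaranteed is the technical core of the \PSPACE upper bound.
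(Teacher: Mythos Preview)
The paper does not prove this lemma at all; it simply attributes the result to G\"oller~\cite{Goller08} (see the sentence immediately preceding the lemma). So there is no ``paper's own proof'' to compare against---your proposal supplies a proof where the paper only cites one.

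Your hardness argument is correct and in fact anticipates exactly the chain the paper uses for the general $\HOCSM[k]$ case in the proposition that follows this lemma: Lemma~\ref{lem:DSPACE-ALOGSPACE+C} (with $k=0$) followed by Lemma~\ref{lem:spaceboundedTMtoReachabilityofAutomata} (with $b(n)=\log n$). One wording issue: not every \PSPACE-complete language is decided by a $\SPACE{n}$ machine; you should either say ``fix a \PSPACE-complete language decidable in linear space'' (TQBF in its standard encoding works) or note that the proof of Lemma~\ref{lem:DSPACE-ALOGSPACE+C} goes through unchanged for $b(n)=n^d$, giving an auxiliary tape of size $d\log n = O(\log n)$.

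For containment, your first route is slightly roundabout: Lemma~\ref{lem:AltAutomataReducibleToAlternatingLogSpace} sends the problem to membership for an alternating auxiliary $\SPACE{\log n}$ $\CStorageZero$ automaton, and after absorbing the tape and head into the state you are back at alternating reachability on a (polynomially larger) one-counter automaton---so you end up citing G\"oller anyway, which is precisely the paper's dependency. You could drop the detour and cite~\cite{Goller08} directly. Your alternative direct sketch is a reasonable outline of how G\"oller's argument goes, and your identification of the $0$-test as the obstacle to naive monotonicity-based pumping is exactly right; carrying that out carefully is the content of~\cite{Goller08}.
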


\begin{proposition}
  Control state reachability for \HOCSM[k] is
  $\DSPACE(\bigcup_{d\in\N} \exp_{k-2}(n^d))$-complete  for all $k\geq 2$.
\end{proposition}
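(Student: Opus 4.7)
The plan is to apply Engelfriet's machinery of auxiliary $\SPACE{b(n)}$ storage automata in a pattern analogous to Theorem~\ref{thm:HOCSO-Reachability}, but now with $\CStorageZero$ in place of $\CStorage$ as the base storage. The key new ingredient over the $\HOCSO$ case is Lemma~\ref{lem:DSPACE-ALOGSPACE+C}, which exploits the $0$-test to give a Jancar--Sawa style alternating simulation of deterministic space-bounded Turing machines by $\CStorageZero$ automata.

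For the upper bound, I start from control state reachability for $\HOCSM[k]=\PStorage^{k-1}(\CStorageZero)$ and reformulate it via Lemma~\ref{lem:AltAutomataReducibleToAlternatingLogSpace} as membership of a nondeterministic auxiliary $\SPACE{\log n}$ $\PStorage^{k-1}(\CStorageZero)$ automaton. Lemma~\ref{lem:Engelfriet2.4} then trades the topmost pushdown for alternation, giving an alternating auxiliary $\SPACE{\log n}$ $\PStorage^{k-2}(\CStorageZero)$ automaton. Next, $k-2$ applications of Lemma~\ref{lem:Engelfriet2.2} peel off the remaining pushdown levels, each exchanging one level for one exponential in the space bound, to arrive at an alternating auxiliary $\SPACE{\bigcup_{d\in\N}\exp_{k-3}(n^d)}$ $\CStorageZero$ automaton (for the base case $k=2$ no peeling is required and the space bound remains $\log n$). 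To decide membership for such an automaton on an input of size $n$, Lemma~\ref{lem:spaceboundedTMtoReachabilityofAutomata} provides a $\DSPACE(\exp_{k-3}(n^d))$ reduction to alternating control state reachability on a $\CStorageZero$ automaton of size $\exp(O(\exp_{k-3}(n^d)))=\exp_{k-2}(n^d)$. Constructing this automaton on the fly and running the $\PSPACE$ algorithm guaranteed by Lemma~\ref{lem:PSPACE-c-of-ReachCPlus} on it stays within $\DSPACE(\exp_{k-2}(n^d))$; taking the union over $d$ yields the claimed upper bound.

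For the matching hardness I reverse the chain. Given a language $L\in\DSPACE(\exp_{k-2}(n^c))$ (by padding it suffices to treat $c=1$), Lemma~\ref{lem:DSPACE-ALOGSPACE+C} supplies a logspace reduction of $L$ to membership for an alternating auxiliary $\SPACE{\log(\exp_{k-2}(n))}=\SPACE{\exp_{k-3}(n)}$ $\CStorageZero$ automaton (with the convention $\exp_{-1}(n)=\log n$ when $k=2$). Then $k-2$ reverse applications of Lemma~\ref{lem:Engelfriet2.2} build up the pushdown tower, contracting the space bound by one exponential each time, to reach an alternating auxiliary $\SPACE{\log n}$ $\PStorage^{k-2}(\CStorageZero)$ automaton. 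A reverse application of Lemma~\ref{lem:Engelfriet2.4} produces a nondeterministic auxiliary $\SPACE{\log n}$ $\PStorage^{k-1}(\CStorageZero)$ automaton, and Lemma~\ref{lem:spaceboundedTMtoReachabilityofAutomata} with $b(n)=\log n$ finishes the reduction to nondeterministic control state reachability for $\HOCSM[k]$.

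The main obstacle is the careful bookkeeping of the exponential ladder through the iterated applications of Lemma~\ref{lem:Engelfriet2.2}: one has to correctly absorb the union over the multiplier $d$ at every level so that after $k-2$ steps one really lands at $\bigcup_{d}\exp_{k-3}(n^d)$ on the way up and at $\log n$ starting from $\exp_{k-3}(n)$ on the way down, and one has to handle the degenerate $k=2$ case where no application of Lemma~\ref{lem:Engelfriet2.2} occurs. Apart from this combinatorial accounting, every step is a direct instantiation of the preceding lemmas; the essential asymmetry between $\HOCSM[k]$ and $\HOCSO[k]$ --- $\DSPACE$ instead of $\DTIME$ at the same exponential height --- is entirely captured in Lemma~\ref{lem:DSPACE-ALOGSPACE+C}, whose applicability rests on the availability of the $0$-test in $\CStorageZero$.
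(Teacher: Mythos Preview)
Your proposal is correct and uses the same toolkit as the paper (Lemmas~\ref{lem:AltAutomataReducibleToAlternatingLogSpace}, \ref{lem:Engelfriet2.2}, \ref{lem:Engelfriet2.4}, \ref{lem:spaceboundedTMtoReachabilityofAutomata}, \ref{lem:DSPACE-ALOGSPACE+C}, \ref{lem:PSPACE-c-of-ReachCPlus}), with one organisational difference worth noting. For the upper bound the paper argues by induction on $k$: it applies Lemma~\ref{lem:Engelfriet2.2} \emph{once} to shed a single pushdown level, then uses Lemma~\ref{lem:Engelfriet2.4} again in the reverse direction and Lemma~\ref{lem:spaceboundedTMtoReachabilityofAutomata} to land at a control state reachability instance for $\HOCSM[(k-1)]$ of exponential size, and invokes the induction hypothesis. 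You instead iterate Lemma~\ref{lem:Engelfriet2.2} $k-2$ times in one go to strip all higher pushdown levels and reach an alternating auxiliary $\SPACE{\exp_{k-3}(n^d)}$ $\CStorageZero$ automaton directly, then invoke Lemma~\ref{lem:PSPACE-c-of-ReachCPlus} once. Your route is a straightforward unrolling of the paper's induction; it is slightly more direct but, as you correctly flag, demands more explicit bookkeeping of the union over the multiplier $d$ across the iterated applications. For the hardness direction the two arguments are identical.
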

\begin{proof}
  For containment,
  let us first consider the case $k=2$.
  Given a $2$-\HOCSM $\Aa$  and a state $q$, control state
  reachability reduces by
  Lemma \ref{lem:AltAutomataReducibleToAlternatingLogSpace}
  to a membership
  problem for a (nondeterministic) auxiliary (2-way) $\SPACE{\log(n)}$
  ${\PStorage}(\CStorageZero)$
  automaton.
  Due to Lemma \ref{lem:Engelfriet2.4} this automaton
  can be translated into
  an alternating auxiliary $\SPACE{\log(n)}$ $\CStorageZero$
  automaton.
  Due to Lemma
  \ref{lem:spaceboundedTMtoReachabilityofAutomata}, membership for
  this machine is logspace reducible
  to alternating control state reachability on  alternating (1-way)
  $\CStorageZero$ automata which by
  Lemma \ref{lem:PSPACE-c-of-ReachCPlus} is solvable in
  $\PSPACE = \DSPACE(\bigcup_{d\in\N} \exp_{0}(n^d))$.

  Now we proceed by induction on $k$.
  Given a \HOCSM[k] $\Aa$ ($k\geq 3$) and a state $q$, control state
  reachability reduces by
  Lemma \ref{lem:AltAutomataReducibleToAlternatingLogSpace}
  to a membership
  problem for a (nondeterministic)
  auxiliary (2-way) $\SPACE{\log(n)}$ $\PStorage^{k-1}(\CStorageZero)$
  automaton.
  Due to Lemma \ref{lem:Engelfriet2.4} this machine can be translated into
  an alternating auxiliary $\SPACE{\log(n)}$
  $\PStorage^{k-2}(\CStorageZero)$ automaton.
  We apply Lemma \ref{lem:Engelfriet2.2} and obtain an equivalent
  alternating auxiliary $\SPACE{n^d}$
  $\PStorage^{k-3}(\CStorageZero)$ automaton (for some $d\in\N$).
  Again with Lemma \ref{lem:Engelfriet2.4} this is
  translated to a nondeterministic auxiliary
  $\SPACE{n^d}$
  $\PStorage^{k-2}(\CStorageZero)$ automaton.
  Using the polynomial-space reduction from
  Lemma \ref{lem:spaceboundedTMtoReachabilityofAutomata} we obtain a
  state $q'$ and a
  \HOCSM[(k-1)] $\Aa'$ of size $\exp(O(\lvert \Aa\rvert^d))$ such that $q$
  is reachable in $\Aa$ if and only if $q'$ is
  reachable in $\Aa'$. By induction hypothesis the latter is
  decidable
  in space $\exp_{k-3}(\lvert \Aa' \rvert^{d'})$ for some
  $d'\in\N$. Thus, in terms of $\lvert \Aa \rvert$
  the space is bounded by
  $\exp_{k-3}( (\exp( O(\lvert \Aa \rvert)^d)^{d'})) =
  \exp_{k-2}(O(\lvert \Aa \rvert)^d)$.
  This completes the containment proof.

  We now prove hardness.
  Recall that Lemma  \ref{lem:DSPACE-ALOGSPACE+C} provided a reduction
  of any membership problem in
  $\DSPACE(\exp_{k-2}(n^d))$ ($d\in\N$) to a membership problem for
  an alternating auxiliary $\SPACE{\exp_{k-3}(n^d)}$ \CStorageZero automaton.
  Due to Lemma \ref{lem:Engelfriet2.2} this can be reduced
  to a membership problem for an
  alternating auxiliary $\SPACE{\log(n^d)}$
  $\PStorage^{k-2}(\CStorageZero)$ automaton. Furthermore, by Lemma
  \ref{lem:Engelfriet2.4} this reduces to a membership problem for a
  nondeterministic auxiliary $\SPACE{\log(n^d)}$
  $\PStorage^{k-1}(\CStorageZero)$ automaton. Finally,
  due to Lemma
  \ref{lem:spaceboundedTMtoReachabilityofAutomata},
  there is a polynomial time reduction of this problem to
  a control state
  reachability problem for a (1-way) $\PStorage^{k-1}(\CStorageZero)$
  automaton of size $O(n^{d})$,
  i.e., reachability for \HOCSM[k].\qed
\end{proof}

\subsection{Control-State Reachability on \HOCSO[k]}
\label{sec:subsechocsoReach}

Based on our result that control state reachability for $\HOCSO[2]$
is in
$\PTIME$ (Proposition\,\ref{prop:ReachHOCS-0}), Engelfriet's machinery allows to
determine the complexity of reachability in $n$-\HOCSO
inductively. This proves the first half of Theorem
\ref{thm:HOCSO-Reachability}. The claim on alternating reachability is
proved in the following section. 

\begin{proposition}\label{prop:HOCSO-Reachability}
  For $k\geq 2$,
  the control state reachability problem for \HOCSO[k] is in
  $\DTIME(\bigcup_{d\in\N}\exp_{k-2}(n^d))$.
\end{proposition}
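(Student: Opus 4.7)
My plan is to mirror the containment argument of Theorem~\ref{thm:HOCSM-Reachability}, replacing the storage $\CStorageZero$ by $\CStorage$ throughout and exchanging $\DSPACE$-bounds for $\DTIME$-bounds. The proof proceeds by induction on $k \geq 2$. The base case $k=2$ is already handled by Proposition~\ref{prop:ReachHOCS-0}: control state reachability on $\HOCSO[2]$ lies in $\PTIME = \DTIME(\bigcup_{d\in\N} n^d) = \DTIME(\bigcup_{d\in\N}\exp_0(n^d))$.

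For the induction step, fix $k \geq 3$ and assume the claim for $k-1$. Given a control state reachability instance $(\Aa,q)$ for $\HOCSO[k]$ of size $n$, I would apply Engelfriet's machinery in the following sequence. First, the nondeterministic variant of Lemma~\ref{lem:AltAutomataReducibleToAlternatingLogSpace} turns the problem into a membership question for a nondeterministic auxiliary $\SPACE{\log(n)}$ $\PStorage^{k-1}(\CStorage)$ automaton. Lemma~\ref{lem:Engelfriet2.4} then strips off one pushdown level in exchange for alternation, giving an alternating auxiliary $\SPACE{\log(n)}$ $\PStorage^{k-2}(\CStorage)$ automaton. Lemma~\ref{lem:Engelfriet2.2} strips off a further pushdown level at the cost of exponentially more worktape, producing an alternating auxiliary $\SPACE{n^d}$ $\PStorage^{k-3}(\CStorage)$ automaton for some constant $d$. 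A second application of Lemma~\ref{lem:Engelfriet2.4} converts the alternation back into a pushdown level, yielding a nondeterministic auxiliary $\SPACE{n^d}$ $\PStorage^{k-2}(\CStorage)$ automaton. Finally, Lemma~\ref{lem:spaceboundedTMtoReachabilityofAutomata} reduces membership for this machine to a control state reachability instance $(\Aa',q')$ on a $\HOCSO[(k-1)]$ automaton of size $\exp(O(n^d))$.

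By the induction hypothesis the final instance is decidable in time $\exp_{k-3}(\lvert\Aa'\rvert^{d'})$ for some $d'\in\N$. Substituting $\lvert\Aa'\rvert = \exp(O(n^d))$ yields overall time $\exp_{k-3}((\exp(O(n^d)))^{d'}) = \exp_{k-2}(O(n^{dd'}))$, which lies in $\DTIME(\bigcup_{d''\in\N}\exp_{k-2}(n^{d''}))$, completing the induction.

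I do not expect any serious obstacle. Engelfriet's trade-off lemmas are stated uniformly for arbitrary storage types, so substituting $\CStorage$ for $\CStorageZero$ is immediate; and unlike the $\HOCSM[k]$ case, no analogue of the $\PSPACE$-completeness result for alternating $\CStorageZero$ automata (Lemma~\ref{lem:PSPACE-c-of-ReachCPlus}) is needed here, because Proposition~\ref{prop:ReachHOCS-0} already supplies the $\PTIME$ base of the hierarchy. The only point requiring mild care is bookkeeping the polynomial exponents through the tower-of-exponentials simplification at the end.
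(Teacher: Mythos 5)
Your proposal is correct and follows essentially the same route as the paper's own proof: the same base case via Proposition~\ref{prop:ReachHOCS-0}, the same chain of reductions (Lemma~\ref{lem:AltAutomataReducibleToAlternatingLogSpace}, then Lemma~\ref{lem:Engelfriet2.4}, Lemma~\ref{lem:Engelfriet2.2}, Lemma~\ref{lem:Engelfriet2.4} again, and finally Lemma~\ref{lem:spaceboundedTMtoReachabilityofAutomata}), and the same closing computation $\exp_{k-3}\bigl(\exp(O(n^d))^{d'}\bigr)=\exp_{k-2}(O(n^{dd'}))$. Your bookkeeping of the space bound as $n^d$ after applying Lemma~\ref{lem:Engelfriet2.2} is in fact slightly more careful than the paper's.
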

\begin{proof}
  We use Engelfriet's machinery
  and induction:
  the case reachability for $\HOCSO[2]\in \PTIME$ has already been shown
  in Proposition\,\ref{prop:ReachHOCS-0}.
  Given a \HOCSO[k] $\Aa$ of level $k\geq 3$ and a state $q$, control state
  reachability reduces by
  Lemma \ref{lem:AltAutomataReducibleToAlternatingLogSpace}
  to a membership
  problem for some nondeterministic auxiliary $\SPACE{\log(n)}$
  $\PStorage^{k-1}(\CStorage)$ automaton.
  Due to Lemma \ref{lem:Engelfriet2.4} this automaton can be translated into
  an alternating auxiliary $\SPACE{\log(n)}$
  $\PStorage^{k-2}(\CStorage)$ automaton.
  We apply Lemma \ref{lem:Engelfriet2.2} and obtain an
  alternating auxiliary $\SPACE{d n}$  $\PStorage^{k-3}(\CStorage)$
  automaton for some $d\in\N$.
  Again with Lemma \ref{lem:Engelfriet2.4} this is
  translated to a nondeterministic auxiliary $\SPACE{d n}$
  $\PStorage^{n-2}(\CStorage)$ automaton.
  Finally, we use the polynomial-space reduction from
  Lemma \ref{lem:spaceboundedTMtoReachabilityofAutomata} and obtain a
  state $q'$ and a
  \HOCSO[(k-1)] $\Aa'$ of size exponential in that of
  $\Aa$ such that $q$ is reachable in $\Aa$ if $q'$ is
  reachable in $\Aa'$.
  By induction hypothesis, we can decide this
  in time
  $\exp_{n-3}(p'(\lvert \Aa' \rvert))
  =
  \exp_{n-2}(p(\lvert \Aa \rvert))$  for some polynomials $p$ and $p'$.\qed
\end{proof}

\subsection{Alternating Control State Reachability}
\label{sec:altCont}

We derive our results on alternating reachability by use of a much
more general relation between the pushdown operator and alternation.

\begin{proposition}
  Given any storage type $\Storage$, alternating control state
  reachability for $\Storage$ automata is logspace reducible to
  control state reachability of $\PStorage(\Storage)$ automata and
  vice versa.
\end{proposition}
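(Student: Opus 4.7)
The plan is to establish both directions by logspace-computable constructions that mirror Engelfriet's trade-off between alternation and one extra pushdown level.

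For the direction from alternating reachability on $\Storage$ to (plain) reachability on $\PStorage(\Storage)$, I would use the fresh pushdown as a call stack that serialises a depth-first traversal of the alternating computation tree of the given automaton $\Aa=(Q,q_0,q_f,\Delta)$. The simulating nondeterministic $\PStorage(\Storage)$-automaton $\Bb$ would simulate existential transitions of $\Aa$ directly via $\stay{f}$ operations on the topmost $\Storage$-configuration. At a universal state $q$ with applicable transitions $\delta_1=(q,R_1,p_1,f_1),\dots,\delta_k=(q,R_k,p_k,f_k)$, $\Bb$ would iterate through them as follows. Before exploring $\delta_i$ it invokes $\push{(q,i),\id}$, thereby pushing a marker $(q,i)$ together with a fresh copy of the current $\Storage$-configuration; it then applies $\stay{f_i}$ and continues from $p_i$. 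Once the simulation reaches $q_f$, $\Bb$ switches into a bookkeeping mode that inspects the topmost pushdown symbol. If that symbol is $(q,i)$ with $i<k$, then $\Bb$ pops (thereby restoring the $\Storage$-configuration to the value saved at the corresponding push) and launches branch $\delta_{i+1}$; if it is $(q,k)$, then $\Bb$ pops without starting a new branch (the current universal node is fully verified and the traversal climbs outward); if it is the initial bottom marker, then $\Bb$ accepts. Correctness follows from a straightforward bijection between accepting alternating computation trees of $\Aa$ and accepting runs of $\Bb$ obtained by DFS traversal, and $\Bb$ is constructible from $\Aa$ in logspace.

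For the converse direction, from reachability on $\PStorage(\Storage)$ to alternating reachability on $\Storage$, the plan is to apply the standard ``summary by alternation'' idea. Given a nondeterministic $\PStorage(\Storage)$-automaton $\Cc$, I would build an alternating $\Storage$-automaton $\Aa'$ whose storage tracks only the topmost $\Storage$-configuration of $\Cc$ and whose state records $\Cc$'s state together with the current top pushdown symbol. A $\stay{f}$ transition of $\Cc$ is simulated directly on $\Aa'$'s storage. At a push $\push{\tau,f}$ from state $q$, $\Aa'$ would existentially guess a state $q'$ and a symbol $\sigma'$ that $\Cc$ is expected to expose after the matching pop, and then universally split into two branches: (a) a branch that applies $f$ to its storage, switches its symbol component to $\tau$, and simulates the nested push-block until $\Cc$ is about to execute the matching pop in state $q'$, and (b) a branch that leaves the $\Storage$-configuration unchanged, resumes in state $q'$ with symbol $\sigma'$, and continues simulating the post-pop computation. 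Consistency of nested guesses is enforced automatically by the universal branching, and correctness follows by induction on the number of push-pop pairs in the simulated run.

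The step I expect to be the main obstacle is the careful handling of operation definedness in the first construction: if some $f_i$ of a universal branch is undefined on the current topmost storage, then this branch contributes no successor to the alternating computation tree and must therefore not block $\Bb$. I would handle this by augmenting the guard of each transition with an explicit definedness test or, failing that, by the standard device of probing $f_i$ in a transient state that silently discards the branch on failure; this preserves soundness while keeping the construction logspace. Both constructions are essentially the reachability-only specialisations of the translations underlying Lemma \ref{lem:Engelfriet2.4}, so the claimed logspace bounds follow as in Engelfriet's proof.
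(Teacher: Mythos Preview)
Your two direct simulations are essentially correct and constitute the classical direct proofs of this trade-off; they are, as you say, reachability specialisations of the constructions behind Engelfriet's Theorem~2.4. Two small points deserve care but are routine: in the first construction the markers $(q,i)$ must be encoded over the fixed three-letter alphabet of $\PStorage$ (or you appeal to the storage equivalence $\PStorage_\Sigma\equiv\PStorage$), and in the second construction you must also treat the case where a push is never matched before the target state is reached, by allowing $\Aa'$ to simply apply $f$, switch the symbol, and continue with the same goal rather than splitting.

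The paper, however, does \emph{not} carry out these direct simulations. Instead it composes three already-proved reductions: it first embeds alternating control-state reachability for $\Storage$ automata into membership for alternating auxiliary $\SPACE{\log n}$ $\Storage$ automata (Lemma~\ref{lem:AltAutomataReducibleToAlternatingLogSpace}), then invokes Engelfriet's Theorem~2.4 (Lemma~\ref{lem:Engelfriet2.4}) to pass to nondeterministic auxiliary $\SPACE{\log n}$ $\PStorage(\Storage)$ automata, and finally projects back to plain control-state reachability on $\PStorage(\Storage)$ automata via Lemma~\ref{lem:spaceboundedTMtoReachabilityofAutomata}; the converse direction runs the same chain with the roles of alternation and nondeterminism swapped. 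Your argument is more elementary and self-contained, and makes the logspace bound visible at the level of concrete transition tables; the paper's argument is shorter on the page because all the work has been factored into the three auxiliary-space lemmas, and it emphasises that the proposition is just one instance of a general pattern relating $\Storage$ and $\PStorage(\Storage)$.
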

\begin{proof}
  Let $\Aa$ be an alternating $\Storage$ automaton and $q$ some
  state. By Lemma \ref{lem:AltAutomataReducibleToAlternatingLogSpace}
  the alternating control state reachability problem for $(\Aa, q)$
  reduces to a membership problem for an alternating auxiliary
  $\SPACE{\log(n)}$ $\Storage$ automaton. This reduces by 
  Lemma \ref{lem:Engelfriet2.4} to a membership problem for a
  nondeterministic auxiliary $\SPACE{\log(n)}$ $\PStorage(\Storage)$
  automaton.
  Finally, using Lemma
  \ref{lem:spaceboundedTMtoReachabilityofAutomata} this problem
  reduces to a control state reachability problem for a
  nondeterministic $\PStorage(\Storage)$ automaton. 
  
  Using the 'nondeterminism' variant of Lemma
  \ref{lem:AltAutomataReducibleToAlternatingLogSpace}, the other
  direction of Lemma \ref{lem:Engelfriet2.4} and the 'alternation'
  variant of Lemma \ref{lem:spaceboundedTMtoReachabilityofAutomata},
  the control state reachability problem for $\PStorage(\Storage)$
  automata similarly reduces to the alternating control state
  reachability problem for alternating $\Storage$ automata. 
\end{proof}

\section{Equivalence of Storages}
\label{app:Storage_Equivs}
In \cite{Slaats12} the notion of a level $k$ counter automaton with
$0$-test was defined differently from our notion of $\HOCSM[k]$ as
follows. Basically Slaats uses the storage type
$\PStorage_{\{\bot\}}^{k-1}(\CStorageZero)$ instead of
$\PStorage_{\{\bot,0,1\}}^{k-1}(\CStorageZero)$.
In the following we show that both variants lead to equivalent automata.
Let us first  recall the notion of equivalence of storage types
(cf.~\cite{Engelfriet91}).

\begin{definition}
  Let $\Storage$ and $\Storage'$ be storages.
  $\Storage$ can \emph{simulate}
  $\Storage'$, denoted as $\Storage' \preceq \Storage$,
  if
  for every
  one-way deterministic $\Storage'$ transducer there is a one-way
  deterministic $\Storage$ transducer defining the same transductions.

  $\Storage$ and
  $\Storage'$ are \emph{equivalent},
  denoted as $\Storage \equiv \Storage'$, if
  $\Storage \preceq \Storage'$ and $\Storage'\preceq \Storage$.
\end{definition}
\begin{remark}
  As pointed out by Engelfriet,
  for storage types $\Storage, \Storage'$ such that
  $\Storage\preceq\Storage'$, for
  $t\in\{\text{nondeterministic, alternating, deterministic}\}$
  $t$ $\Storage$ automata can be simulated by
  $t$ $\Storage'$automata.
\end{remark}

Recall that we defined the storage type
$\CStorage = \PStorage_{\{\bot\}}$. In the following, we also use
$\CStorage$ as the operator $\PStorage_{\{\bot\}}$ acting on other storage
types. We call $\CStorage(\Storage)$ the storage type \emph{counter of
  $\Storage$}.

\begin{proposition}\label{prop:equivStorages1}
  It holds that $\CStorage^{k-1}(\CStorageZero) \equiv
  \PStorage^{k-1}(\CStorageZero)$.
\end{proposition}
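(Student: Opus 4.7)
\textbf{Proof plan for Proposition \ref{prop:equivStorages1}.}

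The direction $\CStorage^{k-1}(\CStorageZero) \preceq \PStorage^{k-1}(\CStorageZero)$ is immediate: since $\CStorage = \PStorage_{\{\bot\}}$ is syntactically a restriction of $\PStorage = \PStorage_{\{\bot,0,1\}}$, every $\CStorage^{k-1}(\CStorageZero)$-transducer already \emph{is} a $\PStorage^{k-1}(\CStorageZero)$-transducer that refuses to use any symbol other than $\bot$ and only invokes the trivially-true test $\TOP{\bot}$.

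For the non-trivial direction $\PStorage^{k-1}(\CStorageZero) \preceq \CStorage^{k-1}(\CStorageZero)$, the plan is to proceed by induction on $k$. The base case $k=1$ is vacuous since both storage types coincide with $\CStorageZero$. For the inductive step, I would fix a bijection $c\colon\{\bot,0,1\}\to\{0,1,2\}$ and simulate only the outermost $\PStorage$-layer by a $\CStorage$-layer, leaving the inner $\PStorage^{k-2}(\CStorageZero)$ in place; afterwards, the induction hypothesis is applied inside. The encoding represents each level-$(k-1)$ cell $(\sigma,x)$ by a \emph{gadget} of $c(\sigma)+1$ consecutive $\CStorage$-cells: the bottom-most gadget cell is the \emph{data cell} carrying (a recursively-encoded version of) $x$, and the $c(\sigma)$ cells above it are \emph{marker cells} whose inner storage is a fixed \emph{distinguished configuration}. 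The simulating transducer additionally maintains the current top symbol $\sigma_m$ in its finite control as a bookkeeping invariant.

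Each original operation is then simulated by a bounded sequence of $\CStorage$-layer operations: $\TOP{\sigma}$ is answered directly from the state; $\pop$ removes $c(\sigma_m)+1$ consecutive cells and afterwards rediscovers the new top symbol by counting the marker cells sitting above the next data cell; $\push{\tau,f}$ first strips the $c(\sigma_m)$ markers, applies a single $\push{\bot,f}$ to create the new data cell, then rebuilds $c(\tau)$ marker cells on top and sets $\sigma_m := \tau$ in the state; $\stay{f}$ and tests on the inner are handled analogously by stripping markers, operating on the data cell, and rebuilding. Identifying and re-creating markers is done through transducer loops whose termination is controlled by an inner test detecting the distinguished configuration; for the outermost simulation at $k=2$ this distinguished inner is simply counter value $0$, detectable via $\Temptystack$ (data cells then store value $v+1$ so as to be distinguishable from markers). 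For $k\geq 3$ we take the distinguished inner to be the initial configuration of $\PStorage^{k-2}(\CStorageZero)$, and we enforce an invariant that data-inner configurations always carry a non-zero bottom-most counter; the distinguished configuration can then be identified and recreated by a transducer loop that pops down to the bottom and tests $\Temptystack$. Finally we compose with the induction hypothesis applied to the inner storage to turn the outer $\CStorage(\PStorage^{k-2}(\CStorageZero))$ simulator into a $\CStorage(\CStorage^{k-2}(\CStorageZero))=\CStorage^{k-1}(\CStorageZero)$ simulator.

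The main technical obstacle will be the clean composition of the outer-layer simulation with the inductively-given inner simulation. The outer simulator emits operations $\push{\bot,f}$ and $\stay{f}$ whose second component $f$ is an \emph{atomic} inner-storage operation, whereas the induction hypothesis converts each such $f$ into a multi-step sequence of inner-$\CStorage$-operations. Lifting this replacement to the outer layer is possible because a transducer is permitted to perform $\push{\bot,\id}$ and then refine the new top's inner via arbitrarily many $\stay{g}$ steps, but care is needed to preserve the gadget invariants (in particular the non-zero-bottom-counter property of data cells and the exact shape of the distinguished marker configuration) under this composition. Verifying these invariants across the induction — rather than the encoding itself or the per-operation simulation — is the crux of the argument.
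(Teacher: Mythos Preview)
Your overall strategy --- encode each $\PStorage$-cell by a variable-length block of $\CStorage$-cells and carry the current top symbol in the finite control --- is natural, but the $\push{\tau,f}$ simulation as you describe it does not preserve your own invariant. After you strip the $c(\sigma_m)$ markers of the current top gadget and execute $\push{\bot,f}$, the new data cell sits \emph{directly} on top of the old data cell; the $c(\sigma_m)$ markers that encoded $\sigma_m$ are gone and cannot be re-inserted between the two data cells, since a stack only admits operations at the top. Rebuilding $c(\tau)$ markers above the new data cell restores the invariant for the new top but leaves the former top gadget marker-less. When you later simulate a $\pop$ and ``rediscover the new top symbol by counting the marker cells sitting above the next data cell'', you will count zero markers and wrongly conclude that the exposed symbol is $c^{-1}(0)$. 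Keeping only the \emph{current} top symbol in the state does not help here, and the full history of symbols obviously cannot be stored in the finite control. The obvious patches (pushing the old markers back before $\push{\bot,f}$, or pushing $c(\sigma_m)+1$ copies and then converting all but the top into markers) fail for the same reason: the cell whose inner storage $\push{\bot,f}$ copies is always the current top, so you cannot simultaneously have the old data cell on top for copying and have markers sitting above it.

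The paper avoids this obstruction by encoding the outer-layer symbol not in the \emph{number} of $\CStorage$-cells but in the residue modulo~$3$ of the innermost (level~$1$) counter: every genuine level-$1$ push/pop is tripled so that data values are always $\equiv 0 \pmod 3$, and the three symbols $\{\bot,0,1\}$ are stored as an additive offset in $\{0,1,2\}$. Each original cell is then represented by exactly one cell, $\push{\bot,\id}$ produces a perfect copy whose level-$1$ counter can be destructively read down to zero using the $\Temptystack$ test and then discarded by a single outer $\pop$, and $\push{\gamma,\id}$ is simulated by one $\push{\bot,\id}$ followed by a bounded adjustment of the level-$1$ offset. This yields directly $\PStorage(\CStorage^{k-2}(\CStorageZero)) \preceq \CStorage^{k-1}(\CStorageZero)$. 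The induction is then closed not by the layer-wise composition you worry about, but by invoking Engelfriet's result that the operator $\PStorage$ preserves storage equivalence:
\[
  \PStorage^{k}(\CStorageZero)
  = \PStorage\bigl(\PStorage^{k-1}(\CStorageZero)\bigr)
  \equiv \PStorage\bigl(\CStorage^{k-1}(\CStorageZero)\bigr)
  \equiv \CStorage^{k}(\CStorageZero),
\]
which dissolves what you identified as your ``main technical obstacle'' without any ad-hoc invariant maintenance.
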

\begin{proof}
  The direction from left to right is clear because $\PStorage$ is an
  extension of $\CStorage$.
  We  show how $\CStorage^{k-1}(\CStorageZero)$ can simulate
  $\PStorage^{k-1}(\CStorageZero)$.

  We first show that $\Storage:=\PStorage(\CStorage^{k-2}(\CStorageZero))$
  can be simulated by $\Storage':=\CStorage^{k-1}(\CStorageZero)$. The
  idea is to
  encode the pushdown symbol of level $k$, by the level 1 counter
  value modulo $3$ (recall that $\PStorage$ uses the  pushdown
  alphabet $\{\bot, 0, 1\}$).
  For this purpose we first replace in the
  $\PStorage(\CStorage^{k-2}(\CStorageZero))$ every $\push{\bot}$ of
  level $1$(i.e. a push applied to $\CStorageZero$) by $3$
  $\push{\bot}$ operations and each level $1$ $\pop$-operation by $3$
  $\pop$-operations of level $1$.
  This results in an equivalent $\Storage$ automaton where the level
  $1$ counter value is always $0 \mod 3$. Next, without loss of
  generality we assume that the $\Storage$ automaton only uses
  instructions of the form $\pop, \push{\sigma,\id}$ and $\stay{f}$.
  \emph{For the rest of this simulation, we identify $\bot$ with the
    number $2$}. We want to represent a
  pushdown symbol $\sigma\in\{0,1,\bot\}$ by $\sigma \mod 3$ on the
  level $1$ counter.
  We initialise $\Storage'$ by applying $2$ $\push{\bot}$ on level
  $1$ (this results in the counter value $2$, which is  $2\mod 3$
  representing the initial symbol $\bot$. Now we simulate the
  operations on $\Storage$ by $\Storage'$-operations as follows (where
  we assume that the current $\Storage$-configuration $\Sconfig$ is
  simulated by $\Storage'$-configuration $\Sconfig'$.
  \begin{enumerate}
  \item The $\TOP{\gamma}$ test for $\gamma\in\{0,1,\bot\}$ can  be
    simulated as follows.
    apply $\push{\bot,\id}$, then
    determine the topmost symbol
    $\gamma'\in\{0,1,\bot\}$
    by level $1$
    $\pop$-operations (while the $0$-test fails) determining the value
    of the topmost level $1$
    counter modulo $3$. After finishing the test we restore the
    pushdown by a $\pop$ operation and just have to compare $\gamma$
    with $\gamma'$.
  \item The $\Temptystack$ test on level $1$ is simulated by first
    determining which $\TOP{\gamma}$ test applies for
    $\gamma\in\{0,1,\bot\}$ as in the simulation of
    $\TOP{\gamma}$. Then we perform
    $\gamma$ many
    $\pop$-operations of level $1$, then the $\Temptystack$ test of
    $\Storage'$  coincides with the $\Temptystack$ of $\Storage$. We
    restore the pushdown by $\gamma$ many $\push{\bot}$ operations of
    level $1$.
  \item A $\pop$ operation is simulated by $\pop$.
  \item A $\push{\gamma,\id}$ operation is simulated by the following
    program:
    first determine the topmost symbol $\gamma'\in\{0,1,\bot\}$
    of $\Sconfig'$. Then apply $\push{\bot,\id}$, then apply $\gamma'$
    many level $1$ pop-operations.  No we apply $\gamma$ many level
    $1$ $\push{\bot}$ operations.
  \item A $\stay{f}$ operations is simulated by $\stay{f}$ if $f$ is
    not an operation of level $1$. If it is of level $1$ we just
    duplicate it $3$ times.
  \end{enumerate}

  This completes the proof that
  $\PStorage(\CStorage^{k-2}(\CStorageZero))$ can be simulated by
  $\CStorage^{k-1}(\CStorageZero)$. The lemma now follows
  by induction on $k$:
  we have shown that
  $\PStorage^1(\CStorageZero) \equiv \CStorage^1(\CStorageZero)$.
  Assume that for some $k$ we have $\PStorage^{k-1}(\CStorageZero)
  \equiv \CStorage^{k-1}(\CStorageZero)$.
  By Theorem 1.3.1  of \cite{Engelfriet91},
  we obtain
  \begin{equation*}
    \PStorage^k(\CStorageZero) \equiv
    \PStorage( \PStorage^{k-1}(\CStorageZero)) \equiv
    \PStorage( \CStorage^{k-1}(\CStorageZero)) \equiv
    \CStorage^k(\CStorageZero).
  \end{equation*}\qed
\end{proof}

Readers interested in a more throughout comparison of different
possible definitions of higher-order one-couter automata are invited
to have a look at Appendix \ref{app:Storage_Equivs_Long}.

\section{Separation of Languages of Higher-Order Counter Automata
 With or Without $0$-test}

\label{app:Languages}

Under the assumption that
\begin{align*}
  \DTIME(\bigcup_{d\in\N}\exp_{k}(n^d)) \subsetneq
  \DSPACE(\bigcup_{d\in\N} \exp_{k}(n^d))\subsetneq
  \DTIME(\bigcup_{d\in\N} \exp_{k+1}(n^d))
\end{align*}
our results on
the reachability problem for \HOCS implies a strict separation of the
languages of higher-order counters and higher-order pushdowns.

We first recall some results of Engelfriet that allows to shift
results on $2$-way auxiliary automata down to $1$-way automata.
We recall his proofs in order to extract the constructive
content.

Let
N-aux-$\SPACE{\log(n)}-\Storage$-L
denote the languages accepted by
nondeterministic auxiliary $\SPACE{\log(n)} \Storage$ automata.
Let $T$ be the class of nondeterministic
logspace transducers.
Let
$T^{-1}(\mathcal{L}):=\set{\tau^{-1}(L)}{\tau\in T,L\in\mathcal{L}}$
be the class of languages obtained by application of transductions
from $T$ to languages from $\mathcal{L}$.

Recall that $\VAL(\Storage)$ is the language of valid storage
sequences for storage type $\Storage$.
It is accepted by a deterministic
$\Storage$ automaton with only $1$ state $q$ and no $\varepsilon$-transitions
that works as follows.
Transitions on input an $\Storage$-operation $f$ are of the form
$(q,f, \emptyset, f, q)$, i.e., $\Ss$ on input $f$ applies $f$
unconditionally and transitions on input a test
 $(t=r)$ are of the form $(q, (t=r), (t,r), \id_{\SConfig}, q)$< i.e.,
computation continues if test $t$ results in $r$ and the storage
remains unchanged.

\begin{lemma}[\cite{Engelfriet91}, Lemma 7.1]
  For every storage type $\Storage$,
  N-aux-$\SPACE{\log(n)}-\Storage$-L $=
  T^{-1}(1N-\Storage)=T^{-1}(1D-\Storage) = T^{-1}(\{\VAL(\Storage)\})$
  where
  $1N-\Storage$ ($1D-\Storage$, respectively) denotes the class of languages
  accepted by nondeterministic (deterministic, respectively)
  $\Storage$-automata.
\end{lemma}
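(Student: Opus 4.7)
The plan is to prove the chain
\[
T^{-1}(\{\VAL(\Storage)\}) \subseteq T^{-1}(1D\text{-}\Storage) \subseteq T^{-1}(1N\text{-}\Storage) \subseteq N\text{-aux-}\SPACE{\log(n)}\text{-}\Storage\text{-L} \subseteq T^{-1}(\{\VAL(\Storage)\}).
\]
The first two inclusions are immediate. For the first, note that the single-state deterministic $\Storage$ automaton described just before the lemma witnesses $\VAL(\Storage)\in 1D\text{-}\Storage$. For the second, every deterministic $\Storage$ automaton is in particular a nondeterministic one.

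For the third inclusion, let $\tau\in T$ and let $\Aa$ be a $1N\text{-}\Storage$ automaton accepting $L$. Given input $w$, a nondeterministic auxiliary $\SPACE{\log(n)}$ $\Storage$ machine $\Mm$ maintains on its worktape the (logspace) configuration of $\tau$ on $w$ together with $\tau$'s input head position. Whenever $\tau$ emits an output symbol, $\Mm$ consumes it ``virtually'' by performing the corresponding transition of $\Aa$, updating its $\Storage$-configuration via the operation $f$ and evaluating the tests of $\Aa$ against the current \Storage-configuration. Both simulations proceed in tandem in logspace on the worktape, and $\Mm$ accepts iff $\tau$ terminates having produced an output and $\Aa$ is in an accepting state. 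Clearly $L(\Mm)=\tau^{-1}(L)$.

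The fourth inclusion is the core of the proof. Given a nondeterministic auxiliary $\SPACE{\log(n)}$ $\Storage$ automaton $\Mm$, I construct a nondeterministic logspace transducer $\tau_\Mm$ which, on input $w$, simulates $\Mm$ step by step. The simulation needs only logspace to track $\Mm$'s finite state, its $\log(|w|)$-sized worktape contents, and the two tape head positions. Crucially, $\tau_\Mm$ does \emph{not} maintain the $\Storage$-configuration (it cannot, since $\Storage$ is arbitrary). Instead, whenever the simulated transition applies a storage operation $f\in F$, $\tau_\Mm$ outputs the symbol $f$; whenever the transition is gated on a storage test $t\in T$, $\tau_\Mm$ nondeterministically guesses a value $r\in\{true,false\}$, outputs the letter $t_r$, and continues the simulation under the assumption that this test returned $r$. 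Where $\Mm$ itself branches nondeterministically, $\tau_\Mm$ also guesses. If and only if the simulated $\Mm$ reaches an accepting state, $\tau_\Mm$ terminates with success. By the definition of $\VAL(\Storage)$, the emitted string $\tau_\Mm(w)$ lies in $\VAL(\Storage)$ exactly when the guessed sequence of operations and test outcomes corresponds to a genuine $\Storage$-computation starting at $\SConfigInit$; hence, quantifying over nondeterministic choices of $\tau_\Mm$, we obtain $w\in L(\Mm)$ iff $w\in \tau_\Mm^{-1}(\VAL(\Storage))$.

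The only subtle point is the last inclusion: one must resist the urge to have the transducer actually execute storage operations (which would be impossible in logspace for an arbitrary $\Storage$), and instead delegate the consistency check between guessed test outcomes and applied operations to the language $\VAL(\Storage)$ itself. Once this separation of responsibilities is made, correctness is a straightforward induction on the length of the run; the remaining verifications (the transducer genuinely runs in logspace, the tests and operations of $\Mm$ are exactly those of the underlying storage type $\Storage$) are routine.
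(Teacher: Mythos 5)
Your proposal is correct and follows essentially the same route as the paper's proof sketch: the cyclic chain of inclusions closed by (i) the on-the-fly product of a logspace transducer with a one-way $\Storage$ automaton, and (ii) the transducer that simulates the auxiliary machine while emitting operations and guessed test outcomes instead of executing them, delegating consistency to $\VAL(\Storage)$. The paper states the same two constructions (splitting the auxiliary automaton into a transducer plus the one-state recogniser of $\VAL(\Storage)$, and the product construction for the converse), so no substantive difference.
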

\begin{proof}[sketch]
  Given a nondeterministic auxiliary $\SPACE{\log(n)}-\Storage$
  automaton $\Aa$ we can split it into two devices as follows.
  First we use a nondeterministic logspace
  transducer $\Tt$ that simulates $\Aa$ but instead of performing
  $\Storage$-tests or -operations  it writes these on the output tape
  (tests are written together with the expected test result). Then we
  use the deterministic $\Storage$ automaton $\Ss$  recognising
  $\VAL(\Storage)$ and check whether the output of $\Tt$ is a valid
  sequence of operations and
  tests of $\Storage$.

  For the other direction, given a transducer and an
  $\Storage$-automaton, the language of their composition is
  recognised by a
  nondeterministic auxiliary $\SPACE{\log(n)}-\Storage$
  automaton which is a simple product of the two automata. \qed
\end{proof}
A straightforward extension of Engelfriet's Corollary 7.2 from
\cite{Engelfriet91} is the following.

\begin{corollary}\label{cor:Engelfriet7.2}
  Let $\Storage$ and $\Storage'$ be storage types. If
  N-aux-$\SPACE{\log(n)}-\Storage$-L
  $\not\subseteq $
  N-aux-$\SPACE{\log(n)}-\Storage'$-L
  then $1$D-$\Storage \not\subseteq 1$N-$\Storage'$. In particular
  $\VAL(\Storage) \notin 1$N-$\Storage'$.
\end{corollary}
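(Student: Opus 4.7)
The plan is to establish both assertions at once, via the preceding lemma of Engelfriet, by proving the contrapositive implication: if $\VAL(\Storage)$ lies in 1N-$\Storage'$, then N-aux-$\SPACE{\log(n)}$-$\Storage$-L $\subseteq$ N-aux-$\SPACE{\log(n)}$-$\Storage'$-L. Granted this, the hypothesis of the corollary forces $\VAL(\Storage) \notin$ 1N-$\Storage'$; and since the one-state deterministic $\Storage$ automaton recalled just before the preceding lemma witnesses $\VAL(\Storage) \in$ 1D-$\Storage$, the non-inclusion 1D-$\Storage \not\subseteq$ 1N-$\Storage'$ follows at once, together with the ``in particular'' clause.

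The key step is purely formal. First I would observe that $T^{-1}$ is monotone in its argument: if $\mathcal{L}_1 \subseteq \mathcal{L}_2$ then $T^{-1}(\mathcal{L}_1) \subseteq T^{-1}(\mathcal{L}_2)$, directly from the set-builder definition $T^{-1}(\mathcal{L}) = \set{\tau^{-1}(L)}{\tau \in T, L \in \mathcal{L}}$. Assuming $\VAL(\Storage) \in$ 1N-$\Storage'$, i.e.\ $\{\VAL(\Storage)\} \subseteq$ 1N-$\Storage'$, monotonicity yields $T^{-1}(\{\VAL(\Storage)\}) \subseteq T^{-1}(1\text{N-}\Storage')$. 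The preceding lemma then identifies the left-hand side with N-aux-$\SPACE{\log(n)}$-$\Storage$-L and the right-hand side with N-aux-$\SPACE{\log(n)}$-$\Storage'$-L, completing the contrapositive.

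I do not anticipate a genuine obstacle: the argument is a two-line chase through the preceding lemma once monotonicity of $T^{-1}$ is noted. The only detail worth pausing over is that the identity map is a (trivial) nondeterministic logspace transducer, so $\VAL(\Storage)$ itself belongs to $T^{-1}(\{\VAL(\Storage)\})$ and hence to N-aux-$\SPACE{\log(n)}$-$\Storage$-L; this is exactly what lets the ``in particular'' clause name a concrete separating language rather than an abstract existence statement. An alternative phrasing of the same proof is to contrapose directly on the strong form ``$\VAL(\Storage) \in$ 1N-$\Storage'$'', apply the preceding lemma to both sides, and observe the monotonicity-based inclusion — but the result is identical.
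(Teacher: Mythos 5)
Your proof is correct and follows essentially the same route as the paper's: contrapose on $\VAL(\Storage)\in 1$N-$\Storage'$, apply monotonicity of $T^{-1}$, and use the preceding lemma's identities $\text{N-aux-}\SPACE{\log(n)}\text{-}\Storage\text{-L}=T^{-1}(\{\VAL(\Storage)\})$ and $T^{-1}(1\text{N-}\Storage')=\text{N-aux-}\SPACE{\log(n)}\text{-}\Storage'\text{-L}$. Your additional remark that the one-state deterministic automaton witnesses $\VAL(\Storage)\in 1$D-$\Storage$ is a detail the paper leaves implicit, but the core argument is identical.
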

\begin{proof}
  Proof by contraposition:
  If  $\VAL(\Storage) \in 1$N-$\Storage'$ then
  N-aux-$\SPACE{\log(n)}-\Storage$-L$
  = T^{-1}(\{\VAL(\Storage)\})
  \subseteq
  T^{-1}(1$N-$\Storage') = $
  N-aux-$\SPACE{\log(n)}-\Storage'$-L
  \qed
\end{proof}

Due to Lemmas \ref{lem:AltAutomataReducibleToAlternatingLogSpace} and
\ref{lem:spaceboundedTMtoReachabilityofAutomata}, complexity results
on control state reachability for $\Storage$-automata help to separate
the classes   N-aux-$\SPACE{\log(n)}-\Storage$-L for different
storage types
$\Storage$ as follows.

\begin{lemma}
  Let $\Storage$ be some storage type
  and $\mathcal{C}$ a complexity class closed under $\DSPACE(\log(n))$
  reductions. If control state reachability
  for $\Storage$-automata is complete
  $\mathcal{C}$ (under $\DSPACE(\log(n))$-reductions), then
  N-aux-$\SPACE{\log(n)}-\Storage$-L$
  = \mathcal{C}$.
\end{lemma}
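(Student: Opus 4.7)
The plan is to prove the two inclusions of the equality
$\text{N-aux-}\SPACE{\log(n)}\text{-}\Storage\text{-L} = \mathcal{C}$
by chaining the two reduction lemmas already established:
Lemma~\ref{lem:AltAutomataReducibleToAlternatingLogSpace}
(control state reachability for $\Storage$ automata
$\le_{\log}$ membership for nondeterministic auxiliary $\SPACE{\log(n)}$
$\Storage$ automata) and
Lemma~\ref{lem:spaceboundedTMtoReachabilityofAutomata}
(membership for nondeterministic auxiliary $\SPACE{\log(n)}$ $\Storage$
automata $\le_{\log}$ control state reachability for $\Storage$ automata).
Together with the standing assumption that reachability is $\mathcal{C}$-complete
under $\DSPACE(\log(n))$-reductions and that $\mathcal{C}$ is closed under such
reductions, these two lemmas sandwich the class
N-aux-$\SPACE{\log(n)}$-$\Storage$-L between $\mathcal{C}$ and itself.

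For the inclusion
$\text{N-aux-}\SPACE{\log(n)}\text{-}\Storage\text{-L} \subseteq \mathcal{C}$:
take any $L$ accepted by a nondeterministic auxiliary $\SPACE{\log(n)}$
$\Storage$ automaton $\Mm$. Lemma~\ref{lem:spaceboundedTMtoReachabilityofAutomata}
(instantiated with $b(n)=\log(n)$ and in its nondeterministic variant)
provides a $\DSPACE(\log(n))$ reduction from membership in $L(\Mm)$ to
control state reachability for $\Storage$ automata. By the $\mathcal{C}$-membership
part of the completeness assumption, reachability lies in $\mathcal{C}$, and because
$\mathcal{C}$ is closed under $\DSPACE(\log(n))$-reductions, we conclude $L\in\mathcal{C}$.

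For the reverse inclusion
$\mathcal{C} \subseteq \text{N-aux-}\SPACE{\log(n)}\text{-}\Storage\text{-L}$:
given $L\in\mathcal{C}$, the $\mathcal{C}$-hardness assumption yields a logspace reduction
from $L$ to control state reachability for $\Storage$ automata, and
Lemma~\ref{lem:AltAutomataReducibleToAlternatingLogSpace} then gives a further
logspace reduction from the latter to membership in a fixed language
$L_0 \in \text{N-aux-}\SPACE{\log(n)}\text{-}\Storage\text{-L}$. Composing these
two logspace reductions produces a single logspace reduction from $L$ to $L_0$.
It remains to observe that N-aux-$\SPACE{\log(n)}$-$\Storage$-L is itself
closed under $\DSPACE(\log(n))$-reductions: a nondeterministic auxiliary
$\SPACE{\log(n)}$ $\Storage$ machine deciding $L$ can simulate the machine
for $L_0$ on the reduced input, computing individual bits of the reduced
input on demand by re-running the logspace reduction; the two-way nature
of the input head causes no difficulty since each queried position can be
recomputed from scratch within $O(\log n)$ space.

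The only point requiring any care is this closure property, and it is
completely standard for log-space auxiliary machines; no new ideas beyond
the two previously established reduction lemmas are needed, so the lemma
follows essentially by diagram-chasing through the reductions.
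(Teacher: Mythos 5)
Your proof is correct and follows essentially the same route as the paper's: the forward inclusion via Lemma~\ref{lem:spaceboundedTMtoReachabilityofAutomata} and closure of $\mathcal{C}$ under logspace reductions, and the reverse inclusion by composing the hardness reduction with Lemma~\ref{lem:AltAutomataReducibleToAlternatingLogSpace} and absorbing the composed logspace reduction into the auxiliary machine by recomputing queried bits of the reduced input on the fly. The paper uses exactly this ``recompute the $i$-th symbol in logspace'' trick to close the argument, so there is nothing to add.
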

\begin{proof}
  Assume that $\Aa$ is a nondeterministic  auxiliary $\SPACE{\log(n)}
  \Storage$ automaton accepting some language $L$.
  Then $L\in \mathcal{C}$ because
  Lemma \ref{lem:spaceboundedTMtoReachabilityofAutomata} provides a
  $\DSPACE(\log(n))$-reduction from $L$ to
  control state reachability for $\Storage$-automata which is in
  $\mathcal{C}$  by assumption. Thus, we conclude that
  N-aux-$\SPACE{\log(n)}-\Storage$-L$
  \subseteq \mathcal{C}$.

  Now let $L$ be some language in $\mathcal{C}$. There is a
  $\DSPACE(\log(n))$-reduction $\varphi$ such that for all words $w$,
  $\varphi(w)$ is an encoding of a nondeterministic
  $\Storage$-automaton $\Aa$ and a state $q$ such that
  $q$ is reachable in $\Aa$ if and only if $w\in L$.
  Due to Lemma
  \ref{lem:AltAutomataReducibleToAlternatingLogSpace}, there is a
  $\DSPACE(\log(n))$-reduction $\psi$  and a nondeterministic
  auxiliary $\SPACE{\log(n)} \Storage$-automaton $\Aa'$ such that
  $\Aa'$ accepts $\psi(\varphi(w))$ if and only if $q$ is reachable in
  $\Aa$ if and only if $w\in L$.
  Recall that logspace reducibility is a transitive relation because
  the $i$-th symbol of a  logspace reduction can be recomputed on the
  fly in logspace. Using the very same trick,
  we can define a nondeterministic auxiliary $\SPACE{\log(n)}
  \Storage$-automaton $\Aa''$ that, given the input $w$ simulates a run
  of $\Aa'$ on $\psi(\varphi(w))$. Hence, $\Aa''$ accepts $w$ if and
  only if $w\in L$. This shows that
  $L\in $
  N-aux-$\SPACE{\log(n)}-\Storage$-L.\qed
\end{proof}

The previous two lemmas directly imply
Proposition \ref{prop:EngelfrietLanguageReachConnection}.
As a corollary of this proposition, our results on reachability for
higher-order counters imply the
language separations stated in Corollary
\ref{cor:HOCA-Language-Separation}. Moreover, if
Proposition \ref{prop:EngelfrietLanguageReachConnection} separates
the languages  of $\Storage_1$-automata from those of
$\Storage_2$-automata, then $\VAL(\Storage_2)$ is an example language
that separates the two classes.

\begin{proof}[of Corollary \ref{cor:HOCA-Language-Separation}]
  Containments are all trivial.
  Strict containment of $L(\HOPS[(k-1)])$ in $L(\HOCSO[k])$ follows
  from the fact that we can recognise the language
  $\{a^nb^m\mid m\leq \exp_{k-1}(n)\}$ by a $\HOCSO[k]$
  (cf.~\cite{Blumensath2008}) but we cannot recognise it by a
  \HOPS[(k-1)] (cf~\cite{cawo03}).

  Recall that
  \begin{itemize}
  \item the languages of
    auxiliary $\SPACE{\log(n)} \PStorage^{k-1}(\PStorage)$ are exactly
    those in $\DTIME(\bigcup_{d\in\N}\exp_{k-1}(n^d))$ (cf. \cite{Engelfriet91}),
  \item  the languages of
    auxiliary $\SPACE{\log(n)} \PStorage^{k-1}(\CStorageZero)$ are exactly
    those in $\DSPACE(\bigcup_{d\in\N}\exp_{k-2}(n^d))$ due to
    Theorem \ref{thm:HOCSM-Reachability} and Lemmas
    \ref{lem:AltAutomataReducibleToAlternatingLogSpace} and
    \ref{lem:spaceboundedTMtoReachabilityofAutomata},
  \item the languages of
    auxiliary $\SPACE{\log(n)} \PStorage^{k-1}(\CStorage)$ are exactly
    those in $\DTIME(\bigcup_{d\in\N}\exp_{k-2}(n^d))$ due to
    Theorem \ref{thm:HOCSO-Reachability} and Lemmas
    \ref{lem:AltAutomataReducibleToAlternatingLogSpace} and
    \ref{lem:spaceboundedTMtoReachabilityofAutomata}.
  \end{itemize}
  Application of the previous corollary to the inequation
  \begin{equation*}
      \DTIME(\bigcup_{d\in\N}\exp_{k}(n^d)) \subsetneq
      \DSPACE(\bigcup_{d\in\N}\exp_{k}(n^d))
      \subsetneq \DTIME(\bigcup_{d\in\N}\exp_{k+1}(n^d))
  \end{equation*}
  yields
     $ L(\HOCSO[k]) \subsetneq L(\HOCSM[k])
      \subsetneq L(\HOPS[k]).$
\qed
\end{proof}

Correspondingly, $\VAL(\PStorage^{k+1})$ is a (collapsible)
higher-order  pushdown language of level $k+1$ recognised by a
deterministic automaton with $1$ state and no
$\varepsilon$-transitions which is not recognised by any
(collapsible) higher-order pushdown automaton of level $k$.

\section{Comparing Notions of Regularity}
\label{app:NotionsOfRegularity}
In this section, we compare the expressive power and succinctness of
different notions of regularity for sets of configurations of
$\PStorage(\CStorage)$ automata.
Recall that we introduced in Section \ref{sec:RegReach} a notion of regularity via
the encoding in binary trees. From now on we write
$\Encode$-regularity for this notion.

\subsection{$2$-Store Alternating Finite Automata}
\label{sec:2stores}

We will first compare $\Encode$-regularity with the notion of
regularity via $2$-store alternating finite automata
\cite{BouajjaniM04}. Since we introduce $\Encode$ only for
$\PStorage(\CStorage)$ configurations, we restrict our presentation
of $2$-store automata also to this setting. Nevertheless the ideas
presented here have straightforward extensions to the general setting
of $\PStorage(\PStorage)$ configurations.

\begin{definition}
  Let $\Aa'$ be a $\PStorage(\CStorage)$ automaton with state set $Q'$.
  An alternating $2$-store automaton $\Aa$ (with respect to $\Aa'$) is
  an alternating automata
  $\Aa=(Q, \rho, F, \Sigma, \Delta)$ where $Q'\subseteq Q$ is a finite
  set of states,
  $\rho:Q\to \{\exists, \forall\}$ splits $Q$ into existential and
  universal states,
  $F\subseteq Q$ the set of final states,
  $\Sigma\subseteq \{0,1,\bot\}\times A$ a set of transition labels
  such that $A$ is a \emph{finite} set of alternating finite automata
  over input
  alphabet $\{\bot\}$, and
  $\Delta\subseteq Q\times \Sigma \times Q$

  An accepting computation of $\Aa$ on a $\PStorage(\CStorage)$
  configuration is defined inductively.
  Let $\Sconfig=\Sconfig' (\tau, m)$ with $\Sconfig'\in
  (\{0,1,\bot\}\times \N)^*$, $\tau\in\{0,1,\bot\}$, and $m\in\N$,
  and let $q\in Q$ be a state.
  There is an accepting computation from $q$ on $\Sconfig$ if one of
  the following holds.
  \begin{enumerate}
  \item $\Sconfig=\varepsilon$ and $q\in F$, 
  \item Assume that  $\Sconfig\neq\varepsilon$ and that $\rho(q)=\exists$.
    there is a $q'\in Q$ and a $(\tau,\Bb)$ such that
    $(q,(\tau, \Bb), q')\in \Delta$, $\Bb$ accepts $\bot^m$, and
    there is an accepting computation from $q'$ on $\Sconfig'$.
   \item Assume that  $\Sconfig \neq \varepsilon$ and that $\rho(q)=\forall$.
    For all  $q'\in Q$ and a all $(\tau,\Bb)$ such that
    $(q,(\tau, \Bb), q')\in \Delta$, $\Bb$ accepts $\bot^m$, and
    there is an accepting computation from $q'$ on $\Sconfig'$.
  \end{enumerate}

  For $\Sconfig$ a   $\PStorage(\CStorage)$-configuration and $q\in Q$
  a state of $\Aa'$,
  we say $\Aa$ accepts  $(q, \Sconfig)$ if  there is an accepting
  computation of $\Aa$ from $q$ on $\Sconfig$.

  We call a set $C$ of configurations of a $\PStorage(\CStorage)$
  automaton
  \emph{$2$-store-regular} if there is a $2$-store automaton that
  accepts $(q,\Sconfig)$ if and only if $(q,\Sconfig)\in C$.
\end{definition}
\begin{remark}\label{rem:Deteminise-2-stores}
  It is not difficult to adapt the usual powerset construction in
  order to obtain a deterministic $2$-store automaton $\Aa'$ equivalent
  to a given alternating $2$-store automaton. By deterministic, we
  mean that for any state $q$ and any pushdown symbol
  $\tau\in\{0,1,\bot\}$ there is exactly one deterministic automaton
  $\Bb$ and one state $q'$ such that $(q, (\tau,\Bb), q')$ is a
  transition of $\Aa$. Of course this determinisation comes at the
  price of a blow-up of the state set.
\end{remark}

Note that $2$-store automata process the counter values stored in a
$\PStorage(\CStorage)$ configuration sequentially. Thus, these
automata cannot compare the values of different counters stored in the
pushdown. To the contrary, in the tree-encoding of a configuration two
adjacent counter values can be compared by just looking at the
position where the two corresponding branches split up.
Thus, we can define $\Encode$-regular sets whose members satisfy
certain restrictions with respect to the comparison of  adjacent
counter values.
This idea can be translated into a proof that there is a
$\Encode$-regular set which is not $2$-store-regular.
After giving this proof, we show that $2$-store-regular sets are
always $\Encode$-regular. These two results show that the expressive
power of $2$-store-regularity is strictly weaker than that of
$\Encode$-regularity.

\begin{proposition}
  There is a set $C$ of configurations such that $C$ is
  $\Encode$-regular but not $2$-store regular.
\end{proposition}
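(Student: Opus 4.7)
The plan is to exhibit an explicit separating example and rely on the asymmetry between the two notions: tree automata operating on $\Encode(\cdot)$ can relate counter values stored in adjacent pushdown cells (since equal adjacent counters cause the split in the tree to happen at the same depth on both branches), whereas a $2$-store automaton processes the counters purely sequentially from top to bottom, exchanging only bounded state between two consecutive counter tests and firing those tests against a fixed finite family of unary $\bot$-automata.

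Concretely, I would fix some control state $q$ of the underlying $\PStorage(\CStorage)$ automaton and take
\begin{equation*}
  C = \set{(q,(\bot,n)(\bot,n))}{n\in\N}.
\end{equation*}
The first step is to check that $C$ is $\Encode$-regular. Unfolding the definition of $\Encode$ on $p_n := (\bot,n)(\bot,n)$ gives $p_n^l = (\bot,n-1)(\bot,n-1)$ and $p_n^r = \emptyset$ for $n\ge 1$, so inductively $\Encode(p_n)$ is a chain of $n$ internal $\bot$-nodes whose right children are all $\emptyset$, ending in the fixed finite base tree $\Encode((\bot,0)(\bot,0))$. Then $\Encode((q,p_n)) = q(\Encode(p_n),\emptyset)$, which is evidently recognized by a small tree automaton that walks down the left spine checking that every right child is $\emptyset$ and every internal label is $\bot$, and that terminates by accepting the hard-coded base tree.

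The second step is to show $C$ is not $2$-store regular. By Remark \ref{rem:Deteminise-2-stores} I may assume without loss of generality that any $2$-store automaton $\Aa$ accepting $C$ is deterministic with initial state $q_0$. An easy preliminary observation is that on configurations whose number of pushdown cells is not $2$ the automaton is forced to reject, which can be guaranteed uniformly by reading the height in $\Aa$'s state. On a $2$-cell configuration $(q,(\bot,m_1)(\bot,m_2))$ determinism pins down the run: the unique transition leaving $q_0$ on input top $\bot$ is some $(q_0,(\bot,\Bb_0),q_1)$, and the unique transition leaving $q_1$ on input top $\bot$ is some $(q_1,(\bot,\Bb_1),q_2)$ with $q_2\in F$. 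Hence acceptance holds if and only if $\Bb_0$ accepts $\bot^{m_2}$ and $\Bb_1$ accepts $\bot^{m_1}$, independently. Therefore the accepted set of counter pairs equals the product $L(\Bb_1)\times L(\Bb_0)\subseteq\N\times\N$, which can never coincide with the infinite diagonal $\set{(n,n)}{n\in\N}$. This contradiction completes the argument.

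The only delicate point is ensuring that the sequential ``product'' structure really is forced; this is where I would lean on the determinization remark, since an alternating $2$-store automaton a priori could branch universally while reading the first counter and thus in principle smuggle information about $m_2$ into the test applied to $m_1$. Once the deterministic reduction is in place the argument reduces to the elementary observation that the diagonal is not a Cartesian product, and all remaining bookkeeping (rejecting configurations of the wrong height, etc.) is routine.
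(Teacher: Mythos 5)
Your choice of witness set $C=\set{(q,(\bot,m)(\bot,m))}{m\in\N}$ and your argument for its $\Encode$-regularity coincide with the paper's. The divergence is in the lower bound, and there your proof has a real gap: you rest the whole argument on the determinisation remark read in its strongest form, namely that one may assume \emph{exactly one} transition $(q,(\tau,\Bb),q')$ per pair $(q,\tau)$. Under that reading the state reached after processing the topmost counter cannot depend on $m_2$ at all, so the accepted set of two-cell configurations is forced to be a single Cartesian product --- which is exactly the step you need, but it is too strong to be true. A $2$-store automaton with transitions $(q,(\bot,\Bb_{even}),p_{even})$, $(q,(\bot,\Bb_{odd}),p_{odd})$, $(p_{even},(\bot,\Bb_{even}),f)$, $(p_{odd},(\bot,\Bb_{odd}),f)$ accepts $\set{(q,(\bot,m_1)(\bot,m_2))}{m_1\equiv m_2 \bmod 2}$, which is not a Cartesian product; so no automaton of the ``exactly one transition per $(q,\tau)$'' kind is equivalent to it, and the wholesale reduction to that normal form fails. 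What the powerset construction actually gives is one transition per $(q,\tau)$ \emph{and per profile} $S$ of which component automata accept $\bot^{m}$, with the languages of the attached automata partitioning $\{\bot\}^*$; the state entered before reading $m_1$ then does depend on $m_2$ through $S$, and the accepted set of two-cell configurations is only a \emph{finite union} of rectangles $L_S\times L(\Bb_S)$.

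The gap is repairable along your own lines: any rectangle $A\times B$ contained in the diagonal satisfies $\lvert A\rvert\leq 1$ and $\lvert B\rvert\leq 1$, so a finite union of rectangles contained in the diagonal is finite, and the infinite diagonal is still excluded. Alternatively, the paper avoids determinisation altogether: since an accepting computation of the alternating automaton on $(q,(\bot,m)(\bot,m))$ uses one of finitely many sets of transitions, the pigeonhole principle yields $m_0\neq m_1$ whose accepting computations use the same transitions and hence spawn the same finite automata on $\bot^{m_0}$ and on $\bot^{m_1}$; splicing these together gives an accepting computation on $(q,(\bot,m_0)(\bot,m_1))\notin C$. Either repair is short, but as written your ``the accepted set equals the product $L(\Bb_1)\times L(\Bb_0)$'' does not follow from any correct determinisation.
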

\begin{proof}
  Let $C=\{ (q, (\bot, m)(\bot,m)) \mid m\in \N\}$.

  $C$ is clearly $\Encode$-regular because $\Encode(C)$
  contains a tree $T$ if and only if
  there is some $m$ such that the only leaves of $T$ are
  $0^{m+1}$ and $0^m10$. It is straightforward to design a
  tree-automaton for this set of trees.

  Heading for a contradiction, assume that $C$ is accepted by some
  alternating $2$-store
  automaton $\Aa$.  There are two numbers $m_0\neq m_1$ such that
  the accepting runs of $\Aa$ on $(q, (\bot, m_0)(\bot,m_0))$ and
  $(q, (\bot, m_1)(\bot,m_1))$ use the same transitions of $\Aa$. In
  particular, both computations
  spawn the same alternating finite
  automata
  $\Aa_1, \dots, \Aa_m$  to accept $\bot^{m_0}$ or $\bot^{m_1}$,
  respectively. But then $\Aa$ also accepts
  $(q, (\bot, m_0)(\bot,m_1))\notin C$ which is a contradiction.
\end{proof}

\begin{lemma}
  Let $C$ be a $2$-store-regular set.
  Then $C$ is $\Encode$-regular.
\end{lemma}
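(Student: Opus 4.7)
The plan is to construct, from an alternating $2$-store automaton $\Aa$ recognising $C$, a bottom-up tree automaton $\Tt$ accepting exactly $\{\Encode(c)\mid c\in C\}$. By Remark~\ref{rem:Deteminise-2-stores} I may assume $\Aa$ is deterministic with state set $Q$ and final states $F$, and I write $A'$ for the finite set of alternating unary word automata $\Bb$ appearing in its transitions. Since each $\Bb\in A'$ has a regular language over the one-letter alphabet $\{\bot\}$, there exist uniform constants $N,M\in\N$ depending only on $\Aa$ such that $\bot^v\in L(\Bb)$ iff $\bot^{v+M}\in L(\Bb)$ for every $\Bb\in A'$ and every $v\geq N$. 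Consequently, for every pushdown symbol $\sigma\in\{0,1,\bot\}$ the one-step action $\psi_\sigma(v)\colon Q_\bot\to Q_\bot$ that $\Aa$ performs on reading $(\sigma,v)$ (where $\bot$ denotes a rejecting sink adjoined to $Q$) satisfies $\psi_\sigma(v)=\psi_\sigma(v+M)$ for $v\geq N$.

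For every valid subtree $T$ of an encoding and every depth $d\in\N$ (the number of left edges on the path from the global root to the root of $T$), I define $\Phi_{T,d}\colon Q_\bot\to Q_\bot$ as the function sending $q$ to the state reached by $\Aa$ after processing the sequence that $T$ encodes at depth $d$, using the convention that a leaf at inner depth $e$ in $T$ carries counter value $d+e-2$. A case analysis on the encoding rules yields the recurrences
\begin{align*}
\Phi_{\emptyset,d}&=\id,\\
\Phi_{\bot(\sigma,T'),d}&=\psi_\sigma(d-1)\circ\Phi_{T',d},\\
\Phi_{\bot(T_l,T_r),d}&=\Phi_{T_l,d+1}\circ\Phi_{T_r,d},
\end{align*}
where $T_l$ in the third recurrence is a non-leaf subtree. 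A structural induction on $T$ then proves uniformly that $\Phi_{T,d+M}=\Phi_{T,d}$ for every $d\geq N+1$, so the infinite sequence $(\Phi_{T,d})_{d\in\N}$ is completely determined by the finite tuple $(\Phi_{T,0},\Phi_{T,1},\dots,\Phi_{T,N+M})$ of elements of $(Q_\bot)^{Q_\bot}$.

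The desired tree automaton $\Tt$ is bottom-up with states given by such tuples, and with transitions implementing the three recurrences above (using periodicity whenever a lookup $\Phi_{T_l,d+1}$ in the composite rule requires a value at a depth exceeding $N+M$). At the root, labelled by a control state $q$ of the $\PStorage(\CStorage)$-automaton, $\Tt$ accepts iff the right child equals $\emptyset$ and $\Phi_{T_\text{left},1}(q)\in F$; this is exactly the condition that $\Aa$ accepts the configuration $(q,p)$ with $T_\text{left}=\Encode(p)$. The main subtle point is establishing the eventual periodicity with a threshold $N+1$ that does not grow with the height of $T$. This is guaranteed by the structural induction, because neither the $\bot(\sigma,T')$-rule nor the $\bot(T_l,T_r)$-rule introduces any shift on $d$ beyond the $d-1$ appearing in the base-case lookup of $\psi_\sigma$, and the latter shift is already absorbed by the periodicity of $\psi_\sigma$ alone.
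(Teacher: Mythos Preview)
Your proof is correct and takes a genuinely different route from the paper's.

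The paper constructs a \emph{top-down} tree automaton that simulates the deterministic $2$-store automaton $\Aa$ operationally: it forms the product $\Bb$ of all the finite word automata $\Bb_1,\dots,\Bb_n$ appearing in the transition labels of $\Aa$, runs $\Bb$ along every root-to-leaf left-spine (so that the $\Bb$-state at a node encodes the depth as seen by each $\Bb_k$), and nondeterministically guesses, at each branching node, the intermediate $\Aa$-state connecting the segment encoded by the left subtree to the segment encoded by the right subtree. Node labels are tuples $(q,p,r,s)$ recording the start/end $\Aa$-states of the associated segment together with the current $\Bb_k$- and $\Bb$-states, and the transition rules simply propagate these along left and right children.

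Your construction is instead \emph{bottom-up and semantic}: you abstract the dependence on the counter value to the ultimately periodic behaviour of the unary languages $L(\Bb)$, define for every subtree $T$ the family of summary transformations $\Phi_{T,d}\colon Q_\bot\to Q_\bot$, and observe that this family is determined by finitely many values of $d$. The tree automaton then stores the whole tuple $(\Phi_{T,1},\dots,\Phi_{T,N+M})$ and updates it via your three recurrences. What each approach buys: the paper's version keeps the state space small (a single $\Bb$-state rather than an $(N{+}M)$-tuple of functions) but needs nondeterministic guessing of intermediate $\Aa$-states at branching points; your version is fully deterministic bottom-up and makes the correctness argument cleaner (a single structural induction establishing the periodicity), at the price of a larger state set. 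Conceptually the two are close cousins---your tuple indexed by $d$ is the ``uncurried'' version of the paper's running $\Bb$-state---but the packaging is quite different.

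One cosmetic point: the entry $\Phi_{T,0}$ in your tuple is never used (every $\bot$-rooted subtree of a valid encoding sits at depth at least $1$, so $\psi_\sigma(d-1)$ is only ever evaluated for $d\ge 1$); you may either drop it or declare it arbitrary. This does not affect correctness.
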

\begin{proof}
  Let $\Aa$ be a $2$-store automaton that recognises $C$.
  As explained in Remark \ref{rem:Deteminise-2-stores}, we may assume
  that $\Aa$ is deterministic.
  Let $\Bb_1, \dots, \Bb_n$ be the deterministic finite automata
  appearing in the transition labels of $\Aa$. Assume that $\Bb$ is
  the product automaton of $\Bb_1, \dots ,\Bb_n$ and assume that the
  state sets of all $\Bb_i$ are pairwise disjoint.
  A tree-automaton accepting  $\Encode(C)$ works as follows. It
  basically simulates all the $\Bb_k$ in parallel along all
  branches. Moreover, at every branching point of the tree it guesses
  the transition of $\Aa$ that connects the element of the pushdown
  encoded in the rightmost branch of the left subtree with the
  leftmost branch of the right subtree. The precise procedure is as
  follows.

  Let $c:=(q,(\tau_1,c_1)\dots(\tau_n,c_m))$. For each node $d$ of
  $\Encode(c)$ the subtree of nodes comparable to $d$ encodes some
  subpart $(q, (\tau_i,c_i) \dots (\tau_j,c_j))$ for $1\leq i \leq j
  \leq m$. An accepting run on $\Encode(c)$ will label this node $d$
  with a tuple $(q, p,r,s)$ where $q, s$ are states of
  $\Aa$, $p$ a state of $\Bb$ and $r$ a state of some $\Bb_k$
  ($1\leq k \leq n)$
  such that
  there is a run of $\Aa$ from state $q$ to state $s$ on
  $(\tau_i,c_i)\dots(\tau_j,c_j)$ such that the first transition of
  this run spawns a copy of $\Bb_k$ along the word $\bot^{c_i}$.
  This labelling is carried out in such a way that the labels of
  different nodes are compatible in the sense that the runs witnessed
  by the labels can be composed to an accepting run of $\Aa$ on $c$.

  For this purpose, the left successor of the root is labelled by
  $L_0:=(q_0,p_0,r_0,s_0)$ where $q_0=q$, $s_0$ is a final state of
  $\Aa$, $p_0$ is the initial state of some $\Bb_k$ and $r_0$ is the
  initial state of $\Bb$. Now the states are propagated as follows:
  \begin{itemize}
  \item If a node $d$ with label $L_d=(q_d,p_d,r_d,s_d)$ has only a
    left successor (which is not a leaf, i.e., the tree label of
    $d0$ is $\bot$), then set
    $L_{d0}:=(q_d,p_{d0},r_{d0},s_d)$ such that
    $p_{d0}$ is the unique state such that $(p_d,\bot,p_{d0})$ is a
    transition of $\Bb_k$. Similarly $r_{d0}$ is the successor of
    $r_d$ with respect to $\Bb$.
  \item If a node $d$ with label $L_d=(q_d,p_d,r_d,s_d)$ has a
    left successor (which is not a leaf, i.e., the tree label of
    $d0$ is $\bot$) and a right successor, then set
    $L_{d0}:=(q_d,p_{d0},r_{d0},s_{d0})$ and
    $L_{d1}:=(s_{d0},p_{d1},r_{d},s_d)$
    such that the following holds.
    $p_{d0}$ is the unique state such that $(p_d,\bot,p_{d0})$ is a
    transition of $\Bb_k$. Similarly $r_{d0}$ is the successor of
    $r_d$ with respect to $\Bb$. $s_{d0}$ is some state of $\Aa$ and
    $p_{d1}$ is one of the components of $r_{d}$, i.e., a state of one
    of the $\Bb_{k'}$ as simulated by $\Bb$ up to this position.
  \item If the left  successor of $d$ is a leaf, and $d$'s label is
    $L_d=(q_d,p_d,r_d,s_d)$ then
    we first compute $L_{d0}$ and (if necessary) $L_{d1}$ as in the
    steps before. If the right successor exists, it is labelled by
    $L_{d1}$, the left successor is labelled by an accepting state
    if $p_d$ is an accepting  state of $\Bb_k$ such that
    $(q_d, (\tau, \Bb_k), s_{d0})$ is a transition of $\Aa$ where $\tau$
    denotes the tree-label of the leaf at $d0$.
  \end{itemize}
  It is tedious but straightforward to prove that this tree-automaton
  accepts an encoding of a configuration if and only if it is in $C$.
  By taking a product with a tree-automaton recognising only valid
  encodings of configurations the claim is proved.
\end{proof}

Unfortunately, the previous result that $\Encode$-regularity is more
expressive that $2$-store-regularity does not imply that our result on
the backwards or forward reachability carries over to
$2$-store-regular sets of configurations. The translation from the
previous proof causes a blow-up of the state spaces.
In the next lemma, we show that this blow-up is inevitable even
if we start with deterministic $2$-store automata.

\begin{lemma}\label{lem:Tree-reg-prime-test-expensive}
  There is a sequence $(\Aa_n)_{n\in\N}$ of deterministic $2$-store
  automata such that there is no polynomial $p$ and a sequence
  $(\Bb_n)_{n\in\N}$ of tree automata such that
  $\lvert \Bb_n\rvert \leq P(\lvert \Aa_n \rvert)$ and
  $\Bb_n$ accepts the same language as $\Aa_n$ (modulo translation
  with $\Encode$.
\end{lemma}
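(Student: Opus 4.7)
The plan is to exhibit a family of deterministic $2$-store automata that exploit the succinctness of alternating finite automata over the unary alphabet $\{\bot\}$ and to argue that the tree-automaton descriptions of their encoded configuration sets must grow super-polynomially. First, I would fix a family of alternating unary automata $\Bb_n$ of size polynomial in $n$ whose languages $L_n := L(\Bb_n) \subseteq \{\bot\}^*$ require nondeterministic finite automata of size super-polynomial in $n$; the existence of such a family is guaranteed by the known succinctness of unary alternating automata, e.g.\ via the $\PSPACE$-hardness of emptiness for unary AFAs established by Jancar and Sawa~\cite{JancarS07}, which precludes polynomial AFA-to-NFA conversion unless $\mathbf{NP}=\PSPACE$. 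Next, I would let $\Aa_n$ be the deterministic $2$-store automaton with two states $\{q_0,q_f\}$, initial $q_0$, final $q_f$, principal transition $(q_0,(\bot,\Bb_n),q_f)$, and sink transitions on the other pushdown symbols so as to satisfy the determinism requirement. Its language is $\{(q_0,(\bot,m)) : m \in L_n\}$, and $\lvert \Aa_n \rvert$ is polynomial in $n$.

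For the tree-automaton lower bound, I would analyse the tree $T_m := \Encode((q_0,(\bot,m)))$: unfolding the definition of $\Encode$ shows that each $T_m$ is a left-comb whose spine has length $m+\mathcal{O}(1)$ and each of whose spine nodes has an empty leaf as right child. A bottom-up run of any tree automaton $\Bb'_n$ on $T_m$ first labels each empty leaf with a state from a fixed set, and then proceeds along the spine via a transition depending only on the label $\bot$ and on the states of the previous spine node and of an empty-leaf child; this is exactly the dynamics of a nondeterministic unary finite automaton on $\bot^m$. From $\Bb'_n$ one therefore extracts a unary NFA for $L_n$ of size $\mathcal{O}(\lvert \Bb'_n \rvert)$, so $\lvert \Bb'_n \rvert$ inherits the super-polynomial NFA lower bound for $L_n$; in particular no polynomial $p$ can satisfy $\lvert \Bb'_n \rvert \leq p(\lvert \Aa_n \rvert)$ for all $n$.

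The main obstacle lies in the first step: the choice of $L_n$ must be robust against the inherent subexponential tightness of unary NFA succinctness. Indeed, the natural candidate $L_n = \{\bot^k : p_1\cdots p_n \mid k\}$ yields, via the prime number theorem together with Chrobak's normal form, only a polylogarithmic gap between the alternating and the nondeterministic automata. A genuinely super-polynomial separation therefore has to be obtained either from the $\PSPACE$-hardness argument above or via a dedicated descriptional-complexity construction for unary alternating automata.
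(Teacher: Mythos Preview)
Your construction does not produce a \emph{deterministic} $2$-store automaton in the sense used here. In the paper's definition (see the remark on determinisation of $2$-store automata), determinism requires not only a unique outgoing transition per state and pushdown symbol, but also that the finite automaton labelling that transition be deterministic. Your $\Aa_n$ carries an alternating automaton on its sole transition, so it is an alternating $2$-store automaton. Replacing that alternating automaton by an equivalent DFA kills the size bound, and distributing the $n$ prime tests over several transitions out of the same state kills determinism of the outer automaton; there is no way to pack the conjunction of $n$ independent modular tests into a height-$1$ configuration while staying deterministic and small.

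The paper's proof sidesteps this by moving to height-$n$ configurations. The $2$-store automaton walks through states $q_1,\dots,q_{n+1}$ and at step $i$ spawns a small DFA checking that the $i$-th counter is divisible by the $i$-th prime $p_i$; this is genuinely deterministic and has $O(\sum_{i\le n} p_i)=O(n^3)$ states. For the lower bound it does not go through any NFA succinctness result at all: it applies the tree-automaton pumping lemma to $\Encode\bigl(q,(\bot,m)^n\bigr)$ with $m=\prod_{i\le n} p_i$. That tree has a left spine of length $m>2^n$ before any branching, so any accepting automaton with fewer than $2^n$ states pumps to an encoding of $(q,(\bot,k)^n)$ with $0<k<m$, which must be rejected since some $p_i\nmid k$. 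This is unconditional, whereas your appeal to the $\PSPACE$-hardness of unary AFA emptiness would, even if the determinism issue were absent, only give a bound conditional on $\mathbf{NP}\neq\PSPACE$.

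As an aside, the obstacle you flag is miscalculated: the language of multiples of $\prod_{i\le n} p_i$ \emph{does} separate unary AFAs from NFAs super-polynomially, since any unary NFA for it needs at least $\prod_{i\le n}p_i$ states by Chrobak's normal form (each cycle length must be a multiple of the product). The primes are not the obstruction; the determinism requirement on the $2$-store automaton is.
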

\begin{proof}
  It is easy to design a deterministic $2$-store multi-automaton $\Aa_n$
  that accepts a configuration $(\bot,v_1)(\bot,v_2)\dots(\bot,v_m)$
  if and only if
  \begin{enumerate}
  \item $m=n$, and
  \item $v_i = 0 \mod p_i$ for all $1\leq i \leq m$, where $p_i$ denotes
    the $i$-th prime.
  \end{enumerate}
  This is the automaton that goes from $q_1$ to $q_2$ to $\dots$ to
  $q_{n+1}$ spawning in the $i$-th step an automaton
  checking the length of the input modulo
  $p_i$. This automaton can be realised with $n+1+\sum\limits_{i=1}^n
  p_i\in O(n^3)$ states.

  Assume that there is a $\Bb_n$ with less than $2^n$ many states
  accepting $\Encode(C)$ for $C$ the configurations accepted by $\Aa_n$.

  Set $m:=\prod_{i=1}^n p_i$.
  There is an accepting run of $\Bb_n$ on $T:=\Encode(q,(\bot, m)(\bot,m)
  \dots (\bot,m))$ where the height of the encoded pushdown is $n$.
  Note that $m> 2^n$ and the leaves of $T$ are the nodes
  $0^{m+1}, 0^m10, 0^m1^20, \dots, 0^m1^{n-1}0$.
  Application of the pumping lemma for tree-automata yields that there
  is some $0< k < m$ and a tree whose leaves are
  $0^{k+1}, 0^k10, 0^k1^20, \dots 0^k1^{n-1}0$ accepted by $\Bb_n$.
  But this tree encodes $(q, (\bot, k)(\bot,k)\dots (\bot k))$ where
  $k$ is not divisible by all primes $p_1, \dots, p_m$.
  This contradicts the assumption that $\Bb_n$ accepts $\Encode(c)$ if
  and only if $\Aa_n$ accepts $c$.
\end{proof}

\subsection{Regularity via Sequences of Operations}
\label{sec:CarayolsRegularity}

Carayol \cite{Carayol05} introduced a notion of regularity based on
sequences of pushdown operations. He proved a normal form for this kind
of regular sets which we present in the next definition. His notion
also extends to higher-level pushdowns but for our purpose it suffices
to restrict the presentation to sets of  $\PStorage(\CStorage)$
configurations. In the following we write $\Reg(A)$ for the set of
regular expressions over alphabet $A$ and we write $L(r)$ for the
languages $L\subseteq A^*$ defined by some regular expression $r\in
\Reg(A)$.

\begin{definition}
  Let $t,s\in \Reg(\{\bot\})$ and $\sigma\in\{0,1,\bot\}$.
  Then we define a binary relation
  $\xrightarrow{test(t)\sigma s}$ on $\Sigma\times\{\bot\}^*$ by
  $(\tau, \bot^k) \xrightarrow{t\sigma s} (\tau', \bot^{k'})$
  if and only if $\bot^k\in L(t), \tau'=\sigma$, $k'\geq k$  and
  $\bot^{k'-k}\in L(s)$.
  We also define
  $(\tau, \bot^k) \xrightarrow{ s test(t) \sigma }
  (\tau', \bot^{k'})$ if and only if
  $k'\leq k$,
  $\bot^{k-k'}\in L(s)$,  $\bot^{k'}\in L(t)$ and $\tau'=\sigma$.
  These kind of definitions extend to expressions
  $e=\bigvee_{i=1}^{n_1} test(t_i)\sigma_i s_i \vee
  \bigvee_{i=1}^{n_2} r_i test(q_i) \tau_i$ via
  $\xrightarrow{e}:=
  \bigcup_{i=1}^{n_1} \xrightarrow{test(t_i)\sigma_i s_i}\cup
  \bigcup_{i=1}^{n_1} \xrightarrow{r_i test(q_i) \tau_i}$.
  We write $\bigcup_e \xrightarrow{e}$ for the set of all
  such relations.

  A \emph{sequence regular expression} is an expression
  $e= \bigvee_{i=1}^n r_is_i$ where each $r_i\in\Reg(\{\bot\})$ and
  each  $s_i\in \Reg(\bigcup_e \xrightarrow{e})$.

  Each sequence-regular expression $e=\bigvee_{i=1}^n r_is_i$ defines
  a set of configurations  $L(e)$ as follows:
  $(\sigma_0, m_0)(\sigma_1,m_1)\dots(\sigma_n,m_n)\in L(e)$ if and
  only if
  $\sigma_0=\bot$ and there is some $i\leq n$ such that
  $m_0\in L(r_i)$ and  there is a $w\in L(s_i)$ such that
  $w= \xrightarrow{e_0}\xrightarrow{e_1} \dots
  \xrightarrow{e_{n-1}}$ such that for all $j<n$
  $(\sigma_j,m_j)\xrightarrow{e_j} (\sigma_{j+1}, m_{j+1})$.

  We call a set $C$ of configurations \emph{sequence-regular} if and only
  if there is a sequence-regular expression $e$ such that $C=L(e)$.
\end{definition}

The main observation of this section is that the sets of sequence-regular
sets are a strict subset of the set of
tree-regular sets via $\Encode$. For our proof we assume the reader to
be familiar with pebble automata (cf.~\cite{Bojanczyk07} for a survey).
Moreover, we use the following results.

\begin{lemma}[\cite{Bojanczyk07}, Theorem 12 (cf.~also \cite{tenCateS2008}]
  Positive cutting caterpillar expressions define the same tree
  languages as pebble automata.
\end{lemma}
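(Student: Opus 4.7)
The plan is to establish both inclusions by structural translations, since the statement is an equivalence of language classes. Concretely, I would prove (a) every positive cutting caterpillar expression can be simulated by a pebble automaton, and (b) every pebble automaton can be captured by some positive cutting caterpillar expression.

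For direction (a), I would proceed by induction on the structure of the expression. Atomic tests and single-step navigation moves are handled trivially by a pebble automaton that inspects its current node or makes one move. Boolean combinations and Kleene closure can be simulated by dropping a pebble to mark the current node, running the subautomaton, and then returning to the marked pebble before continuing. The crucial constructor is the positive cutting operator, which restricts evaluation to the subtree rooted at the current position; I would simulate it by placing a pebble at that root and forbidding the simulation of the sub-expression from moving above that pebble. The number of pebbles needed then matches the nesting depth of cutting operators in the original expression.

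For direction (b), I would argue by induction on the number of pebbles available to the automaton. The base case of zero pebbles corresponds to ordinary tree-walking automata, for which the characterisation by caterpillar-style expressions is classical. For the inductive step with $k$ pebbles, I would analyse an accepting run by focusing on the top pebble: each time it is placed and later lifted defines a maximal segment in which the machine operates with only $k-1$ pebbles. By induction hypothesis, each such segment is describable by a $(k{-}1)$-level caterpillar expression; the movement of the top pebble between segments, together with the subtree restriction induced by its placement, is then expressible via iteration wrapped inside the positive cutting operator.

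The main obstacle will be the inductive step in direction (b): pebble automata impose a strictly stack-like discipline on pebble placement (only the topmost pebble can be removed), and the translation must preserve exactly this discipline. The positive cutting operator is tailored to mirror this by prohibiting navigation above the point at which the current ``stack frame'' was opened. The delicate bookkeeping needed to argue that the inductively constructed sub-expression stays within the allowed tree region, and that the overall expression remains positive (i.e. free of negation over sub-expressions in whose scope a pebble lives), is the real technical core. Once this invariant is locked in, the rest follows by standard regular-expression closure arguments.
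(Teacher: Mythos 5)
This lemma is not proved in the paper at all: it is imported verbatim as Theorem~12 of the cited survey by Boja{\'n}czyk (cf.~also ten Cate and Segoufin), and the authors use it only as a black box in their comparison of notions of regularity. So there is no in-paper argument to measure your proposal against. On its own merits, your plan follows the standard strategy from that literature: structural induction for the translation of expressions into pebble automata, with the cutting operator simulated by dropping a pebble and confining the sub-run to the subtree below it, and for the converse an induction on the number of pebbles that decomposes an accepting run according to the lifetime of the topmost pebble, using the base case that zero-pebble automata are exactly tree-walking automata captured by ordinary caterpillar expressions. You also correctly identify the genuine technical core, namely verifying that the stack discipline on pebbles is exactly mirrored by the nesting of cutting operators and that positivity (absence of negation over sub-expressions in whose scope a pebble is alive) is preserved through the induction; a complete proof would have to carry out that bookkeeping, which your sketch defers. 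As a proof \emph{plan} it is sound and consistent with the cited sources, but be aware that within this paper the statement is an external citation, not an original result, so reproving it here is not what the authors intend.
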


\begin{lemma}[\cite{BojanczykSSS06}, Theorem 1.1]
  The languages recognised by pebble automata are a strict subset of
  the languages recognised by tree-automata.
\end{lemma}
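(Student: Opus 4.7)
The plan is to establish both the containment and the strictness separately. Containment is the direction that pebble automata definable languages are regular. My approach would be: for a pebble automaton $\mathcal{P}$ with $k$ pebbles, fix a node $v$ of the input tree and consider the ``external behaviour'' of $\mathcal{P}$ restricted to the subtree $T_v$ rooted at $v$, parametrised by (i) the control state when entering $v$ from above, (ii) which of the $k$ pebbles are currently placed inside $T_v$ and at which positions, and (iii) which pebbles are placed outside $T_v$. Because the head of $\mathcal{P}$ must leave $T_v$ through $v$, this behaviour can be summarised as a finite relation between ``entering'' and ``exiting'' configurations. This summary can be computed inductively bottom-up, which yields a (deterministic bottom-up) tree automaton simulating $\mathcal{P}$. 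Alternatively, one can observe that pebble automata are subsumed by MSO on trees (each pebble is a first-order variable, movement is encoded by a formula), and conclude by the Doner--Thatcher--Wright theorem that MSO equals regular tree languages.

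For strictness, I would use an Ehrenfeucht--Fra\"{\i}ss\'e-style argument tailored to pebble automata. Concretely, for every $k$, introduce a back-and-forth equivalence $\equiv_k$ on trees such that any $k$-pebble automaton cannot distinguish $\equiv_k$-equivalent trees; the moves of the game allow Spoiler to place pebbles in LIFO order while Duplicator must respond preserving the local neighbourhoods of all currently placed pebbles in both trees. The goal is then to exhibit, for every $k$, a pair $(T_k, T'_k)$ of trees with $T_k \equiv_k T'_k$ but such that a fixed regular property $P$ separates them: $T_k \in P$ while $T'_k \notin P$. Regularity of $P$ can be witnessed by a concrete bottom-up automaton, while $\equiv_k$-equivalence shows no $k$-pebble automaton recognises $P$. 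A natural candidate for $P$ involves a deep nesting of a counting/parity feature across many nearly-identical branches, where tree automata compress the information per subtree into a finite state but $k$ pebbles are insufficient to simultaneously compare the required positions across all branches.

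The main obstacle will be making the EF argument rigorous for the pebble model, which has a subtle two-way traversal and a LIFO stack discipline on pebbles: one must simultaneously maintain positional correspondence of all currently placed pebbles under arbitrary head movements, not just under local transitions. Identifying an explicit separator is also delicate --- the language must be regular (easy to certify), invariant under the EF game for every $k$ (hard to certify), and non-trivial at every nesting depth. I would follow the construction in \cite{BojanczykSSS06}, whose separating language exploits a depth-parameterised family that defeats any fixed pebble budget while remaining bottom-up computable. Once such a family is in hand, strictness is immediate, and combined with the straightforward containment this establishes the lemma.
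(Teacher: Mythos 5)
This lemma is not proved in the paper at all: it is imported verbatim as Theorem~1.1 of \cite{BojanczykSSS06} and used as a black box (together with the preceding lemma on caterpillar expressions and the remark crediting Boja\'nczyk for adapting the separating example to trees of the form $\Encode(C)$). So there is no ``paper proof'' to compare against; the only question is whether your blind attempt would stand on its own. Your containment direction is fine and standard --- either the bottom-up summarisation of crossing behaviours at a node $v$ (taking care that the summary must also record how pebbles outside $T_v$ constrain re-entry, which makes the ``finite relation'' slightly more delicate than stated, since pebble positions inside $T_v$ range over an unbounded set and must themselves be abstracted into types) or the encoding into MSO followed by Doner--Thatcher--Wright.

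The strictness direction, however, has a genuine gap: it is a plan rather than a proof. You never exhibit the separating regular language $P$, never define the equivalence $\equiv_k$ concretely, and never prove the invariance of $k$-pebble automata under it --- and you explicitly close the argument with ``I would follow the construction in \cite{BojanczykSSS06}'', i.e., the entire nontrivial content is outsourced to the very theorem being proved. You also correctly identify why this is hard (two-way head movement plus the LIFO pebble discipline means Duplicator must maintain a global positional correspondence, not just local neighbourhoods), but identifying the obstacle is not the same as overcoming it. Be aware as well that the published argument is not a textbook EF game: the separation in \cite{BojanczykSSS06} rests on a combinatorial analysis of pebble-automaton configurations/behaviours tailored to a specific language family, so even the shape of your intended argument would need to be reworked to match a construction that actually goes through. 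As it stands, the attempt establishes only the easy inclusion.
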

\begin{remark}
  We thank Miko{\l}aj Boja{\'n}czyk for pointing out that the
  separating example can
  be easily adapted to be a set of trees $T$ such that
  $T=\Encode(C)$ for a set of configurations $C$. Basically, one first
  translates the example into a set of unlabelled trees by encoding the
  labels as certain subtrees and then one adds the labels necessary to
  make the trees encodings of configurations.
\end{remark}

\begin{theorem}The following holds:
  \begin{itemize}
  \item  For each sequence-regular set $C$ of configurations,
    $C$ is $\Encode$-regular.
  \item There is an $\Encode$-regular set which is not
    sequence-regular.
  \end{itemize}
\end{theorem}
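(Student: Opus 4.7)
Given $e=\bigvee_{i=1}^n r_i s_i$, I would build a bottom-up tree automaton $\Tt_e$ accepting $\Encode(L(e))$. Two observations drive the construction: every regular expression over $\{\bot\}^*$ appearing in $e$ (the $r_i$ together with the $t,s$ components of transition labels) is recognised by a finite DFA, and in $\Encode(c)$ the pushdown elements of $c$ correspond exactly to the leaves in inorder, with the relative counter values of consecutive elements encoded by local tree shape at the split point below a $\bot$-node. At each subtree I would maintain a finite summary consisting of the DFA-states reached along the $\bot$-chains descending to its leftmost and rightmost leaves, together with the set of pairs $(p,q)$ of NFA-states for some $s_i$ such that, reading the inorder leaves of the subtree with the tree-determined transition labels, the NFA may move from $p$ to $q$. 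Combining summaries at an internal node only requires matching the right spine of the left child against the left spine of the right child and invoking the transition-label semantics, which is a bounded-memory operation; acceptance checks that the left-spine DFA state validates some $r_i$ and that the global $(p,q)$ pair is initial-to-final for the corresponding $s_i$.

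\textbf{Direction 2 (separating $\Encode$-regular set).}
The walk underlying a sequence-regular expression is an inorder walk through the leaves of $\Encode(c)$, with ascent and descent along $\bot$-spines for the regular counter tests. This is precisely the kind of local tree navigation captured by positive cutting caterpillar expressions on a suitable relabelling of $\Encode$-trees. By \cite{Bojanczyk07}, Theorem~12, such expressions coincide with pebble-automata-definable tree languages, which by \cite{BojanczykSSS06}, Theorem~1.1, form a strict subclass of the regular tree languages. The remark recalled in the excerpt explains how the separating tree language of \cite{BojanczykSSS06} can be re-encoded as $\Encode(C)$ for a set $C$ of $\PStorage(\CStorage)$ configurations by coding the node labels as small subtrees and then adjoining the $\bot$/$0$/$1$ labelling required by $\Encode$. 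Such a $C$ is $\Encode$-regular by construction; if it were sequence-regular, the translation alluded to above would produce a caterpillar expression for the separating tree language, contradicting strictness.

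\textbf{Main obstacle.}
The bookkeeping in direction~1 is routine once the two shapes of the $\Encode$ recursion (the $v_1=0$ case with a bare $\sigma_1$ leaf, and the generic branching case) are handled uniformly. The substantive work lies in direction~2: one has to verify that each transition-label shape (test-then-push, and pop-then-test) can indeed be realised by a caterpillar walk on $\Encode(c)$, and then faithfully adapt the Boja\'nczyk et al.\ separating tree language into the $\Encode$-format. Both steps are notational rather than conceptual, but the separation ultimately depends on carrying them out carefully.
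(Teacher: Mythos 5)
Your second direction coincides with the paper's: both rest on the strictness of pebble-automata languages inside the regular tree languages (\cite{BojanczykSSS06}), the re-encoding of the separating example into the $\Encode$-format, and the translation of sequence-regular expressions into positive cutting caterpillar expressions. For the first direction you diverge: the paper does \emph{not} build a tree automaton directly, but reuses that same caterpillar translation (sequence-regular $\Rightarrow$ caterpillar expression $\Rightarrow$ pebble automaton $\Rightarrow$ tree automaton, the last two steps by the cited results), so a single construction serves both bullets. Your bottom-up construction is more self-contained for that bullet, since it does not lean on the caterpillar/pebble equivalence, but as you need the caterpillar translation anyway for the separation you end up doing two constructions where the paper does one. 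One caveat on your summary for the direct construction: the $test(t)$ components of a transition label constrain the \emph{absolute} counter value $v_i$, i.e., the number of left successors on the path from the root of the whole tree to leaf $i$, and this is not determined by data local to the split node nor by DFA states along the leftmost and rightmost spines of a subtree alone. You must additionally propagate, for each subtree, the finite set of pending pairs consisting of a test and the reversed-DFA state reached on the within-subtree left-depth of the relevant leaf; this set advances uniformly on each left edge above the subtree because all leaves of a subtree share the same residual left-depth to the root, so it remains a finite summary and the construction goes through. The difference tests $\bot^{\lvert v_{i+1}-v_i\rvert}\in L(s)$ are, by contrast, genuinely local: below the split node one of the two adjacent leaves sits at left-depth exactly $1$ and the other at left-depth $\lvert v_{i+1}-v_i\rvert+1$, so your spine-matching step works as stated.
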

\begin{proof}
  In fact,
  if $P$ is sequence-regular, then $\Encode(P)$ is defined by a
  positive cutting caterpillar expression.
  This is due to the fact that the inorder traversal of
  $\Encode(c)$ for some configuration $c$
  visits the maximal paths in the order in which they appear as
  elements of the pushdown.

  Assume that $P$ is described by $r=\bigvee_{i=1}^n r_i s_i$.
  We translate $r$ by structural induction into a (positive cutting)
  caterpillar expression $r'$ recognising $\Encode(P)$.
  Since caterpillar expressions are closed under finite unions, it
  suffices to translate $r_i s_i$.
  Fix a $\PStorage(\CStorage)$ configuration
  $c= (\bot, m_0)(\sigma_1,m_1)(\sigma_2,m_2)\dots(\sigma_n,m_n)$ and
  let $T=\Encode(c)$.
  In order to check that $c\in L(r_i s_i)$ we first have to check that
  $m_0\in L(r_i)$. But this is equivalent to check that the leftmost
  branch of $T$ is of the form $m_0 \bot$. Thus, we modify $r_i$ to
  $r_i'$ by inserting a move to the left child before any letter
  occurring in $r_i$ and add a final move to the left child and a check
  that the leaf is labelled by $\bot$.

  Next we describe how to translate $s_i$ into a caterpillar
  expression $s_i'$ which leads to acceptance from the leftmost leaf
  of $T$ if and only if $c\in L(r_i s_i)$ and the path to this leaf
  satisfies $r_i'$.
  Recall that $s_i$ is a regular expression over relations
  $\overset{e}{\rightarrow}$. In order to satisfy $s_i$, we need to
  find a sequence of relations $\overset{e_i}{\rightarrow}$ such that
  $(\sigma_i, m_i) \overset{e_i}{\rightarrow}
  (\sigma_{i+1}, m_{i+1})$.
  Note that $(\sigma_i, m_i)$ and $(\sigma_{i+1}, m_{i+1})$ are
  encoded by the paths to $2$ adjacent leaves (in the inorder
  traversal). Thus, it suffices to gives a caterpillar expression
  $cp(e_i)$ that describes a pebble-automaton that runs from one leaf
  to the next leaf in the inorder traversal if and only if
  the corresponding paths are connected by
  $\overset{e_i}{\rightarrow}$. Once we have obtained such an
  expression, we can replace every occurrence of
  $\overset{e_i}{\rightarrow}$ by $cp(e_i)$ in $s_i$ and composition
  of the resulting expression $s_i'$ with $r_i'$ has the property that
  $r_i' s_i'$ describes a pebble-automaton run from the root to the
  rightmost leaf on $T$ if and only if $T=\Encode(c)$ for some
  $c\in L(r_i s_i)$ which completes the proof.

  In order to obtain the expression $cp(e)$ for any relation
  $\overset{e}{\rightarrow}$ we make a case distinction on the form of
  $e$.
  \begin{itemize}
  \item Assume that $e= test(t) \sigma s$. In this case $cp(e)$ first
    uses the nesting operator in order to spawn a subexpression
    $t'$ where every occurrence of $\bot$ in $t$ is replaced by
    an arbitrary sequence of moves from a right successor to its
    parent and then one move from a left successor to a $\bot$
    labelled parent. Afterwards, the main expression checks that we
    are at a leaf that is a left child, 
    we move to the parent, then to the right child and then as in the
    translation of $r_i$ we execute $s$ along the leftmost branch of
    this subtree. Additionally, we check that the leaf of this
    leftmost branch is labelled by $\sigma$.
  \item Assume that $e= s test(t) \sigma$. In this case $cp(e)$ goes
    to the parent node until coming from a left child the node has a
    right child. Then it spawns a subexpression to the left child
    which evaluates $s$ along the rightmost branch of this subtree.
    It also spawns a subexpression $t'$ as in the previous
    case. Finally it goes to the right child and then to the left
    child. There it checks that this node is a leaf labelled
    $\sigma$.
  \end{itemize}
  Finally, when $cp(e)$ reaches the rightmost leaf, it accepts the
  whole tree.

  Now using the expression $cp(e)$ instead of
  $\overset{e}{\rightarrow}$ in the $s_i$ we can translate
  $r=\bigvee_{i=1}^n r_i s_i$ into $r'=\bigvee_{i=1}^n r'_i s'_i$ and
  obtain a positive cutting caterpillar expression that recognises
  $\Encode(P)$ for $P$ the sequence-regular set we started with.
\end{proof}
\begin{remark}
  Similarly to our construction, it is easy to translate $2$-store
  automata (after determinisation) into caterpillar expressions or
  sequence-regular expression. Thus, we have a strict hierarchy with
  respect to
  expressive power from $2$-store-regularity via sequence-regularity
  to $\Encode$-regularity.
\end{remark}

As in the case of $2$-store-regularity, sequence-regularity may
provide more succinct descriptions of regular sets.
\begin{lemma}
  There is a sequence of sequence-regular expressions $r_n$ of size
  polynomial in $n$ such that there is no sequence of tree-automata
  $\Aa_n$ of size polynomial in $n$ such that $\Aa_n$ recognises
  $\Encode(L(r_n))$ for each $n\in\N$.
\end{lemma}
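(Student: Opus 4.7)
The plan is to reuse the prime-based construction from Lemma \ref{lem:Tree-reg-prime-test-expensive}. Let $p_i$ denote the $i$-th prime and let $C_n$ be the set of configurations $(\bot,v_1)(\bot,v_2)\dots(\bot,v_n)$ with $v_i \equiv 0 \pmod{p_i}$ for $1 \le i \le n$. The lower-bound direction transfers verbatim: applying the pumping argument from the proof of Lemma \ref{lem:Tree-reg-prime-test-expensive} to the encoding of $(q,(\bot,m)\dots(\bot,m))$ for $m = \prod_{i=1}^{n} p_i$ shows that every tree-automaton recognising $\Encode(C_n)$ has at least $2^n$ states. Hence the task reduces to exhibiting a polynomial-size sequence-regular expression $r_n$ with $L(r_n)=C_n$.

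I would set $r_n = r_1^{(n)} s_1^{(n)}$ with the single disjunct $r_1^{(n)} = (\bot^{p_1})^*$ (which forces $v_1 \equiv 0 \pmod{p_1}$), and take $s_1^{(n)}$ to be the concatenation $\xrightarrow{e_1}\cdot \xrightarrow{e_2}\cdots\xrightarrow{e_{n-1}}$. Since $s_1^{(n)}$ has fixed length $n-1$, matching it forces the pushdown to have exactly $n$ elements. Each atom $e_i$ will be engineered so that $\xrightarrow{e_i}$ relates $(\bot,v_i)$ to $(\bot,v_{i+1})$ if and only if $v_{i+1} \equiv 0 \pmod{p_{i+1}}$, with no further constraint on $v_i$ beyond those already imposed by previous atoms.

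The subtlety is that the forward form $test(t)\,\sigma\,s$ tests $v_i$ absolutely together with the difference $v_{i+1}-v_i$, whereas the backward form $s\,test(t)\,\sigma$ tests the difference $v_i - v_{i+1}$ together with $v_{i+1}$ absolutely. I handle this by combining both shapes inside a single $e_i$. A backward disjunct $\bot^*\,test((\bot^{p_{i+1}})^*)\,\bot$ covers all pairs with $v_{i+1} \le v_i$ directly, using the absolute test on the destination. For $v_{i+1} > v_i$ I use $p_{i+1}$ forward disjuncts $test((\bot^{p_{i+1}})^*\bot^j)\,\bot\,(\bot^{p_{i+1}})^*\bot^{p_{i+1}-j}$ indexed by $0 \le j < p_{i+1}$: disjunct $j$ fires exactly when $v_i \equiv j \pmod{p_{i+1}}$ and $v_{i+1}-v_i \equiv -j \pmod{p_{i+1}}$, which jointly enforce $v_{i+1} \equiv 0 \pmod{p_{i+1}}$. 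Each $e_i$ then has size $O(p_{i+1}^2)$, so by the prime number theorem the total size of $r_n$ is polynomial in $n$.

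The main obstacle is a purely arithmetic bookkeeping check: verifying that the union of these forward and backward disjuncts accepts exactly the pairs $(v_i,v_{i+1})$ with $v_{i+1} \equiv 0 \pmod{p_{i+1}}$, and no spurious ones. Once this is confirmed, the succinctness lower bound of Lemma \ref{lem:Tree-reg-prime-test-expensive} applies unchanged to $C_n$ and yields the desired polynomial-versus-exponential gap between sequence-regular expressions and tree-automata.
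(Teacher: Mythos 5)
Your proof is correct, and it differs from the paper's in the choice of witness language, which forces you to do some extra work that the paper deliberately sidesteps. You keep the set $C_n=\{(\bot,v_1)\dots(\bot,v_n) \mid p_i \text{ divides } v_i\}$ from Lemma~\ref{lem:Tree-reg-prime-test-expensive}, so your lower bound is literally that lemma's pumping argument; but since the relations $\xrightarrow{e}$ can only test the source absolutely (forward form, $k'\geq k$) or the destination absolutely (backward form, $k'\leq k$), expressing ``$v_{i+1}\equiv 0 \pmod{p_{i+1}}$ with no constraint on $v_i$'' requires your case split into one backward disjunct for $v_{i+1}\leq v_i$ and $p_{i+1}$ forward disjuncts indexed by the residue of $v_i$ for $v_{i+1}>v_i$. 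I checked the arithmetic you flagged as the main obstacle: disjunct $j$ accepts exactly the pairs with $v_i\equiv j$ and $v_{i+1}\equiv 0 \pmod{p_{i+1}}$ and $v_{i+1}>v_i$ (the case $v_{i+1}=v_i$ being absorbed by the backward disjunct), the union introduces no spurious pairs, and $\sum_{i\le n}O(p_i^2)$ is polynomial. The paper instead changes the witness set to $\{(q,(\bot,c)^n) \mid c \equiv 0 \bmod p_i \text{ for all } i\le n\}$, i.e., it additionally requires all counters to be \emph{equal}; then each relation is just the identity restricted to multiples of $p_i$, written as a single forward atom $test((\bot^{p_i})^*)\,\bot\,\varepsilon$, and no residue bookkeeping is needed. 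The pumping argument is unaffected by the extra equality constraint because the pumped tree still encodes a configuration with all counters equal. So the paper's route gives a shorter expression and proof, while yours shows the stronger fact that the original, unconstrained divisibility set of Lemma~\ref{lem:Tree-reg-prime-test-expensive} is itself sequence-regular with polynomial-size expressions, which makes the succinctness gap between the two formalisms slightly more direct.
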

\begin{proof}
  Let $C$ be the set of configurations
  $(q, (\bot,c_1)(\bot,c_2)\dots(\bot, c_n))$ such
  that $c_1=c_2=\dots=c_n$ and $c_i$ is divisible by the $i$-th prime.
  It is straightforward to write down a sequence-regular expression of
  polynomial size in $n$ that describes $C$:

  $r= r s$ where
  $r=(\bot^2)^*$
  and $s=\overset{e_2}{\rightarrow}\overset{e_3}{\rightarrow} \dots
  \overset{e_{n}}{\rightarrow}$
  where $\overset{e_i}{\rightarrow}$ is the identity function on all
  $(\bot, \bot^m)$ such that $m$ is divisible by the $i$-th prime
  (which basically amounts to spawning the test $test((\bot^{p_i})^*)$).

  To the contrary, as we have seen in the proof of Lemma
  \ref{lem:Tree-reg-prime-test-expensive}, a tree-automaton
  recognising $\Encode(C)$ needs at least
  $2^n$ many states.
\end{proof}

\section{Comparison of Expressive Power}
\label{app:Storage_Equivs_Long}
In the last decades
several equivalent definitions of higher-order pushdowns were
used. Each of these can be restricted to unary stack alphabets
resulting in a priori different kinds of storage types that could be
called higher-order counters. In the following we show that most of
these variants lead to storage types that lead to equivalent notions
of nondeterministic higher-order counter automata. Note that our
definition of higher-order counters leads
to the most expressive variant of deterministic higher-order counter
automata among those that we consider in the following.

First we consider higher-order
pushdowns where only level 1 pushdowns contain stack symbols. By this
we mean that a level $k$ pushdown is not a list of pairs of stack
symbols and level $k-1$ pushdowns but only a list of level $k-1$
pushdowns. We will show that this definition is equivalent to our
definition in the case of higher-order pushdowns (which is well-known
and straightforward) as well as for higher-order counters with
$0$-test. For higher-order counters without $0$-test, we do not know
whether the two versions are equivalent. At least it is
clear that our version can simulate the more restricted version without
higher-level pushdown symbols. Thus, our upper bounds, in
particular the polynomial time algorithm for reachability on level
$2$, carry over to this setting. 
In fact, a simple adaptation of Slaat's proof \cite{Slaats12}
that  \HOCSM[k] can simulate $\HOPS[(k-1)]$
shows
that nondeterministic $\HOPS[(k-1)]$ can be simulated by 
this
restricted version of nondeterministic $\HOCSO[k]$ the lower bounds
also holds. 

Finally, we also consider higher-order pushdown automata with inverse
push-operations (cf.~\cite{cawo03,woehrle05}). In these systems
the level $k$ $\pop$-operation is replaced by a restricted version
which is only applicable if the two topmost level $k-1$ pushdowns
coincide. Carayol and Woehrle \cite{cawo03}  have shown that
this  kind of higher-order pushdown storage is equivalent to the usual
one for nondeterministic automata
(see \cite{woehrle05} for a proof). We show that this carries over to
higher-order counters. In
particular, even when we replace $\pop$ by the inverse push-operation
and do not allow pushdown symbols on higher levels, the resulting
nondeterministic higher-order counter automata with $0$-test (without
$0$-test, respectively)
is still equivalent
to our notion of nondeterministic higher-order counter automata with
$0$-test (without $0$-test, respectively). Some of these results carry
over to deterministic automata as well. Before we go into the details
we summarise our results in
the following two theorems. We say $\Storage$ automata simulate $\Storage'$
automata if for each $\Storage'$ automaton there is a $\Storage$
automata recognising the same language and generating the same
configuration graph after $\varepsilon$-contraction.

\begin{theorem} \label{thm:equivStoragesNondeterministic}
  The following holds:
  \begin{enumerate}
  \item
    For any of the following storage types  the
    nondeterministic $r$-way automata of one type can simulate the
    nondeterministic $r$-way automata of another type for ($r\in\{1,2\}$):
    \begin{itemize}
    \item $\PStorage^k(\PStorage)$, i.e., level $k$ pushdown automata with
      pushdown symbols on each level (Engelfriet's definition of level
      $k$ pushdown automata),
    \item $\CStorage^k(\PStorage)$, i.e., level $k$ pushdown automata
      with pushdown symbols only on level $1$ (used for instance in
      \cite{HagueOng08}),
    \item $\CStorage^k_{inv}(\PStorage)$,
      i.e., level $k$ pushdown automata
      with pushdown symbols only on level $1$ and inverse push
      operations (introduced in
      \cite{cawo03}),
    \item $\PStorage^k_{inv}(\PStorage)$,
      i.e., level $k$ pushdown automata
      with pushdown symbols on each level and  with inverse push
      operations.
    \end{itemize}
  \item
    The analogous statement for nondeterministic higher-order counter
    automata with $0$-test also holds.
    For any of the following storage types  the
    nondeterministic $r$-way automata of one type can simulate the
    nondeterministic $r$-way automata of another type (for
    $r\in\{1,2\}$):
    \begin{itemize}
    \item $\PStorage^k(\CStorageZero)$, i.e., level $k$ counter
      automata (with $0$-test) with
      pushdown symbols on each level,
    \item $\CStorage^k(\CStorageZero)$, i.e., level $k$ counter
      automata (with $0$-test)
      with pushdown symbols only on level $1$,
    \item $\CStorage^k_{inv}(\CStorageZero)$,
      i.e., level $k$ counter
      automata (with $0$-test)    with pushdown symbols only on level
      $1$ and inverse push  operations,
    \item $\PStorage^k_{inv}(\CStorageZero)$,
      i.e., level $k$ counter
      automata (with $0$-test) with pushdown symbols on each level and
      with inverse push  operations.
    \end{itemize}
  \item
    For nondeterministic higher-order counter
    automata without $0$-test we can only prove a weaker result.
    For any of the following storage types  the
    nondeterministic r-way automata of one type can simulate the
    nondeterministic r-way automata of another type (for $r\in\{1,2\}$):
    \begin{itemize}
    \item $\PStorage^k(\CStorage)$, i.e., level $k$ counter
      automata (without $0$-test) with
      pushdown symbols on each level,
    \item $\CStorage^k_{inv}(\CStorage)$,
      i.e., level $k$ counter
      automata (without $0$-test)    with pushdown symbols only on level
      $1$ and inverse push  operations,
    \item $\PStorage^k_{inv}(\CStorage)$,
      i.e., level $k$ counter
      automata (without $0$-test) with pushdown symbols on each level and
      with inverse push  operations.
    \end{itemize}
    Moreover, nondeterministic
    $\CStorage^k(\CStorage)$ automata can be simulated by
    each of the above mentioned automata.
  \end{enumerate}
\end{theorem}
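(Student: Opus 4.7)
The plan is to build each of the three parts from three core ingredients: trivial sub-storage inclusions, the mod-$3$ encoding of Proposition \ref{prop:equivStorages1}, and the Carayol--Wöhrle correspondence between $\pop$-based and inverse-push-based higher-order pushdowns \cite{cawo03,woehrle05}. Since each unary-alphabet variant is a syntactic restriction of the corresponding $\PStorage$ variant, and each inverse-push variant is a syntactic restriction of the corresponding $\pop$ variant, one direction of every claimed equivalence is immediate; only the ``upgrading'' directions need genuine work.

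For part (1), the upgrading $\PStorage^k(\PStorage) \preceq \CStorage^k(\PStorage)$ is the classical encoding of a higher-level pushdown symbol into a distinguished pattern on the level-$1$ pushdown, implicit in Engelfriet \cite{Engelfriet91}. The equivalence with the inverse-push variants is the main content of Carayol and Wöhrle \cite{cawo03,woehrle05}: they simulate each level-$k$ $\pop$ by a nondeterministically guessed sequence of $\invpush{k}$ operations, tracking which pushes are undoable by a bookkeeping label stored in the topmost level-$1$ symbol. Nondeterminism is essential here in order to guess the correct undo pattern; this is why the theorem is formulated for nondeterministic automata only.

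For part (2), the same scheme lifts cleanly. Proposition \ref{prop:equivStorages1} already supplies $\PStorage^k(\CStorageZero) \equiv \CStorage^k(\CStorageZero)$ by encoding the three possible higher-level pushdown symbols as the residue $\bmod\ 3$ of the topmost level-$1$ counter, and the $0$-test of $\CStorageZero$ is exactly what allows reading back this residue by iterated pops. The Carayol--Wöhrle bookkeeping then adapts to the counter setting: its finitely many labels become further constants tracked inside the mod-encoding on the level-$1$ counter, and the $0$-test again provides the required reading capability. Chaining both constructions yields the cycle of simulations in the counter-with-zero-test setting.

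For part (3), the obstacle is precisely that $\CStorage$ lacks the $0$-test, so there is no direct way to decode a mod-$3$ residue at the base. The Carayol--Wöhrle correspondence still works at the pushdown level because it only relies on finitely many distinguishable level-$1$ symbols, which the $\PStorage$ variants continue to provide; hence $\PStorage^k(\CStorage) \equiv \PStorage^k_{inv}(\CStorage)$. The further equivalence with $\CStorage^k_{inv}(\CStorage)$ uses $\invpush{k}$ itself as an implicit equality test on higher-level pushdowns, substituting for the missing $0$-test: higher-level ``symbols'' are encoded by distinguishable patterns of lower-level pushes, and decoding proceeds by trying the candidate $\invpush{k}$ operations and seeing which one succeeds. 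In contrast, the most restricted type $\CStorage^k(\CStorage)$ has neither higher-level stack symbols nor the inverse-push mechanism, so there is no apparent way to encode higher-level symbols at all; we therefore only obtain the trivial simulation of $\CStorage^k(\CStorage)$ by each of the other three. Making the inverse-push-based decoding formal in the $0$-test-free setting, and checking that each replacement step respects the $r$-way input restriction, is the main technical obstacle this theorem leaves one to write out carefully.
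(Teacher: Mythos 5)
Your overall architecture matches the paper's: trivial inclusions of the unary-alphabet variants into the $\PStorage$-variants, the mod-$3$ encoding of higher-level symbols into the level-$1$ counter read back via the $0$-test for part (2), and, for part (3), using $\invpush{}$ as an implicit equality test in place of the missing $0$-test together with an offset encoding of higher-level symbols (this is exactly Lemma~\ref{lem:PinvtoCinvNond}); the one-directional treatment of $\CStorage^k(\CStorage)$ and the guess-and-restore simulation of $\pop$ by inverse push also agree with the paper.

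There is, however, a genuine gap in your opening reduction. You assert that ``each inverse-push variant is a syntactic restriction of the corresponding $\pop$ variant'' and hence that the direction from the $\invpush{}$-based types to the $\pop$-based types is immediate. It is not: replacing $\invpush{\gamma}$ by $\pop$ is unsound, because $\invpush{\gamma}$ carries an applicability condition --- the two topmost level-$(k-1)$ stacks must coincide and the topmost symbol must be $\gamma$ --- that a $\pop$-based automaton cannot test directly, so dropping it creates additional runs and possibly a strictly larger accepted language or configuration graph. The paper handles this direction in Lemma~\ref{lem:CWlikeProof} by annotating every pushed level-$k$ symbol with the operation that would undo it, so the simulating automaton can read off from the topmost annotation whether an inverse push is currently applicable; this is also why that argument needs higher-level pushdown symbols to carry the annotation, and why $\CStorage^k_{inv}$ must instead be routed through the (genuinely trivial) inclusion into $\PStorage^k_{inv}$, respectively through the cited Carayol--W\"ohrle characterisation via reduced operation sequences in the $\CStorage^k_{inv}(\PStorage)$ case. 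As written, your proposal omits this entire half of the pop/inverse-push correspondence, so the cycles of mutual simulations do not close.
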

\begin{remark}
  All results of this theorem carry over to alternating automata
  analogously. Moreover, we can also add an auxiliary tape of
  size $\SPACE{b(n)}$ for arbitrary function $b$.
\end{remark}

\begin{theorem}\label{thm:equivStoragesDeterministic}
  The following holds:
  \begin{enumerate}
  \item
    For any of the following storage types the
    deterministic $r$-way automata of one type can simulate the
    deterministic $r$-way automata of another type (for $r\in\{1,2\}$):
    \begin{itemize}
    \item $\PStorage^k(\PStorage)$, i.e., level $k$ pushdown automata with
      pushdown symbols on each level,
    \item $\CStorage^k(\PStorage)$, i.e., level $k$ pushdown automata
      with pushdown symbols only on level $1$,
    \item $\PStorage^k_{inv}(\PStorage)$,
      i.e., level $k$ pushdown automata
      with pushdown symbols on each level and  with inverse push
      operations.
    \end{itemize}
    Moreover deterministic
    $\CStorage^k_{inv}(\PStorage)$ automata can be simulated by any of
    the above mentioned automata types.
  \item
    The analogous statement for deterministic higher-order counter
    automata with $0$-test also holds.
    For any of the following storage types  the
    deterministic $r$-way automata of one type can simulate the
    deterministic $r$-way automata of another type (for $r\in\{1,2\}$):
    \begin{itemize}
    \item $\PStorage^k(\CStorageZero)$, i.e., level $k$ counter
      automata (with $0$-test) with
      pushdown symbols on each level,
    \item $\CStorage^k(\CStorageZero)$, i.e., level $k$ counter
      automata (with $0$-test)
      with pushdown symbols only on level $1$,
    \item $\PStorage^k_{inv}(\CStorageZero)$,
      i.e., level $k$ counter
      automata (with $0$-test) with pushdown symbols on each level and
      with inverse push  operations.
    \end{itemize}
    Moreover deterministic
    $\CStorage^k_{inv}(\CStorageZero)$ automata can be simulated by any of
    the above mentioned automata types.
  \item
    Deterministic $\PStorage^k(\CStorage)$ automata
    can simulate deterministic $\PStorage^k_{inv}(\CStorage)$ automata
    and vice versa. Moreover,
    deterministic $\CStorage^k_{inv}(\CStorage)$ automata and
    deterministic $\CStorage^k(\CStorage)$ automata are strictly
    weaker that $\PStorage^k(\CStorage)$ automata in the sense that
    every automaton of one of the former types can be simulated by
    some automaton of the
    latter type but not vice versa.
  \end{enumerate}
\end{theorem}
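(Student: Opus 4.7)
The plan is to lift the constructions underlying Proposition \ref{prop:equivStorages1} and Theorem \ref{thm:equivStoragesNondeterministic} to the deterministic setting. The mod-$3$ encoding used in Proposition \ref{prop:equivStorages1} to show $\CStorage^{k-1}(\CStorageZero) \equiv \PStorage^{k-1}(\CStorageZero)$ is entirely deterministic: the simulating machine retrieves the encoded pushdown symbol by a fixed sequence of pops and zero-tests on the topmost counter and then restores the counter by the inverse pushes. This yields the equivalence between the variant with higher-level pushdown symbols and the variant without them for item 2. For the plain pushdown case (item 1) a direct deterministic encoding is available: since the level-$1$ alphabet is already binary, finite higher-level symbol information can be stored in a fixed-size prefix kept at the bottom of the topmost level-$1$ pushdown and read through $\TOP$-tests. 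Crucially, in item 3 the mod-$3$ trick fails because there is no zero-test on the counter, which is exactly what will drive the strict separation.

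For the equivalences involving the inverse-push variants, I would adapt the Carayol--W\"ohrle construction \cite{cawo03} showing $\PStorage^k \equiv \PStorage^k_{inv}$. Their simulation of $\pop$ by $\invpush$ uses the higher-level pushdown symbols to record enough history of modifications that equality of the two topmost substructures can be restored before applying $\invpush$. The construction is deterministic and produces the same configuration graph after $\varepsilon$-contraction. It applies to $\PStorage^k(\CStorageZero)$ in item 2 and to $\PStorage^k(\CStorage)$ in item 3, since in both cases higher-level pushdown symbols remain available for the bookkeeping. The simulation of $\CStorage^k_{inv}(\Storage)$ by any of the other three variants is straightforward: every $\invpush$ is replaced by a deterministic check-and-pop routine that verifies equality of the two topmost substructures using the available tests before popping.

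The only genuine lower bound is in item 3: $\CStorage^k(\CStorage)$ and $\CStorage^k_{inv}(\CStorage)$ are strictly weaker than $\PStorage^k(\CStorage)$. I plan to exhibit a concrete separating language recognised by a deterministic $\PStorage^k(\CStorage)$ automaton that stores a one-bit distinguishing marker as a higher-level pushdown symbol and later reads it deterministically. Since $\CStorage^k(\CStorage)$ has only unary alphabets above level $1$ and no test on the level-$1$ counter, the only observable content of a configuration of a deterministic automaton of this type is the profile of which nested counters are non-empty; a pumping argument at sufficiently large counter values then shows that no such deterministic automaton can reproduce the branching behaviour of the $\PStorage^k(\CStorage)$ witness. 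Making this pumping argument precise across $k$ nested levels of unary pushdowns, rather than just locally at one level, is the main technical obstacle.
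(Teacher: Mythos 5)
Your positive simulations follow the paper's own route (the deterministic mod-$3$ encoding for the $\CStorageZero$ case, and a deterministic, annotation-based variant of the Carayol--W\"ohrle construction for $\PStorage^k_{inv}(\Storage)\equiv\PStorage^k(\Storage)$), but two of your steps do not survive scrutiny. First, the ``straightforward'' simulation of $\CStorage^k_{inv}(\Storage)$ by the other variants via a ``deterministic check-and-pop routine that verifies equality of the two topmost substructures using the available tests'' cannot be implemented: the only available tests are $\TOP{\sigma}$ and lifted lower-level tests on the \emph{topmost} substructure, and inspecting that substructure (by popping into it) destroys it; no test compares two nested level-$(k-1)$ structures. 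The claim itself is true, but for a different reason: $\CStorage^k_{inv}(\Storage)$ is literally a sub-storage-type of $\PStorage^k_{inv}(\Storage)$, so that simulation is trivial, and one then routes through $\PStorage^k_{inv}(\Storage)\equiv\PStorage^k(\Storage)$, where applicability of $\invpush{\gamma}$ is decided by a single $\TOP{}$ test because the higher-level symbols have been annotated with the generating operation sequence (this is exactly Lemma \ref{lem:CWlikeProof}). A smaller instance of the same issue: storing the higher-level symbol ``at the bottom of the topmost level-$1$ pushdown'' makes it unreadable without destroying the stack; it must sit at the top, or be coded in the counter value mod $3$ as in the $\CStorageZero$ case.

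Second, and more importantly, the only lower bound in the theorem --- strictness of $\CStorage^k(\CStorage)$ and $\CStorage^k_{inv}(\CStorage)$ below $\PStorage^k(\CStorage)$ --- is exactly the part you leave open (your ``main technical obstacle''), and the premise of your planned pumping argument is off: the non-emptiness profile of the nested counters is \emph{not} observable through tests either, since attempting an inapplicable $\pop$ merely aborts the run rather than returning a test result. The paper turns this observation into the entire proof: call a storage type \emph{test-free} if every test has a constant outcome; $\CStorage^k(\CStorage)$ and $\CStorage^k_{inv}(\CStorage)$ are test-free, because over a unary alphabet every $\TOP{}$ test is trivially true. For a \emph{deterministic} test-free automaton the next transition depends only on the control state and the input letter, so any two runs that both survive reading the same word end in the same state. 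Hence no such automaton accepts $L_f=\{a^nb^{f(n)}\}$ for unbounded $f$: pick $n_1,n_2$ with $f(n_1)<f(n_2)$ whose runs on $a^{n_1}$ and $a^{n_2}$ meet in the same control state; the accepting continuation on $b^{f(n_1)}$ also survives from the second configuration and wrongly accepts $a^{n_2}b^{f(n_1)}$. Since a deterministic $\PStorage^k(\CStorage)$ automaton recognises $\{a^nb^n\}$ (it simulates $\CStorageZero$ using a level-$2$ bottom marker), the separation follows with no pumping across levels at all. You should replace the multi-level pumping plan by this argument.
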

\begin{remark}
  The proof of this theorem will be based on Engelfriet's notion of
  equivalent storages. Thus, the statement remains valid, if
  we replace deterministic automata by
  any other kind of deterministic/nondeterministic/alternating $r$-way
  auxiliary $\SPACE{b(n)}$ automata. It even carries over to the
  corresponding classes of transducers.
\end{remark}

We conclude the presentation of the results of this section by
pointing the reader to the open problems concerning equivalence of
storage types.
\begin{problem}
  \begin{enumerate}
  \item Is there some nondeterministic $\PStorage^k(\CStorage)$
    automaton that cannot be simulated by any nondeterministic
    $\CStorage^k(\CStorage)$ automaton?
  \item Can we determinise the storage simulations that we so far only
    realised nondeterministic? In other words, can
    deterministic $\CStorage^k(\Storage)$ automata be simulated by
    deterministic $\CStorage^k_{inv}(\Storage)$ automata for
    $\Storage$ one storage type of the set $\{\PStorage,
    \CStorageZero, \CStorage\}$?
  \end{enumerate}
\end{problem}

\subsection{Simulation of Deterministic Automata}
We first prove our claims about the deterministic case.
Note that the nontrivial claims of Theorem
\ref{thm:equivStoragesDeterministic} will be  proved in Propositions
\ref{prop:equivStorages1} and \ref{prop:equivStorages2}, and in
Corollary \ref{cor:WeakerStoragesDeterministic}.
Let us first  recall the notion of equivalence of storage types
(cf.~\cite{Engelfriet91}).

\begin{definition}
  Let $\Storage$ and $\Storage'$ be storages.
  $\Storage$ can \emph{simulate}
  $\Storage'$, denoted as $\Storage' \preceq \Storage$,
  if
  for every
  one-way deterministic $\Storage'$ transducer there is a one-way
  deterministic $\Storage$ transducer defining the same transductions.

  $\Storage$ and
  $\Storage'$ are \emph{equivalent},
  denoted as $\Storage \equiv \Storage'$, if
  $\Storage \preceq \Storage'$ and $\Storage'\preceq \Storage$.
\end{definition}
\begin{remark}
  As pointed out by Engelfriet, this notion  of equivalence implies
  that if $\Storage\preceq\Storage'$, then  for
  $t\in\{\text{nondeterministic, alternating, deterministic}\},
  r\in\{1,2\}$, the $t$ $r$-way auxiliary $\SPACE{b(n)}$ \Storage
  automata can be simulated by the
  $t$ $r$-way auxiliary $\SPACE{b(n)}$ \Storage'
  automata
\end{remark}

Recall that we defined the storage type
$\CStorage = \PStorage_{\{\bot\}}$. In the following, we also use
$\CStorage$ as the operator $\PStorage_{\{\bot\}}$ acting on other storage
types. We call $\CStorage(\Storage)$ the storage type \emph{counter of
  $\Storage$}.
Apparently, the (first) $\bot$ component of every entry in the elements of
$\CStorage(\Storage)$ is redundant. Identifying
$(\bot, c_1) (\bot,c_2) \dots
(\bot, c_n)$ with $(c_1)(c_2)\dots(c_n)$ one sees easily that
$\CStorage^{k-1}(\PStorage)$ automata are equivalent to the higher-order
pushdown automata variant (of level $k$) used for instance in
\cite{HagueOng08}.

\begin{proposition}\label{prop:equivStorages1}
  $\CStorage^k \preceq \PStorage^{k-1}(\CStorage)$,
  $\CStorage^{k-1}(\CStorageZero) \equiv
  \PStorage^{k-1}(\CStorageZero)$  and
  $\CStorage^{k-1}(\PStorage) \equiv
  \PStorage^{k-1}(\PStorage)$.
\end{proposition}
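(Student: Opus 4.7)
The plan rests on Engelfriet's Theorem 1.3.1 (preservation of equivalence by storage operators) together with a single key lemma $\PStorage(\Tt) \preceq \CStorage(\Tt)$ for suitable $\Tt$.

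First I would observe that $\CStorage = \PStorage_{\{\bot\}}$ is a literal restriction of $\PStorage$: a $\CStorage$-automaton is simply a $\PStorage$-automaton that refrains from ever pushing $0$ or $1$, and its only test $\TOP{\bot}$ is automatically satisfied. Hence $\CStorage \preceq \PStorage$, and Theorem 1.3.1 applied iteratively to outer $\PStorage$-layers yields $\CStorage^{i}(\Storage) \preceq \PStorage^{i}(\Storage)$ for every $i$ and every storage type $\Storage$. Taking $\Storage = \CStorage$ (and using $\CStorage^k = \CStorage^{k-1}(\CStorage)$) gives the first claim; taking $\Storage \in \{\CStorageZero, \PStorage\}$ gives the easy $\preceq$ directions of the two equivalences.

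For the harder directions $\PStorage^{k-1}(\Storage') \preceq \CStorage^{k-1}(\Storage')$ with $\Storage' \in \{\CStorageZero, \PStorage\}$ I would proceed by induction on $k$. The case $k=1$ is trivial since both sides equal $\Storage'$. In the inductive step, writing $\PStorage^{k}(\Storage') = \PStorage(\PStorage^{k-1}(\Storage'))$, the induction hypothesis combined with Theorem 1.3.1 gives $\PStorage(\PStorage^{k-1}(\Storage')) \preceq \PStorage(\CStorage^{k-1}(\Storage'))$, so it suffices to prove the key lemma $\PStorage(\Tt) \preceq \CStorage(\Tt)$ for $\Tt := \CStorage^{k-1}(\Storage')$. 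For $\Storage' = \CStorageZero$ this is precisely the construction already spelled out in the main-text proof: encode the topmost level-$k$ pushdown symbol $\gamma \in \{0,1,\bot\}$ by the value of the innermost $\CStorageZero$-counter modulo $3$ (identifying $\bot$ with residue $2$), read off the residue using $\Temptystack$ and iterated $\pop$, and triple every genuine innermost push and pop in the simulation of $\Tt$-operations so that the residue is preserved.

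For $\Storage' = \PStorage$ the innermost storage of $\Tt$ is $\PStorage$ rather than $\CStorageZero$, and a natural alternative encoding is available: maintain the invariant that the topmost symbol of the innermost $\PStorage$ literally equals $\gamma$, sitting on top of the genuine content $w$. A test $\TOP{\gamma}$ at level $k$ then reduces directly to an innermost $\TOP{\gamma}$ test, while push and pop at level $k$ translate to a $\CStorage$ push or pop together with a push or pop of $\gamma$ on the innermost. Operations of $\Tt$ that touch the innermost $\PStorage$ are simulated by first reading $\gamma$ (via $\TOP{\gamma}$), popping it, performing the original innermost operation, and finally pushing $\gamma$ back; since $\gamma$ cannot change during a single $\Tt$-operation, this bookkeeping is sound and uses only the finite control of the simulating transducer. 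The main obstacle is precisely this lifting of innermost $\Tt$-operations in a way that preserves the encoding; in both cases each original operation is replaced by a bounded sequence of $\CStorage(\Tt)$-operations that manipulates and then restores the encoded $\gamma$. Once the key lemma is established the induction closes through Theorem 1.3.1, completing both equivalences.
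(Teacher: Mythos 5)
Your proposal is correct and follows essentially the same route as the paper: the easy $\preceq$ directions come from $\CStorage\preceq\PStorage$ plus Engelfriet's Theorem~1.3.1, and the hard directions reduce to the key simulation $\PStorage(\Tt)\preceq\CStorage(\Tt)$ with the level-$k$ symbol encoded modulo $3$ in the innermost counter, closed by induction through Theorem~1.3.1. The one place you go beyond the paper is the $\PStorage^{k-1}(\PStorage)$ case, which the paper dismisses as ``completely analogous'': your literal push-the-symbol encoding on the innermost stack is in fact the right adaptation there, since the mod-$3$ readout relies on the $\Temptystack$ test that $\PStorage$ does not provide.
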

\begin{proof}
  The direction from left to right is clear because $\PStorage$ is an
  extension of $\CStorage$.
  We next  show how $\CStorage^{k-1}(\CStorageZero)$ can simulate
  $\PStorage^{k-1}(\CStorageZero)$.

  We first show that $\Storage:=\PStorage(\CStorage^{k-2}(\CStorageZero))$
  can be simulated by $\Storage':=\CStorage^{k-1}(\CStorageZero)$. The
  idea is to
  encode the pushdown symbol of level $k$, by the level 1 counter
  value modulo $3$ (recall that $\PStorage$ uses the  pushdown
  alphabet $\{\bot, 0, 1\}$).
  For this purpose we first replace in the
  $\PStorage(\CStorage^{k-2}(\CStorageZero))$ every $\push{\bot}$ of
  level $1$(i.e. a push applied to $\CStorageZero$) by $3$
  $\push{\bot}$ operations and each level $1$ $\pop$-operation by $3$
  $\pop$-operations of level $1$.
  This results in an equivalent $\Storage$ automaton where the level
  $1$ counter value is always $0 \mod 3$. Next, without loss of
  generality we assume that the $\Storage$ automaton only uses
  instructions of the form $\pop, \push{\sigma,\id}$ and $\stay{f}$.
  \emph{For the rest of this simulation, we identify $\bot$ with the
    number $2$}. We want to represent a
  pushdown symbol $\sigma\in\{0,1,\bot\}$ by $\sigma \mod 3$ on the
  level $1$ counter.
  We initialise $\Storage'$ by applying $2$ $\push{\bot}$ on level
  $1$ (this results in the counter value $2$, which is  $2\mod 3$
  representing the initial symbol $\bot$. Now we simulate the
  operations on $\Storage$ by $\Storage'$-operations as follows (where
  we assume that the current $\Storage$-configuration $\Sconfig$ is
  simulated by $\Storage'$-configuration $\Sconfig'$.
  \begin{enumerate}
  \item The $\TOP{\gamma}$ test for $\gamma\in\{0,1,\bot\}$ can  be
    simulated as follows.
    apply $\push{\bot,\id}$, then
    determine the topmost symbol
    $\gamma'\in\{0,1,\bot\}$
    by level $1$
    $\pop$-operations (while the $0$-test fails) determining the value
    of the topmost level $1$
    counter modulo $3$. After finishing the test we restore the
    pushdown by a $\pop$ operation and just have to compare $\gamma$
    with $\gamma'$.
  \item The $\Temptystack$ test on level $1$ is simulated by first
    determining which $\TOP{\gamma}$ test applies for
    $\gamma\in\{0,1,\bot\}$ as in the simulation of
    $\TOP{\gamma}$. Then we perform
    $\gamma$ many
    $\pop$-operations of level $1$, then the $\Temptystack$ test of
    $\Storage'$  coincides with the $\Temptystack$ of $\Storage$. We
    restore the pushdown by $\gamma$ many $\push{\bot}$ operations of
    level $1$.
  \item A $\pop$ operation is simulated by $\pop$.
  \item A $\push{\gamma,\id}$ operation is simulated by the following
    program:
    first determine the topmost symbol $\gamma'\in\{0,1,\bot\}$
    of $\Sconfig'$. Then apply $\push{\bot,\id}$, then apply $\gamma'$
    many level $1$ pop-operations.  No we apply $\gamma$ many level
    $1$ $\push{\bot}$ operations.
  \item A $\stay{f}$ operations is simulated by $\stay{f}$ if $f$ is
    not an operation of level $1$. If it is of level $1$ we just
    duplicate it $3$ times.
  \end{enumerate}

  This completes the proof that
  $\PStorage(\CStorage^{k-2}(\CStorageZero))$ can be simulated by
  $\CStorage^{k-1}(\CStorageZero)$. The lemma now follows
  by induction on $k$:
  we have shown that
  $\PStorage^1(\CStorageZero) \equiv \CStorage^1(\CStorageZero)$.
  Assume that for some $k$ we have $\PStorage^{k-1}(\CStorageZero)
  \equiv \CStorage^{k-1}(\CStorageZero)$.
  By Theorem 1.3.1  of \cite{Engelfriet91},
  we obtain
  \begin{equation*}
    \PStorage^k(\CStorageZero) \equiv
    \PStorage( \PStorage^{k-1}(\CStorageZero)) \equiv
    \PStorage( \CStorage^{k-1}(\CStorageZero)) \equiv
    \CStorage^k(\CStorageZero).
  \end{equation*}
  The equivalence
  $\PStorage^{k}(\PStorage) \equiv
  \CStorage^k(\PStorage)$ is obtained completely analogous.
\end{proof}

We now want to discuss the variants of pushdown systems and counters
with inverse push-operations.
\emph{For reasons of simplicity, we now consider the operator
  $\PStorage$ to be restricted to $\push{\sigma,\id}$, $\stay{f}$ and
  $\pop$ operations.}
Let $\PStorage_{inv}$ and $\CStorage_{inv}$ be the
variants of (the restricted) $\PStorage$ and $\CStorage$  with inverse
push-operations,
i.e., $\PStorage_{inv}$ is defined as $\PStorage$ but instead of the
operation $\pop$ we have the operation $\invpush{\gamma,\id}$.
For $\Storage$ a storage type and
$s:=(\sigma_1, \Sconfig_1) \dots (\sigma_{m-1},
\Sconfig_{m-1})(\sigma_m,\Sconfig_m)$ a $\PStorage(\Storage)$
configuration $\invpush{\gamma,\id}(s)$ is defined if and only if
$\sigma_m=\gamma$ and  $\Sconfig_m=\Sconfig_{m-1}$, i.e., if and only
if
$\push{\gamma, \id}((\sigma_1, \Sconfig_1) \dots (\sigma_{m-1},
\Sconfig_{m-1})) = s$. In this case,
$\invpush{\gamma,\id}(s) = \pop(s) = (\sigma_1, \Sconfig_1) \dots
(\sigma_{m-1},\Sconfig_{m-1})$.

Carayol and Woehrle\cite{cawo03} already showed that
nondeterministic $\CStorage^k_{inv}(\PStorage)$ automata can
simulate nondeterministic $\CStorage^k(\PStorage)$ automata and that
$\CStorage^k_{inv}(\PStorage) \preceq \CStorage^k(\PStorage)$.
The latter simulation uses the fact that for every
$\CStorage^k_{inv}(\PStorage)$-configuration there is a unique
shortest sequence of operations that generates this configuration from
the initial one. Moreover, a sequence $s$ of operations  translates one
configuration $\Sconfig_1$into another configuration $\Sconfig_2$ if
and only if the following holds. Let $s_i$ be the unique sequence
generating $\Sconfig_i$, then $s_2$ results from $s_1s$ by removing
all adjacent pairs of inverse operations. Here, the inverse of
$\push{\gamma, \id}$ is $\invpush{\gamma}$ and the inverse of level
$1$ $\push{\sigma}$ is $\pop_\sigma$ and the inverse of $\stay{f}$ is
$\stay{f^{-1}}$ where $f^{-1}$ is the inverse of $f$ ($\pop_\sigma$
denotes a $\pop$ operation that is applied to a pushdown with topmost
symbol $\sigma$). We next prove a similar result for
$\PStorage_{inv}^k(\Storage)$ and $\PStorage^k(\Storage)$ for
$\Storage\in\{\PStorage, \CStorage, \CStorageZero\}$ that even work
deterministically in both directions.

\begin{lemma}\label{lem:CWlikeProof}
  For all $k\in\N$ $\PStorage_{inv}( \PStorage^k_{inv}( \Storage)) \equiv
  \PStorage( \PStorage^k_{inv}(\Storage))$ for  $\Storage\in\{\PStorage,
  \CStorage, \CStorageZero\}$.
\end{lemma}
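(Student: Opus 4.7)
The plan is to establish the two simulations $\PStorage_{inv}(\Storage') \preceq \PStorage(\Storage')$ and $\PStorage(\Storage') \preceq \PStorage_{inv}(\Storage')$ separately, where I abbreviate $\Storage' := \PStorage^k_{inv}(\Storage)$. The strategy adapts the Carayol--Woehrle simulation (which gave $\CStorage^k_{inv}(\PStorage) \preceq \CStorage^k(\PStorage)$ nondeterministically) to the deterministic setting, crucially exploiting that the outer pushdown alphabet $\{0,1,\bot\}$ of $\PStorage$ is nontrivial; the unary alphabet of $\CStorage$ is exactly what makes the corresponding statement fail in the deterministic setting (as recorded in Theorem~\ref{thm:equivStoragesDeterministic}).

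For the easier direction $\PStorage_{inv}(\Storage') \preceq \PStorage(\Storage')$, every operation of $\PStorage_{inv}(\Storage')$ other than $\invpush{\sigma,\id}$ has a literal counterpart in $\PStorage(\Storage')$. The operation $\invpush{\sigma,\id}$ is implemented by first verifying its applicability condition (namely, that $\TOP{\sigma}$ holds and the topmost $\Storage'$-configuration equals the one immediately below) and then performing a plain $\pop$. The verification is made deterministic by extending the outer alphabet to $\{0,1,\bot\} \times B$ for a finite flag set $B$ encoding, per outer stack cell, whether the topmost $\Storage'$-configuration is still in ``clean'' form with respect to the one below. The flag is set clean upon $\push{\sigma,\id}$, cleared whenever a $\stay{f}$ with $f \neq \id$ is applied, and carried along under further outer pushes by writing only the current top's flag.

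For the main direction $\PStorage(\Storage') \preceq \PStorage_{inv}(\Storage')$, the simulator replaces each $\pop$ by first restoring the topmost $\Storage'$-configuration so as to coincide with the one below, and then firing $\invpush{\sigma,\id}$. The restoration relies on the fact that every operation of $\PStorage^k_{inv}(\Storage)$ admits an inverse inside $\PStorage^k_{inv}(\Storage)$; I would prove this by a short induction on $k$ starting from the analogous observation for $\Storage \in \{\PStorage,\CStorage,\CStorageZero\}$. To know which inverses to apply, the simulator records the sequence of $\stay{f}$ actions it has performed at the current outer level; this sequence is stored inside the outer stack itself, using the alphabet $\{0,1,\bot\}$ to spell out each operation's identifier across a few auxiliary cells stacked immediately above each ``real'' $\push{\sigma,\id}$. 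When a $\pop$ is to be simulated, the recorded sequence is peeled off one cell at a time, each peeled cell triggering the corresponding inverse to be applied to the top $\Storage'$-configuration, until the top matches the entry below and a final $\invpush{\sigma,\id}$ completes the step.

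The main obstacle will be to guarantee that the bookkeeping cells inserted by the simulator never interfere with observations the original automaton makes between its transitions. My plan is to arrange the simulation so that between any two original transitions the auxiliary cells above the current ``real'' top are fully summarised in a fixed-size suffix of the outer stack; $\TOP{}$-tests and $\stay{f}$-transitions of the original then translate to short deterministic routines that look past the bookkeeping suffix, act on the real top, and immediately re-establish the invariant. Pushing this through uniformly for $\Storage \in \{\PStorage,\CStorage,\CStorageZero\}$ only uses that $\PStorage^k_{inv}(\Storage)$ supports the needed inverse operations, which is exactly the induction base established above.
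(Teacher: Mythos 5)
Your overall strategy --- annotating the outer stack cells with undo information so that the Carayol--W\"ohrle simulation becomes deterministic --- is the same as the paper's, but both of your directions have a genuine gap. For $\PStorage_{inv}(\Storage')\preceq\PStorage(\Storage')$, a single clean/dirty flag per cell cannot decide applicability of $\invpush{\gamma}$: after $\push{\gamma,\id};\stay{f};\stay{f^{-1}}$ the topmost $\Storage'$-configuration again equals the one below, so $\invpush{\gamma}$ is applicable, yet your flag reads dirty and your simulator aborts, so the simulated transducer computes a strictly smaller transduction. The paper instead maintains, one operation per outer cell, the \emph{reduced} generating sequence of the current configuration: a $\stay{f}$ whose inverse is recorded on the top cell is simulated by $\pop$ (cancellation), otherwise by $\push{(\gamma,\stay{f^{-1}}),f}$, and $\invpush{\gamma}$ is applicable exactly when the top annotation is $(\gamma,\invpush{\gamma})$. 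The dirty bit discards precisely the information this argument needs.

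For $\PStorage(\Storage')\preceq\PStorage_{inv}(\Storage')$ your bookkeeping cells sit \emph{above} the ``real'' top, and the obstacle you flag is not actually resolved by your invariant: the record of $\stay{}$-operations performed since the last push is unbounded, so it is not a fixed-size suffix, and in $\PStorage_{inv}$ you cannot ``look past'' cells to test or modify a buried configuration --- removing a cell requires the inverse-push condition to hold, which is exactly what you do not control. The paper sidesteps this by letting each bookkeeping cell carry the live $\Storage'$-configuration itself: $\stay{f}$ is simulated by $\push{(\gamma,\stay{f^{-1}}),f}$, so tests and further operations always act on the literal top, and only the simulation of $\pop$ unwinds the annotations (repeat $\stay{f}$ followed by an inverse push while the top symbol is $(\gamma,\stay{f})$, then one final $\invpush{\gamma}$). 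Finally, your appeal to ``every operation of $\PStorage^k_{inv}(\Storage)$ admits an inverse'' fails at the base for $\Storage=\PStorage$: the innermost $\pop$ is not injective, so one must first refine it to $\pop_{\sigma}$, i.e., $\pop$ guarded by the test $\TOP{\sigma}$, before an inverse operation exists.
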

\begin{proof}
  We first show
  $\PStorage_{inv}( \PStorage^k_{inv}( \Storage)) \preceq
  \PStorage( \PStorage^k_{inv}(\Storage))$.
  This proof adapts the one of \cite{woehrle05} and uses the
  level $k$ symbols on
  the pushdown to store the minimal sequence that generated the
  current pushdown. For this purpose we replace the operations on
  $\PStorage_{inv}(\PStorage^k_{inv}(\Storage))$ as follows.
  \begin{enumerate}
  \item $\push{\gamma,\id}$ is replaced by
    $\push{(\gamma,\invpush{\gamma}), \id}$.
  \item $\stay{f}$ applied to a pushdown $p$ represented by the
    pushdown $p'$  is replaced by $\pop$ if the test
    $\TOP{(\gamma,\stay{f})}(p')=true$ for some $\gamma\in\Gamma$.
    Otherwise, it is replaced by
    $\push{(\gamma, \stay{f^{-1}}), f}$ for $\gamma$ such that
    $\TOP{\gamma}(p)=true$.
  \item $\invpush{\gamma}$ is replaced by $\pop$ if
    $\TOP{(\gamma,\invpush{\gamma})}(p')=true$, otherwise it is
    undefined on $p$ whence the simulation stops.
  \end{enumerate}
  Adding some coding, one can translate the resulting system into one
  with topmost pushdown alphabet $\{\bot,0,1\}$.
  Correctness of this simulation follows from the results in
  \cite{cawo03,woehrle05}.

  For the other direction Carayol and Woehrle \cite{cawo03} proposed
  to simulate
  $\pop$ by guessing and creating the right level $k-1$ pushdown by
  push- and inverse push-operations of level below $k$ and then apply
  an inverse level $(k-1)$-operation. This of course is a
  nondeterministic behaviour. Instead, we use their idea from the
  translation in the other direction: we annotate the pushdowns with
  the necessary operations in order to obtain the topmost pushdown of
  level $k-1$ for which the inverse push is applicable.
  \begin{enumerate}
  \item $\push{\gamma,\id}$ is replaced by
    $\push{(\gamma,\pop_\gamma), \id}$.
  \item $\stay{f}$ applied to a pushdown $p$ represented by the
    pushdown $p'$  is replaced by
    $\stay{f}; \invpush{\gamma, \stay{f}}$ if the test
    $\TOP{(\gamma,\stay{f})}(p')=true$ for some $\gamma\in\Gamma$.
    Otherwise, it is replaced by
    $\push{(\gamma, \stay{f^{-1}}), f}$ for $\gamma$ such that
    $\TOP{\gamma}(p)=true$.
  \item $\pop$ is replaced by
    a sequence performing
    $\stay{f}; \invpush{\gamma,\stay{f}}$ while the topmost level $k$
    symbol is $(\gamma, \stay{f})$. After iteration of this
    instruction, we end up with a topmost symbol $(\gamma,\pop)$ for
    some symbol $\gamma$. We then apply $\invpush{(\gamma,\pop)}$.
  \end{enumerate}
  Again using the usual coding trick, we can restrict the level $k$
  pushdown alphabet to $\{\bot,0,1\}$.
  The proof that this simulation is correct is completely analogous to
  the proof of the other direction.
\end{proof}

This lemma allows to prove the following proposition:

\begin{proposition}\label{prop:equivStorages2}
  $\PStorage_{inv}^k(\PStorage) \equiv \PStorage^k(\PStorage)$,
  $\PStorage_{inv}^k(\CStorageZero) \equiv \PStorage^k(\CStorageZero)$,and
  $\PStorage_{inv}^k(\CStorage) \equiv \PStorage^k(\CStorage)$,
\end{proposition}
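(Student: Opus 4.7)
The plan is to proceed by induction on $k$, using Lemma \ref{lem:CWlikeProof} to handle the topmost pushdown and the induction hypothesis (together with compositionality of equivalence under the $\PStorage$ operator) to handle the lower levels. Throughout, let $\Storage \in \{\PStorage, \CStorage, \CStorageZero\}$. The base case $k=1$ is exactly Lemma \ref{lem:CWlikeProof} with $k=0$ (setting $\PStorage^0_{inv}(\Storage) := \Storage$), which yields $\PStorage_{inv}(\Storage) \equiv \PStorage(\Storage)$.

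For the inductive step, suppose $\PStorage_{inv}^k(\Storage) \equiv \PStorage^k(\Storage)$. Lemma \ref{lem:CWlikeProof} applied at level $k$ gives
\begin{equation*}
  \PStorage_{inv}^{k+1}(\Storage) \;=\; \PStorage_{inv}(\PStorage_{inv}^k(\Storage)) \;\equiv\; \PStorage(\PStorage_{inv}^k(\Storage)).
\end{equation*}
Combining this with the induction hypothesis lifted through the outer $\PStorage$ yields
\begin{equation*}
  \PStorage(\PStorage_{inv}^k(\Storage)) \;\equiv\; \PStorage(\PStorage^k(\Storage)) \;=\; \PStorage^{k+1}(\Storage),
\end{equation*}
and chaining the two equivalences closes the induction. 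The same argument works uniformly for each of the three storage types, because Lemma \ref{lem:CWlikeProof} is already proved for all of $\PStorage$, $\CStorage$, and $\CStorageZero$.

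The only nontrivial ingredient is the compositionality step: if $\Storage_1 \equiv \Storage_2$, then $\PStorage(\Storage_1) \equiv \PStorage(\Storage_2)$. This is exactly the content of Theorem 1.3.1 of \cite{Engelfriet91} (and was already invoked in the same form in the proof of Proposition \ref{prop:equivStorages1}); intuitively, a deterministic transducer operating on $\PStorage(\Storage_2)$ can be simulated by a deterministic transducer on $\PStorage(\Storage_1)$ by encoding each topmost $\Storage_2$-configuration by a simulating $\Storage_1$-configuration and replacing every $\stay{f}$ or $\push{\sigma,f}$ call by the corresponding subroutine supplied by the simulation $\Storage_1 \succeq \Storage_2$. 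Since Lemma \ref{lem:CWlikeProof} produces \emph{deterministic} simulations in both directions, no nondeterminism is introduced, and the resulting $\PStorage^{k+1}(\Storage)$-transducer is again deterministic, which is what the definition of $\equiv$ requires. This is the only potentially delicate point; everything else is bookkeeping.
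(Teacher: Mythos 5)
Your proof is correct and is essentially the paper's own argument: both proceed by induction on $k$, using Lemma \ref{lem:CWlikeProof} to exchange the outermost $\PStorage_{inv}$ for $\PStorage$ and Engelfriet's Theorem~1.3.1 (compositionality of $\equiv$ under the $\PStorage$ operator) to push the induction hypothesis through the outer pushdown. Your explicit remark that Lemma \ref{lem:CWlikeProof} yields deterministic simulations in both directions is a welcome clarification of a point the paper leaves implicit.
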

\begin{proof}
  Let $\Storage\in\{\PStorage, \CStorageZero, \CStorage\}$.
  By induction on the previous lemma and the fact that the operator
  $\PStorage$ preserves equivalence of storages
 (cf.~\cite{Engelfriet91}), we obtain
  \begin{equation*}
    \PStorage^k(\Storage)
    \equiv \PStorage(\PStorage_{inv}^{k-1}(\Storage))
    \equiv \PStorage_{inv}(\PStorage^{k-1}_{inv}(\Storage)))
    = \PStorage^k_{inv}(\Storage).
  \end{equation*}
\end{proof}

We conclude this section by showing that $\CStorage^k(\CStorage)$ is
strictly weaker than $\PStorage^k(\CStorage)$ (and analogously for the
variants with inverse push). In fact, we prove the stronger claim that
any storage type with only trivial tests cannot deterministically
simulate
$\CStorageZero$.

\begin{definition}
  Let $\Storage=(\SConfig, T, F, \SConfigInit)$ be a storage type. We
  call it \emph{test-free} if the result of each test $t\in T$ is
  independent of the tested configuration, i.e., for all
  $t\in T$, and all $\Sconfig, \Sconfig'\in\SConfig$  we have
  $t(\Sconfig)=t(\Sconfig')$.
\end{definition}
\begin{example}
  $\CStorage^k(\CStorage)$ and $\CStorage^k_{inv}(\CStorage)$ are
  test-free whereas  $\PStorage$ is not test-free.
\end{example}
In the following, we show that test-free storage types cannot
deterministically compute
any unbounded function $f$ in the sense that the language
$L_f:=\set{a^nb^{f(n)}}{n\in\N}$ is not recognised by any
deterministic $\Storage$ automaton where  $\Storage$ is a test-free
storage type.  In particular, test-free deterministic $\Storage$
automata do not accept $\set{a^nb^n}{n\in\N}$ whence
$\CStorageZero\not\preceq \Storage$.
The crucial observation is that the storage configuration has no
influence on the next transition except for the fact that it can abort
a computation.
\begin{lemma}
  Let $\Storage$ be a test-free storage type and $\Aa$ a
  deterministic $\Storage$ automaton. For each input letter $\sigma$
  and all states $q$,
  the set of storage-configurations  $\SConfig$ splits into two
  disjoint sets $\SConfig=\SConfig_b\sqcup \SConfig_t$ such that
  \begin{itemize}
  \item for
    all configurations $(q, \Sconfig)$ with $\Sconfig\in \SConfig_b$ no
    transition is
    applicable to $(q, \Sconfig)$, and
  \item there is a unique state $p$ and a unique \Storage-operation $o$
    such that the unique successor configuration on reading
    $\varepsilon$ or $\sigma$ for each
    $(q, \Sconfig)$ with $\Sconfig\in\SConfig_t$ is $(p, o(\Sconfig))$.
  \end{itemize}
\end{lemma}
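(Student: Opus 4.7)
The proof rests on observing that test-freeness strips all discriminating power from the tests in $T$. Concretely, because $\Storage$ is test-free, for each $t\in T$ there is a Boolean $r_t$ with $t(\Sconfig)=r_t$ for every $\Sconfig\in\SConfig$; let $R$ denote the resulting fixed test vector $R(t)=r_t$. By definition, a transition $\delta=(q,R',p,f)\in\Delta$ is applicable at $(q,\Sconfig)$ iff $R'(t)=t(\Sconfig)=R(t)$ for all $t\in T$ and $f(\Sconfig)$ is defined. Hence only transitions whose required test vector equals $R$ can ever fire, and such a transition fires iff $f(\Sconfig)$ is defined, which depends on $\Sconfig$ only through $\domain(f)$.

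Fix $q$ and an input letter $\sigma$ (with $\sigma=\varepsilon$ allowed). The determinism of $\Aa$ ensures that there is at most one transition $\delta_{q,\sigma}=(q,R,p,o)\in\Delta$ on input $\sigma$ with test vector $R$: were there two such transitions $(q,R,p_1,o_1)$ and $(q,R,p_2,o_2)$, any $\Sconfig\in\domain(o_1)\cap\domain(o_2)$ would make both simultaneously applicable, contradicting determinism; the standard formulation of determinism of $\Storage$ automata forbids such multi-transitions directly. If no such transition exists, set $\SConfig_t:=\emptyset$ and $\SConfig_b:=\SConfig$, and the claim holds vacuously.

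Otherwise, take $p,o$ from $\delta_{q,\sigma}$ and set $\SConfig_t:=\domain(o)$, $\SConfig_b:=\SConfig\setminus\domain(o)$. For $\Sconfig\in\SConfig_b$ the transition $\delta_{q,\sigma}$ is inapplicable and, by the first paragraph, no other transition can be applicable either, so $(q,\Sconfig)$ has no successor. For $\Sconfig\in\SConfig_t$ the transition $\delta_{q,\sigma}$ is the unique applicable one and produces the successor $(p,o(\Sconfig))$, with $p$ and $o$ the same for every $\Sconfig\in\SConfig_t$. The one place to tread carefully is the uniqueness step in the second paragraph; once it is established, the split and the uniformity of $(p,o)$ follow immediately from the observation that definedness of $o$ is the sole remaining source of variability across $\SConfig$.
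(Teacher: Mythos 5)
Your proof is correct and matches the paper's approach: the paper states this lemma \emph{without any proof}, offering only the preceding remark that ``the storage configuration has no influence on the next transition except for the fact that it can abort a computation,'' which is exactly the observation you formalize via the fixed test vector $R$ and the split along $\domain(o)$. Your explicit treatment of the uniqueness step --- pointing out that it relies on the syntactic notion of determinism (at most one transition per state, input symbol, and test-outcome vector, with no coexisting $\varepsilon$- and $\sigma$-transitions) rather than mere configuration-level determinism, since two transitions with the same test vector but disjoint operation domains would otherwise break the claim --- is a genuine and worthwhile clarification of a point the paper leaves implicit.
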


By induction on the length of a run we obtain the following corollary.
\begin{corollary}
  Let $\Storage$ be a test-free storage type and $\Aa$ a
  deterministic $\Storage$ automaton.
  For each state $q$ there is a unique state $p$ and a
  $\Storage$-operation $o$ such that for each configuration
  $(q, \Sconfig)$ that admits a run on $\sigma^k$, the unique
  successor configuration after reading $\sigma$ or $\varepsilon$ is
  $(p, o(\Sconfig))$. In particular, if $(q, \Sconfig)$ and
  $(q, \Sconfig')$ both allow a run reading $\sigma^k$ these runs both
  end in the same state $p'$.
\end{corollary}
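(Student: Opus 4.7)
The plan is to deduce the corollary by iterating the lemma to extract a ``transition skeleton'' that depends only on the starting state~$q$ and not on the storage. Concretely, starting at $q$ I would repeatedly apply the lemma to produce a (possibly infinite) sequence of states $q = q_0, q_1, q_2, \dots$, of $\Storage$-operations $o_1, o_2, \dots$ and of labels $\ell_1, \ell_2, \dots \in \{\varepsilon,\sigma\}$ such that the unique applicable transition leaving $q_i$ has label $\ell_{i+1}$, applies $o_{i+1}$ to the storage and moves $\Aa$ into state $q_{i+1}$. The lemma guarantees that this skeleton is intrinsic to $q$: the next state, next operation and whether the step consumes $\varepsilon$ or $\sigma$ are functions of the current control state alone (since in a deterministic $\Storage$ automaton, the presence of an applicable $\varepsilon$-transition forces it to be taken).

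Given the skeleton, I would define $m$ to be the least index such that $|\{i \le m : \ell_i = \sigma\}| = k$, and set $p := q_m$ and $o := o_m \circ o_{m-1} \circ \cdots \circ o_1$ (which is well-defined as a partial function $\SConfig \to \SConfig$, hence a $\Storage$-operation in the sense of the paper). Both $p$ and $o$ depend only on $q$ and $k$. For any $\Sconfig$ for which $(q,\Sconfig)$ admits a run on $\sigma^k$, this run must coincide with the first $m$ steps of the skeleton, because at each state the next transition is forced; hence the run ends in configuration $(p, o(\Sconfig))$. The ``in particular'' clause is then immediate: two such runs from $(q,\Sconfig)$ and $(q,\Sconfig')$ both terminate in $p = q_m$.

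The main (mild) obstacle is the interaction with $\varepsilon$-transitions and the fact that different storage configurations may cause the skeleton to ``die'' at different indices (some operation $o_i$ being undefined on the current storage). This is resolved by the observation that the corollary only asserts a conclusion for $\Sconfig$'s that actually admit a complete run on $\sigma^k$; for such $\Sconfig$'s the run has survived through index $m$, and the prescribed end configuration $(p, o(\Sconfig))$ is forced by the common skeleton. No further issue arises, and $m$ is well-defined whenever at least one such $\Sconfig$ exists, since then the skeleton must contain at least $k$ many $\sigma$-labels among its first survivable steps.
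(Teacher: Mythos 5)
Your proposal is correct and follows essentially the same route as the paper, which proves the corollary simply ``by induction on the length of a run'', i.e., by iterating the preceding lemma step by step exactly as your skeleton construction does. The composition $o = o_m\circ\cdots\circ o_1$ and the observation that the conclusion is only asserted for those $\Sconfig$ whose run survives all $m$ forced steps match the intended argument.
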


\begin{proposition}
  Let $f:\N\to\N$ be an unbounded
  function and $\Storage$ a test-free storage
  type. $L_f=\set{a^nb^{f(n)}}{n\in\N}$ is not recognised by any
  deterministic $\Storage$ automaton.
\end{proposition}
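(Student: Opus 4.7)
The plan is to derive a contradiction from the assumption that some deterministic $\Storage$ automaton $\Aa$ with initial state $q_0$ and initial storage $\SConfigInit$ recognises $L_f$. The key ingredient is the preceding corollary: for every state $q$, every letter $\sigma$ and every $k\in\N$, all runs of $\Aa$ starting in any $(q,\Sconfig)$ and reading $\sigma^k$ end in one and the same control state, call it $p(q,\sigma,k)$, independently of $\Sconfig$. A test-free storage can still cause a run to abort (because some storage operation might be undefined on $\Sconfig$), but whenever a run does exist, its reached control state is fixed.

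First I will extract a sequence of $a$-reached states depending only on the prefix length. Since $a^nb^{f(n)}\in L(\Aa)$ for every $n$, the run on the prefix $a^n$ from $(q_0,\SConfigInit)$ exists, so the corollary supplies a well-defined state $q_n:=p(q_0,a,n)$, independent of how the storage evolves. Next I will use the unboundedness of $f$ together with finiteness of the state set to find indices $n<m$ with $q_n=q_m$ and $f(n)<f(m)$: pick an infinite sequence $n_1<n_2<\dots$ along which $f$ is strictly increasing (possible because $f$ is unbounded), then pigeonhole the values $q_{n_i}$ into the finite state set to obtain $i<j$ with $q_{n_i}=q_{n_j}$; set $n:=n_i$, $m:=n_j$, and $q:=q_n=q_m$.

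Finally I will exhibit the contradiction by inspecting the input $a^mb^{f(n)}$. The accepting run on $a^mb^{f(m)}$ from $(q_0,\SConfigInit)$ exists by hypothesis; restricting it to the prefix $a^mb^{f(n)}$, which is legal since $f(n)<f(m)$, produces a concrete run of $\Aa$ on $a^mb^{f(n)}$. By the corollary, its final state equals $p(q,b,f(n))$. But the accepting run on $a^nb^{f(n)}$ starts its $b$-phase in state $q_n=q$ as well, so by the same corollary it also ends in $p(q,b,f(n))$, which must therefore be an accepting state. Hence $a^mb^{f(n)}$ is accepted by $\Aa$, contradicting $a^mb^{f(n)}\notin L_f$, which holds because $n\neq m$ and $f(n)\neq f(m)$.

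The single delicate point, and the place that dictates the choice of $n<m$ with $f(n)<f(m)$ rather than arbitrary $n\neq m$, is existence of runs: the corollary only talks about runs that do exist, and the undefinedness of storage operations could in principle abort the computation on $a^mb^{f(n)}$ prematurely. Reusing the accepting run on the longer word $a^mb^{f(m)}$ as a witness bypasses this obstacle, since any prefix of an existing run is itself an existing run.
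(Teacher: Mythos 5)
Your proof is correct and follows essentially the same route as the paper's: pigeonhole on the states reached after the $a$-prefixes to find $n<m$ with the same state $q$ and $f(n)<f(m)$, then use the prefix of the existing run on $b^{f(m)}$ from the second configuration together with the corollary to conclude that $a^{m}b^{f(n)}$ is accepted, a contradiction. Your explicit remark on why run existence must be witnessed via the longer word is exactly the point the paper handles with the sentence ``Since $(q,\Sconfig_2)$ admits a run reading $b^{f(n_2)}$, it also admits a run reading $b^{f(n_1)}$.''
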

\begin{proof}
  Since $f$ is unbounded, there is a state $q$ and numbers
  $n_1,n_2\in\N$ with $f(n_1) < f(n_2)$ such that the run on
  $a^{n_i}$ ends in $(q, \Sconfig_i)$ for storage configurations
  $\Sconfig_1, \Sconfig_2$ of $\Storage$.
  By assumption there is a run from $(q, \Sconfig_1)$ reading
  $b^{f(n_1)}$ and ending in an accepting state $p$.
  Since $(q, \Sconfig_2)$ admits a run reading $b^{f(n_2)}$, it also a
  admits a run reading $b^{f(n_1)}$. Due to the previous corollary,
  this run ends in state $p$, whence $a^{n_2}b^{f(n_1)}$ is
  accepted. But this contradicts the fact that
  $a^{n_2}b^{f(n_1)}\notin L_f$ because $f(n_1)<f(n_2)$.
\end{proof}

\begin{corollary}\label{cor:WeakerStoragesDeterministic}
  $\CStorageZero\not\preceq \Storage$ for any test-free storage type
  $\Storage$. In particular,
  $\CStorageZero \not\preceq \CStorage^k(\CStorage)$ and
  $\CStorageZero \not\preceq \CStorage_{inv}^k(\CStorage)$ for all
  $k\in\N$.
\end{corollary}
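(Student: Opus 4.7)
The plan is to derive the corollary almost directly from the preceding proposition. Fix the specific unbounded function $f(n)=n$, so $L_f = \{a^nb^n \mid n\in\N\}$. First I would exhibit an explicit deterministic $\CStorageZero$ automaton recognising $L_f$: the machine pushes a $\bot$ on reading each $a$, pops on reading each $b$ (guarded by the negation of $\Temptystack$ so that the run aborts as soon as more $b$s than $a$s appear), and accepts iff, after the input has been consumed, the $\Temptystack$ test succeeds. This shows $L_f \in 1\text{D-}\CStorageZero$.

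Next I would invoke the preceding proposition with this choice of $f$ to conclude that $L_f$ is not accepted by any deterministic $\Storage$ automaton whenever $\Storage$ is test-free. By the standard remark after the definition of $\preceq$, if we had $\CStorageZero \preceq \Storage$ then every deterministic $\CStorageZero$ automaton (in particular, the one just constructed) could be simulated by a deterministic $\Storage$ automaton, producing a deterministic $\Storage$ automaton for $L_f$ and contradicting the proposition. Hence $\CStorageZero \not\preceq \Storage$.

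For the two specific instances, I would verify that $\CStorage^k(\CStorage)$ and $\CStorage^k_{inv}(\CStorage)$ are indeed test-free for every $k\in\N$, which was already asserted in the example preceding the corollary but deserves an explicit justification. Since the stack alphabet on every level is the singleton $\{\bot\}$ and storage configurations are always non-empty elements of $\Sigma^+$, every $\TOP{\bot}$ test returns $true$ on every reachable configuration. The operator $\PStorage_{\{\bot\}}$ (and its inverse-push variant) introduces exactly the top-tests of the new level together with the relayed tests $test(t)$ for $t$ a test of the inner storage; by induction on $k$ all such tests are constant on every reachable configuration, so the storage is test-free. An application of the first part then yields both concrete non-simulabilities.

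There is no real obstacle here: the whole statement is a corollary of the previous proposition. The only routine verification worth writing down carefully is the inductive check of test-freeness for iterated unary pushdowns (standard and pushdowns with inverse push), which is immediate from the definition of the $\PStorage_{\{\bot\}}$ operator.
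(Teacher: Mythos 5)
Your proposal is correct and follows the paper's own proof exactly: the paper likewise observes that a deterministic $\CStorageZero$ automaton recognises $\set{a^nb^n}{n\in\N}$ and concludes via the preceding proposition, with the test-freeness of $\CStorage^k(\CStorage)$ and $\CStorage^k_{inv}(\CStorage)$ already noted in the example before the corollary. Your extra details (the explicit counter automaton and the inductive check that all $\TOP{\bot}$ and relayed tests are constant) are routine verifications the paper leaves implicit.
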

\begin{proof}
  There is a deterministic $\CStorageZero$ automaton recognising
  $L_{\id}=\set{a^nb^n}{n\in\N}$ which (by the previous proposition) is
  not recognised by any deterministic \Storage automaton $\Aa$.
\end{proof}

Since obviously $\CStorageZero\preceq \PStorage^k(\CStorage)$ for all
$k\geq 1$,  $\PStorage^k(\CStorage)$ is not equivalent to
$\CStorage^k(\CStorage)$ or
$\CStorage^k_{inv}(\CStorage)$.

\begin{corollary}
  $\CStorage^k(\CStorage) \prec \PStorage^k(\CStorage)$ and
  $\CStorage^k_{inv}(\CStorage) \prec \PStorage^k(\CStorage)$ for all
  $k\geq 1$.
\end{corollary}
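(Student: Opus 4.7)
The plan is to decompose the strict inclusion $\CStorage^k(\CStorage) \prec \PStorage^k(\CStorage)$ into (a) the simulation $\CStorage^k(\CStorage) \preceq \PStorage^k(\CStorage)$ and (b) the non-simulation $\PStorage^k(\CStorage) \not\preceq \CStorage^k(\CStorage)$, and to argue analogously for the inverse-push variant.

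For (a), every $\CStorage^k(\CStorage)$-operation and test is syntactically a $\PStorage^k(\CStorage)$-operation or test, because by definition $\CStorage = \PStorage_{\{\bot\}}$ is just $\PStorage$ restricted to the singleton pushdown alphabet $\{\bot\}$. Hence a $\CStorage^k(\CStorage)$-automaton is mimicked verbatim by a $\PStorage^k(\CStorage)$-automaton that never exploits the extra symbols $0,1$ on its upper levels. The inverse-push version follows by the same argument together with the equivalence $\PStorage^k(\CStorage) \equiv \PStorage^k_{inv}(\CStorage)$ from Proposition~\ref{prop:equivStorages2}: we obtain $\CStorage^k_{inv}(\CStorage) \preceq \PStorage^k_{inv}(\CStorage) \equiv \PStorage^k(\CStorage)$.

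For (b) I would rely on the easy observation (already recorded just before the statement) that $\CStorageZero \preceq \PStorage^k(\CStorage)$ whenever $k \geq 1$: the topmost level of $\PStorage^k(\CStorage)$ is a full pushdown $\PStorage$, so a $\CStorageZero$-counter is encoded by pushing a dummy symbol $0$ for each increment, popping for each decrement, and using $\TOP{\bot}$ to realise the zero test (the bottom $\bot$ is never removed). If we had $\PStorage^k(\CStorage) \preceq \CStorage^k(\CStorage)$, then transitivity of $\preceq$ would yield $\CStorageZero \preceq \CStorage^k(\CStorage)$, contradicting the preceding Corollary~\ref{cor:WeakerStoragesDeterministic}. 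The identical transitivity argument, with $\CStorage^k_{inv}(\CStorage)$ replacing $\CStorage^k(\CStorage)$, settles the inverse-push case.

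The only minor point worth checking is that $\CStorage^k_{inv}(\CStorage)$ really does fall under Corollary~\ref{cor:WeakerStoragesDeterministic}, i.e., that it is test-free. This is already observed in the example preceding that corollary: the only $\TOP{\sigma}$-test available on a $\{\bot\}$-alphabet is $\TOP{\bot}$, which returns \emph{true} unconditionally, and neither the counter operator $\CStorage$ nor its inverse-push variant introduces any further non-trivial test. Hence no obstacle remains and the corollary drops out in a few lines.
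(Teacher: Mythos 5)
Your proposal is correct and follows essentially the same route as the paper: the paper also derives the strictness from the observation that $\CStorageZero\preceq \PStorage^k(\CStorage)$ together with Corollary~\ref{cor:WeakerStoragesDeterministic} via transitivity of $\preceq$, treating the forward simulations as immediate. Your added checks (the syntactic embedding for the positive direction, the detour through $\PStorage^k_{inv}(\CStorage)\equiv\PStorage^k(\CStorage)$, and the test-freeness of $\CStorage^k_{inv}(\CStorage)$) only make explicit what the paper leaves implicit or records in the preceding example.
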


\subsection{Simulation of nondeterministic automata}
We now define a 'nondeterministic' version of the notion of
equivalence of storage types. This allows to prove those parts of
Theorem \ref{thm:equivStoragesNondeterministic} that are not already
implied by the results from the previous section.

\begin{definition}
  Let $\Storage=(\SConfig,T,F,\SConfigInit)$ and
  $\Storage'=(\SConfig',T',F',\SConfigInit')$ be storage types.
  We say $\Storage$ can be nondeterministically simulated by
  $\Storage'$ and write $\Storage\preceq_N \Storage'$ if there is a
  map $\varphi:\SConfig \to \SConfig'$ such that the following holds.
  \begin{enumerate}
  \item There is a sequence $f_1, f_2, \dots, f_n\in F'$ such that
    $\varphi(\SConfigInit) = f_1(f_2(\dots f_n(\SConfigInit')\dots))$.
  \item For each $f\in F$ there is a nondeterministic $\Storage'$
    automaton $\Aa_f$ with initial state $q_i$ and final state $q_f$
    such that for all $\Sconfig\in\SConfig$ there is a run of $\Aa_f$
    from $(q_i,\varphi(\Sconfig))$ to $(q_f, y')$ if and only if
    $f(\Sconfig)$ is defined and $y'=\varphi(f(\Sconfig))$.
  \item For each $t\in T$ there are two nondeterministic $\Storage'$
    automaton $\Aa_t, \bar\Aa_t$ with initial states $q_i$ and $\bar
    q_i$, and
    final states $q_f$ and $\bar q_f$, respectively,
    such that
    \begin{itemize}
    \item for all $\Sconfig\in\SConfig$ there is a run of $\Aa_t$
      from $(q_i,\varphi(\Sconfig))$ to $(q_f, y')$ if and only if
      $t(\Sconfig)=true$  and $y'=\varphi(\Sconfig)$, and
    \item for all $\Sconfig\in\SConfig$ there is a run of $\bar\Aa_t$
      from $(\bar q_i,\varphi(\Sconfig))$ to $(\bar q_f, y')$ if and only if
      $t(\Sconfig)=false$  and $y'=\varphi(\Sconfig)$.
    \end{itemize}
  \end{enumerate}
  As in the case of $\preceq$, we write $\Storage\equiv_N\Storage'$ if
  $\Storage \preceq_N \Storage'$ and $\Storage'\preceq_N \Storage$.
\end{definition}

\begin{proposition}
  Let $\Storage \preceq_N \Storage'$, $b(n):\N\to\N$, $r\in\{1,2\}$
  and $t\in\{\text{nondeterministic, alternating}\}$.
  Every $r$-way $t$ auxiliary $\SPACE{b(n)}$ \Storage automaton $\Aa$ is
  simulated by some
  $r$-way $t$ auxiliary $\SPACE{b(n)}$ \Storage' automaton $\Aa'$ in
  the sense that the configuration graphs of $\Aa$ and $\Aa'$ coincide
  after $\varepsilon$-contraction and both automata accept the same
  language.
\end{proposition}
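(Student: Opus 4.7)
The plan is to construct $\Aa'$ by expanding each transition of $\Aa$ into a macro built from the auxiliary $\Storage'$-automata $\Aa_f$, $\Aa_t$, $\bar\Aa_t$ supplied by the hypothesis $\Storage \preceq_N \Storage'$. The state set of $\Aa'$ is the disjoint union of the state set of $\Aa$ with, for every transition $\delta$ of $\Aa$, private copies of the state sets of all auxiliary automata invoked while simulating $\delta$. Inside a macro all moves are $\varepsilon$-moves on the input tape and leave the work tape untouched; only the storage evolves, and only when the relevant auxiliary automaton performs an $\Storage'$-operation. An initialisation prefix applies the fixed sequence $f_1, \ldots, f_n$ from the definition of $\preceq_N$ to $\SConfigInit'$ before control reaches $q_0$, so that simulation begins with storage $\varphi(\SConfigInit)$.

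For a transition $\delta = (q, R, f, p)$ of $\Aa$, the associated macro sequentially calls, as $\varepsilon$-transitions, a fresh copy of $\Aa_t$ for every $t$ with $R(t) = \textit{true}$ and of $\bar\Aa_t$ for every $t$ with $R(t) = \textit{false}$, then $\Aa_f$, and finally enters state $p$. By the defining property of $\preceq_N$, the macro has an accepting run starting from $(q, \varphi(\Sconfig))$ if and only if every test on $\Sconfig$ agrees with $R$ and $f(\Sconfig)$ is defined; in that case the final storage is $\varphi(f(\Sconfig))$. After $\varepsilon$-contraction the macro collapses to the single edge contributed by $\delta$ in the configuration graph of $\Aa$, which delivers the graph-isomorphism claim.

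For nondeterministic $\Aa$ an existential state $q$ simply becomes an existential $\varepsilon$-choice over the macros of the outgoing transitions from $q$. The genuinely delicate case, and the main obstacle, is alternation. A universal state $q$ of $\Aa$ should universally branch into precisely its applicable transitions, whereas our macros encode applicability existentially. We resolve this by a universal $\varepsilon$-fanout over all syntactic transitions $\delta$ from $q$; each resulting branch is itself an existential choice between "$\delta$ is inapplicable'', witnessed by picking a test $t$ and running $\bar\Aa_t$ if $R(t) = \textit{true}$ or $\Aa_t$ if $R(t) = \textit{false}$ and then accepting, and "$\delta$ is applicable'', which runs the full macro and hands control to $p$. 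After preprocessing $\Aa$ into a normal form in which $R$ already decides definedness of $f$---or, alternatively, by alternating into a dual simulation of $\Aa_f$ that universally witnesses that $f(\Sconfig)$ is undefined---this gadget faithfully reproduces the universal branching of $\Aa$.

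Resource accounting is then routine: the auxiliary automata are fixed and finite, so $\Aa'$ has size linear in that of $\Aa$; the work tape and the input head are untouched inside every macro, so $\Aa'$ is still an $r$-way auxiliary $\SPACE{b(n)}$ $\Storage'$-automaton. Acceptance equivalence follows from the configuration-graph isomorphism together with a straightforward induction on the height of the acceptance tree, where the alternating step uses the correctness of the universal branching gadget just described. Special care is required for the edge cases in that step, in line with the convention for universal leaves fixed in the remark following the definition of alternating reachability.
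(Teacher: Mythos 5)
Your construction is essentially the paper's own proof --- a product construction in which each \Storage-test or -operation is replaced by an existential guess of the test results, verified by running $\Aa_t$ or $\bar\Aa_t$, followed by a simulation of the operation via $\Aa_f$ --- only spelled out in more detail, and you are in fact more careful than the paper's two-sentence sketch on the alternating case, where the universal-branching subtlety is real and goes unmentioned there. The one loose end is the transition whose tests all agree with $R$ but whose operation is undefined on the current storage: such a transition contributes no successor in $\Aa$ yet its universal branch in $\Aa'$ can neither be escaped nor completed, and neither of your proposed fixes is available for arbitrary storage types, since $\preceq_N$ supplies no complement of $\Aa_f$ and a storage type such as \CStorage has only trivial tests although $\pop$ is partial.
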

\begin{proof}
  By a straightforward product construction of $\Aa$ and the
  $(\Aa_f)_{f\in F}$, $(\Aa_t)_{t\in T}$ and $(\bar\Aa_t)_{t\in T}$.
  Instead of executing $\Storage$-tests or -operations the automaton
  guesses the correct test result and then checks its guess and
  simulates the operation by executing first the corresponding
  $\Aa_t/\bar\Aa_t$ and the the corresponding $\Aa_f$ from its initial
  to its final state.
\end{proof}

As in the deterministic case, the pushdown operator is monotone with
respect to $\preceq_N$.
\begin{proposition} \label{prop:prececNondCompatiblePStorage}
  Let $\Storage \preceq_N \Storage'$. We have
  $\PStorage(\Storage)\preceq_N \PStorage(\Storage')$,
  $\PStorage_{inv}(\Storage)\preceq_N \PStorage_{inv}(\Storage')$,
  $\CStorage(\Storage)\preceq_N \CStorage(\Storage')$,
  $\CStorage_{inv}(\Storage)\preceq_N \CStorage_{inv}(\Storage')$,
\end{proposition}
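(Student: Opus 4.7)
The plan is to define, in each of the four cases, the natural componentwise lift
\[
\tilde\varphi((\sigma_1, x_1)\cdots(\sigma_m, x_m)) := (\sigma_1, \varphi(x_1))\cdots(\sigma_m, \varphi(x_m))
\]
of the given witness $\varphi : \SConfig \to \SConfig'$ for $\Storage \preceq_N \Storage'$, and verify the three clauses of the definition of $\preceq_N$ for $\tilde\varphi$. Clause~(1) is satisfied because $(\bot, \SConfigInit')$ reaches $(\bot, \varphi(\SConfigInit)) = \tilde\varphi((\bot, \SConfigInit))$ by applying $\stay{f_n}, \ldots, \stay{f_1}$, where the $f_i$ are the $\Storage'$-operations supplied by clause~(1) applied to $\varphi$ itself; the same sequence works in all four extended storages.

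For clauses~(2) and~(3) I would introduce a uniform \emph{lifting} construction: given a $\Storage'$-automaton $\mathcal{B}$, its lift $\mathcal{B}^\ast$ has the same state set and transitions as $\mathcal{B}$, except that every $\Storage'$-operation $g$ is replaced by $\stay{g}$ and every $\Storage'$-test by the corresponding wrapper $test(\cdot)$. Thus $\mathcal{B}^\ast$ acts on the topmost level-$1$ slot exactly as $\mathcal{B}$ acts on its storage, leaving the rest of the stack intact. Using this primitive, each operation and test of $\PStorage(\Storage)$ is simulated as follows: $\TOP{\sigma}$ lifts trivially; $test(t)$ (for $t \in T$) is simulated by $\Aa_t^\ast$ and $\bar\Aa_t^\ast$; $\stay{f}$ is simulated by $\Aa_f^\ast$; $\push{\gamma, f}$ is simulated by $\push{\gamma, \id}$ (which creates an exact duplicate of the topmost slot) followed by $\Aa_f^\ast$ acting on the fresh copy; $\pop$ is simulated by $\pop$; and $\invpush{\gamma, \id}$ is simulated by $\invpush{\gamma, \id}$. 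The $\CStorage$ and $\CStorage_{inv}$ cases are the same constructions specialised to the unary stack alphabet $\{\bot\}$, dropping the non-trivial $\TOP{\sigma}$ tests.

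The main obstacle lies in the correctness of the $\invpush{\gamma, \id}$ simulation in the two $_{inv}$-variants. The original operation requires literal equality of the two topmost level-$1$ $\Storage$-configurations $x_{m-1}$ and $x_m$, whereas the lifted $\invpush{\gamma, \id}$ only tests equality of their $\varphi$-images in $\SConfig'$. The forward direction (original applicable implies simulation applicable) is immediate since $\varphi$ is a function. For the converse, I would argue inductively along the simulating run that the topmost two level-$1$ slots of the $\PStorage_{inv}(\Storage')$-configuration contain precisely $\varphi(x_{m-1})$ and $\varphi(x_m)$ for the corresponding originals, using that $\push{\gamma, \id}$ creates literal duplicates on both sides and that every simulated $\stay{f}$ via $\Aa_f^\ast$ nondeterministically realises exactly the $\varphi$-images of the originally available targets. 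Establishing that a coincidence $\varphi(x_{m-1}) = \varphi(x_m)$ at the moment of inverse-push forces the underlying coincidence $x_{m-1} = x_m$ (an injectivity-on-reachable-configurations argument derivable from the form of $\Aa_f$ and $\Aa_t$ in clauses~(2) and~(3)) is the technical crux, after which all three conditions of $\preceq_N$ for the extended storages follow by routine composition.
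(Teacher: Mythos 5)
Your core construction coincides with the paper's: the paper likewise reduces everything to simulating $test(t)$ by the $\stay{}$/$test(\cdot)$-lift of $\Aa_t$ (and $\bar\Aa_t$) and $\stay{f}$ by the lift of $\Aa_f$, decomposes $\push{\gamma,f}$ as $\push{\gamma,\id};\stay{f}$, and lets the remaining operations and tests simulate themselves under the componentwise map $\tilde\varphi$. For the operators $\PStorage$ and $\CStorage$ your argument is complete and correct.

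For the $\PStorage_{inv}$ and $\CStorage_{inv}$ cases, however, the step you yourself single out as the technical crux cannot be filled the way you propose. The definition of $\preceq_N$ does not force $\varphi$ to be injective, not even on reachable configurations: take $\Storage$ with configurations $\{a,b\}$, initial configuration $a$, no tests, and operations $\id$ and $g$ with $g(a)=g(b)=b$; take $\Storage'$ with a single configuration $*$ and only $\id$; and set $\varphi(a)=\varphi(b)=*$. All three clauses of $\preceq_N$ hold, yet in $\PStorage_{inv}(\Storage)$ the configuration $(\bot,a)(0,b)$, reached by $\push{0,\id};\stay{g}$, does not admit $\invpush{0,\id}$, while its image $\tilde\varphi\bigl((\bot,a)(0,b)\bigr)=(\bot,*)(0,*)$ does. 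So simulating $\invpush{\gamma,\id}$ by itself violates the ``only if'' direction of clause~(2), and no inspection of the automata $\Aa_f,\Aa_t$ can rescue it: $\preceq_N$ explicitly permits $\varphi$ to identify distinct but behaviourally congruent configurations, which is exactly the situation in which the lifted inverse push becomes spuriously applicable. (The paper's own proof is silent about $\invpush{}$ and is exposed to the same objection; it is harmless downstream because every concrete witness $\varphi$ used later --- the inclusion maps and the explicit encodings in Lemma~\ref{lem:PinvtoCinvNond} --- is injective.) The honest repair is to add injectivity of $\varphi$ as a hypothesis for the two inverse-push clauses, or to verify it separately in each instance to which the proposition is applied; it cannot be ``derived from the form of $\Aa_f$ and $\Aa_t$'' as you claim.
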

\begin{proof}
  It suffices to provide simulations of the test $test(t)$ for each
  test $t$ of $\Storage$ and simulations for the operations
  $\stay{f}$ (note that $\push{\gamma,f}$ can be replaced by
  $\push{\gamma,\id}; \stay{f}$).

  The automaton $\Aa_{test(t)}$ that checks that $test(t)=true$ is
  equal to $\Aa_t$ but executes test $test(t')$ whenever $\Aa_t$
  executes $t'$ and performs $\stay{f'}$ whenever $\Aa_t$ performs
  $\Storage'$-operation $f'$. Analogously we define
  $\bar\Aa_{test(t)}$ and $\Aa_{\stay{f}}$.
\end{proof}

Note that all storage types $\Storage=(\SConfig, T, F, \SConfigInit)$
we consider are \emph{strongly connected} in the
sense that for any $\Sconfig, \Sconfig' \in\SConfig$ there is a
sequence $f_1, f_2, \dots, f_n$ of $\Storage$-operations such that
$\Sconfig'=f_1(f_2(\dots f_n(\Sconfig)\dots))$.
As Carayol and Woehrle already noticed, $\pop$  of
$\PStorage(\Storage)$ can be  simulated nondeterministically by
inverse push of $\PStorage_{inv}(\Storage)$ if $\Storage$ is strongly
connected by simply guessing the right $\Storage$ configuration and
restoring it before simulating the $\pop$ by an inverse push.
\begin{lemma}
  For strongly connected storage types $\Storage$,
  $\CStorage(\Storage) \preceq_N \CStorage_{inv}(\Storage)$.
  Moreover, these storage types are again strongly connected.
\end{lemma}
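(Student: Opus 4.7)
The plan is to take $\varphi$ to be the identity on configurations (both storage types share the same set of configurations, namely finite sequences over $\{\bot\}\times\SConfig$ with $\SConfig$ the set of $\Storage$-configurations) and exhibit simulating automata for each operation and test of $\CStorage(\Storage)$. The tests $\TOP{\bot}$ and $test(t)$ for $t\in T$, as well as the operations $\stay{f}$ and $\push{\bot,\id}$, exist verbatim in $\CStorage_{inv}(\Storage)$; their simulating automata are one-transition gadgets. The initial configurations coincide. Hence only the simulation of $\pop$ requires actual work.

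To simulate $\pop$ I build an automaton $\Aa_\pop$ with two states $q_i,q_f$, a self-loop $(q_i, true, q_i, \stay{f})$ for each $f\in F$ (operations of $\Storage$), and one exiting transition $(q_i, true, q_f, \invpush{\bot,\id})$. Started on a configuration $s = s''(\bot,\Sconfig')(\bot,\Sconfig)$, the loop nondeterministically rewrites the topmost $\Storage$-entry from $\Sconfig$ into some $\Sconfig^{*}$ by applying a sequence of $\Storage$-operations (each one actually defined, since an undefined operation aborts the run), and then attempts $\invpush{\bot,\id}$. By definition of inverse push, this final step succeeds iff $\Sconfig^{*}=\Sconfig'$, in which case the resulting storage is exactly $s''(\bot,\Sconfig')=\pop(s)$; this yields soundness. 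Completeness is precisely the strong-connectivity hypothesis on $\Storage$: there is an $F$-sequence transforming $\Sconfig$ into $\Sconfig'$, which guides a successful run of $\Aa_\pop$. If $s$ has only one entry, no choice of guesses enables $\invpush{\bot,\id}$, matching the fact that $\pop(s)$ is undefined.

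For the strong connectivity of $\CStorage(\Storage)$ and $\CStorage_{inv}(\Storage)$, the strategy given two configurations $s_1,s_2$ is: first empty $s_1$ down to the one-cell configuration $(\bot,\SConfigInit)$ (using $\pop$ in $\CStorage(\Storage)$; in $\CStorage_{inv}(\Storage)$ by iterating a \emph{deterministic} instance of the simulate-pop recipe, since strong connectivity of $\Storage$ furnishes a concrete $F$-sequence that makes the topmost entry coincide with the one below so that $\invpush{\bot,\id}$ becomes applicable); then rebuild $s_2$ bottom-up by alternating $\stay{f}$-sequences (to set each cell's $\Storage$-content, again by strong connectivity of $\Storage$) with $\push{\bot,\id}$-operations.

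The only genuine obstacle is the soundness of the $\pop$-simulation, and it reduces to the observation that $\invpush{\bot,\id}$ encodes exactly the syntactic condition ``the top two pushdown entries coincide''. Everything else is routine bookkeeping that turns strong connectivity of $\Storage$ into concrete $\stay{f}$-sequences.
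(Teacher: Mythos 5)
Your proposal is correct and matches the paper's (only sketched) argument: the paper justifies this lemma solely by the preceding remark that $\pop$ can be simulated by nondeterministically guessing and restoring the $\Storage$-configuration below the top and then applying an inverse push, which is exactly your $\Aa_{\pop}$ gadget. Your treatment of the remaining operations, the edge case of a one-cell pushdown, and the strong-connectivity claim are the routine details the paper omits, and they check out.
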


Induction on the previous lemma directly yields the following
proposition.
\begin{proposition}
  For $k\in\N$ and $\Storage\in\{\PStorage, \CStorageZero,
  \CStorage\}$ we have
  $\CStorage^k(\Storage) \preceq_N \CStorage^k_{inv}(\Storage)$.
\end{proposition}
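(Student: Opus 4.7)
The plan is a straightforward induction on $k$. The case $k=0$ is immediate, since $\CStorage^0(\Storage) = \Storage$ and $\preceq_N$ is reflexive (the identity map on $\SConfig$ together with trivial one-transition simulating automata serves as witness); the case $k=1$ is precisely the preceding lemma.

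For the inductive step, assume $\CStorage^k(\Storage) \preceq_N \CStorage^k_{inv}(\Storage)$. The chain I want to establish is
\begin{equation*}
  \CStorage^{k+1}(\Storage) \;=\; \CStorage(\CStorage^k(\Storage)) \;\preceq_N\; \CStorage_{inv}(\CStorage^k(\Storage)) \;\preceq_N\; \CStorage_{inv}(\CStorage^k_{inv}(\Storage)) \;=\; \CStorage^{k+1}_{inv}(\Storage).
\end{equation*}
The first $\preceq_N$ is the preceding lemma applied to the storage $\CStorage^k(\Storage)$. This requires that $\CStorage^k(\Storage)$ be strongly connected, which I would prove by a side-induction using the ``moreover'' clause of that lemma, starting from the observation that each of the base storages $\PStorage$, $\CStorageZero$, $\CStorage$ is strongly connected (any configuration is reachable from the initial one, and vice versa, by pushes and pops). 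The second $\preceq_N$ comes from the induction hypothesis by Proposition~\ref{prop:prececNondCompatiblePStorage}, i.e.\ monotonicity of the $\CStorage_{inv}$ operator with respect to $\preceq_N$.

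The remaining ingredient, which the statement tacitly uses, is transitivity of $\preceq_N$: given $\Storage_1 \preceq_N \Storage_2$ via $\varphi_1$ and $\Storage_2 \preceq_N \Storage_3$ via $\varphi_2$, the composition $\varphi_2 \circ \varphi_1$ witnesses $\Storage_1 \preceq_N \Storage_3$. Each simulating $\Storage_2$-automaton from the first layer is turned into a nondeterministic $\Storage_3$-automaton by substituting, for every $\Storage_2$-operation or $\Storage_2$-test it triggers, the corresponding simulating $\Storage_3$-automaton supplied by the second layer; the resulting product automaton still has polynomially many states and is easily seen to be sound. The only mildly subtle point is the initialisation clause of $\preceq_N$, which demands a deterministic sequence of $\Storage_3$-operations producing $\varphi_2(\varphi_1(\SConfigInit_1))$ from $\SConfigInit_3$; strong connectedness of $\Storage_3$ (again obtained from the side-induction) supplies such a sequence by concatenating a path to $\varphi_2(\SConfigInit_2)$ with a path to the target configuration. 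I expect the transitivity/strong-connectedness bookkeeping to be the only technical obstacle; the core of the argument is the clean two-step chain above.
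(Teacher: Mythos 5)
Your proof is correct and follows essentially the same route as the paper: an induction on $k$ whose step is a two-link chain combining the preceding lemma with the monotonicity of the operators from Proposition~\ref{prop:prececNondCompatiblePStorage}, together with transitivity of $\preceq_N$ and strong connectedness of the intermediate storage types. The only (immaterial) difference is the order of the two links --- the paper first applies the induction hypothesis inside $\CStorage(\cdot)$ and then the lemma to $\CStorage^{k-1}_{inv}(\Storage)$, whereas you first apply the lemma to $\CStorage^{k}(\Storage)$ and then the induction hypothesis inside $\CStorage_{inv}(\cdot)$; your explicit treatment of transitivity and the strong-connectedness side-induction is bookkeeping the paper leaves implicit.
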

\begin{proof}
  Inductively,
  $\CStorage^k(\Storage) \preceq_N
  \CStorage(\CStorage^{k-1}_{inv}(\Storage)) \preceq_N
  \CStorage_{inv}(\CStorage^{k-1}_{inv}(\Storage)) =
  \CStorage^k_{inv}(\Storage)$.
\end{proof}

The last claim we have to prove  is that
$\CStorage_{inv}^k(\CStorage) \equiv_N \PStorage_{inv}^k(\CStorage)$.
Again we first prepare the proof by induction with a simple lemma.

\begin{lemma}\label{lem:PinvtoCinvNond}
  $\PStorage_{inv}(\CStorage_{inv}^{k-1}(\CStorage)) \equiv_N
  \CStorage_{inv}(\CStorage_{inv}^{k-1}(\CStorage))$ for all $k\geq 1$.
\end{lemma}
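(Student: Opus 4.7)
The direction $\CStorage_{inv}(\CStorage_{inv}^{k-1}(\CStorage)) \preceq_N \PStorage_{inv}(\CStorage_{inv}^{k-1}(\CStorage))$ is trivial: since $\CStorage_{inv}$ is literally $\PStorage_{inv}$ restricted to the singleton pushdown alphabet $\{\bot\}$, every $\CStorage_{inv}$ configuration is a $\PStorage_{inv}$ configuration and every $\CStorage_{inv}$ operation or test translates verbatim into the corresponding $\PStorage_{inv}$ one. Take $\varphi=\id$ and use for each operation a one-transition $\PStorage_{inv}$ automaton executing exactly that operation.

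For the non-trivial direction $\PStorage_{inv}(\Storage_0) \preceq_N \CStorage_{inv}(\Storage_0)$ with $\Storage_0:=\CStorage_{inv}^{k-1}(\CStorage)$, I would adapt the mod-$3$ encoding from the proof of Proposition \ref{prop:equivStorages1}. First pre-process the source $\PStorage_{inv}(\Storage_0)$-automaton so that every level-$1$ push or pop is replaced by three consecutive such operations; the bottommost counter value is then always a multiple of three. Identifying $\{\bot, 0, 1\}$ with $\{0,1,2\}$, define $\varphi((\sigma_1, y_1)\cdots(\sigma_n, y_n)) = (\bot, \psi_{\sigma_1}(y_1))\cdots (\bot, \psi_{\sigma_n}(y_n))$, where $\psi_\sigma$ adds $\sigma$ extra pushes on the bottommost counter of its argument (reachable from the top by iterated $\stay{\cdot}$ down through the $k-1$ intermediate levels).

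The simulating automata for $\push{\sigma,\id}$, $\stay{f}$, $\invpush{\sigma}$, and $\TOP{\sigma}$ follow the template of Proposition \ref{prop:equivStorages1}, with one key substitution: the zero-test that was available there on the innermost $\CStorageZero$ is unavailable here, so whenever the simulation needs to act on the current top symbol $b$ (to strip the $\psi_b$-encoding during $\stay{f}$, to re-install $\psi_\sigma$ on the new top during $\push{\sigma, \id}$, or to match the second-to-top during $\invpush{\sigma}$), the automaton \emph{nondeterministically guesses} $b\in\{0,1,2\}$. The guess is then verified by exploiting the matching condition built into $\CStorage_{inv}$'s inverse push: after pushing a temporary scratch copy and applying $\psi_b^{-1}$ (or, dually, $\psi_b$) on that copy, executing $\invpush{\bot, \id}$ succeeds precisely when the two resulting substorages coincide---hence precisely when $b$ was guessed correctly; any incorrect guess leaves top and second distinct and the branch dies.

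The main obstacle is ensuring that every simulation branch which does reach the final control state ends in exactly $\varphi(f(\Sconfig))$, as required by the definition of $\preceq_N$. This forces the guess-and-invpush verification to be performed \emph{before} committing to the final configuration: the automaton pushes a scratch duplicate, applies the guess-dependent modification to it, attempts an inverse push (killing wrong branches), pops the scratch, and only then performs the actual operation. Once each individual operation and test has such a verifying simulator, composing $\varphi$ with these finitely many $\Storage'$-automata yields the nondeterministic simulation and hence $\PStorage_{inv}(\Storage_0) \equiv_N \CStorage_{inv}(\Storage_0)$.
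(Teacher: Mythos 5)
Your overall strategy (encode the pushdown symbol $\sigma\in\{\bot,0,1\}$ as an additive offset on a counter, guess the symbol nondeterministically, and verify the guess by exploiting the matching condition of the inverse push) is the same as the paper's, and your trivial direction is fine. But the verification gadget you describe does not work with the encoding you chose, and this is a genuine gap. You store only the single value $\psi_{\sigma_n}(y_n)$ (bottommost counter $3m+\sigma_n$). Your check pushes a scratch duplicate of this value, applies $\psi_b^{-1}$ to the duplicate, and attempts $\invpush{\bot,\id}$; that inverse push compares the \emph{modified} copy ($3m+\sigma_n-b$) against the \emph{unmodified original} ($3m+\sigma_n$), so it succeeds exactly when $b=0$, independently of $\sigma_n$. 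It therefore cannot distinguish the encoded symbol, wrong guesses survive, and the "only if" halves of conditions 2 and 3 in the definition of $\preceq_N$ fail. The mod-$3$ tripling inherited from Proposition \ref{prop:equivStorages1} does not help: that construction reads the residue off with the $0$-test of $\CStorageZero$, which is precisely what is unavailable here, and an inverse-push comparison between two values both derived from the same unknown $v$ only ever tests equality of the known offsets, never anything about $v$ itself.

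The missing idea is that the encoding must carry a \emph{clean reference copy} against which the offset can be checked. The paper's proof does exactly this: for $k=1$ it encodes $(\sigma_i,m_i)$ as the \emph{pair} $(\bot,m_i+\sigma_i)(\bot,m_i)$, so the test for symbol $\sigma$ adds $\sigma$ to the clean copy $m_i$ and inverse-pushes against $m_i+\sigma_i$, which succeeds iff $\sigma=\sigma_i$ (and the configuration is then restored); for $k\geq 2$ the clean duplicate is buried one level down, by defining $m+\sigma$ as "level-$2$ $\push{\bot,\id}$ followed by $\sigma$ level-$1$ pushes", so that stripping $\sigma$ level-$1$ symbols and performing a level-$2$ inverse push compares the offset copy against the duplicate created beneath it. With that change of encoding your guess-and-verify scheme goes through; without it, the simulation is not correct.
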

\begin{proof}
  The $\succeq_N$ direction is trivial.
  For the other direction, we first do the case $k=1$ and then the
  case $k\geq 2$.

  A $\PStorage_{inv}(\CStorage)$ configuration
  $(\sigma_1, m_1) \dots (\sigma_n, m_n)$ is identified with the
  $\CStorage_{inv}(\CStorage)$ configuration
  $(\bot, m_1+\sigma_1) (\bot, m_1) (\bot, m_2+\sigma_2)(\bot, m_2)
  \dots (\bot, m_n+\sigma_n) (\bot, m_n)$ (where we again identify
  $\bot$ with $2$).
  In this representation a test for the topmost symbol is simple:
  the topmost symbol is $\sigma$ if we can apply $\stay{\push{\bot}}$
  $\sigma$ many times follows by inverse push, push and $\sigma$ many
  $\stay{\pop}$ operations. The corresponding negative test is by
  guessing the symbol $\tau\in\{\bot,0,1\}\setminus\{\sigma\}$ and
  applying the positive test for $\tau$.
  With the ability to test for the encoded topmost symbol, it is then
  easy to simulate any of the $\PStorage_{inv}(\CStorage)$
  operations.

  For the case $k\geq 3$ we use basically the same idea but we have to
  take care that we only encode one topmost symbol in the topmost
  level $k-1$ counter.
  For this purpose we define an auxiliary notation let
  $\sigma\in\{0,1,\bot\}$, and $m$ a
  $\CStorage_{inv}^{k-1}(\CStorage)$ configuration.
  We write $m+\sigma$ for the result of applying to $m$ the level $2$
  operation $\push{\bot,\id}$ followed by the level $1$ $\push{\bot}$
  for $\sigma$  many times (level $n$  means that we put the
  mentioned operation into a $(k-n)$-fold application of $\stay{}$).
  We then encode a
  $\PStorage_{inv}(\CStorage_{inv}^{k-1}(\CStorage))$ configuration
  $(\sigma_1, m_1) \dots (\sigma_n, m_n)$ as the
  $\CStorage_{inv}^k(\CStorage)$ configuration
  $(\bot, m_1+\sigma_1) \dots (\bot,m_n+\sigma_n)$.
  Simulation is now carried out similar to the case $k=2$. The
  simulation of the test $\TOP{\sigma}$ is by doing the right number
  of level $1$ $\pop$ operations followed by a inverse push of level
  $2$ and then again restoring the initial storage configuration.
  If we want to apply a storage operation (different from
  $\push{\sigma,\id}$ and $\invpush{\sigma}$)
  to  $(\bot, m_1+\sigma_1) \dots (\bot,m_n+\sigma_n)$, we first
  restore the configuration
  $(\bot, m_1+\sigma_1) \dots (\bot,m_n)$ then apply the configuration
  and afterwards restore the encoding of $\sigma_n$.
  For the $\push{\sigma,\id}$ we just apply $\push{\bot,\id}$ and
  subsequently replace the topmost $m_n+\sigma_n$ by $m_n+\sigma$.
  For the inverse push, we first have to guess $\sigma_{n-1}$, replace
  $m_n+\sigma_n$ by $m_n+\sigma_{n-1}$ and then apply the inverse
  push.

  Again it is straightforward to prove that this simulation is
  correct.
\end{proof}

\begin{corollary}
  $\CStorage_{inv}^k(\CStorage) \equiv_N
  \PStorage_{inv}^k(\CStorage)$.
\end{corollary}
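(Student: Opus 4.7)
The plan is to prove the corollary by induction on $k$, combining the just-established Lemma \ref{lem:PinvtoCinvNond} (which handles the ``outermost layer'' swap between $\PStorage_{inv}$ and $\CStorage_{inv}$) with Proposition \ref{prop:prececNondCompatiblePStorage} (which tells us that $\PStorage_{inv}$ applied to equivalent storages yields equivalent storages, and similarly for $\CStorage_{inv}$).

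The base case $k=0$ is trivial since both expressions reduce to $\CStorage$. For the inductive step, assume $\CStorage_{inv}^{k-1}(\CStorage) \equiv_N \PStorage_{inv}^{k-1}(\CStorage)$. First I would apply Proposition \ref{prop:prececNondCompatiblePStorage} to the outer operator $\PStorage_{inv}$ to obtain
\begin{equation*}
  \PStorage_{inv}^k(\CStorage) = \PStorage_{inv}\bigl(\PStorage_{inv}^{k-1}(\CStorage)\bigr) \equiv_N \PStorage_{inv}\bigl(\CStorage_{inv}^{k-1}(\CStorage)\bigr).
\end{equation*}
Then I would invoke Lemma \ref{lem:PinvtoCinvNond} directly to rewrite this as
\begin{equation*}
  \PStorage_{inv}\bigl(\CStorage_{inv}^{k-1}(\CStorage)\bigr) \equiv_N \CStorage_{inv}\bigl(\CStorage_{inv}^{k-1}(\CStorage)\bigr) = \CStorage_{inv}^k(\CStorage).
\end{equation*}
Transitivity of $\equiv_N$ (which follows immediately from the definition, by composing the respective simulating automata) then yields the claim.

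The argument is essentially routine once the two ingredients are in hand, so there is no real obstacle; the only point requiring a brief remark is that the strong-connectivity assumption needed to apply the pushdown-operator results cascades correctly through the induction (each of $\CStorage$, $\CStorage_{inv}^j(\CStorage)$, and $\PStorage_{inv}^j(\CStorage)$ is strongly connected, as can be seen from the existence of push/inverse-push generating sequences). Thus the corollary reduces to a one-line chain of equivalences, and the induction goes through uniformly for all $k\geq 0$.
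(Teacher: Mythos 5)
Your proof is correct and follows essentially the same route as the paper: an induction on $k$ chaining $\PStorage_{inv}^k(\CStorage) = \PStorage_{inv}(\PStorage_{inv}^{k-1}(\CStorage)) \equiv_N \PStorage_{inv}(\CStorage_{inv}^{k-1}(\CStorage)) \equiv_N \CStorage_{inv}(\CStorage_{inv}^{k-1}(\CStorage)) = \CStorage_{inv}^k(\CStorage)$ via Proposition \ref{prop:prececNondCompatiblePStorage} and Lemma \ref{lem:PinvtoCinvNond}. Your side remark about strong connectivity is harmless but not actually needed here, since neither of the two cited results carries that hypothesis.
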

\begin{proof}
  By induction on $k$ and Lemmas  \ref{lem:PinvtoCinvNond} and
  Proposition \ref{prop:prececNondCompatiblePStorage}, we obtain
  \begin{equation*}
    \begin{aligned}
    \PStorage_{inv}^k(\CStorage) &=
    \PStorage_{inv}(\PStorage_{inv}^{k-1}(\CStorage)) \equiv_N
    \PStorage_{inv}(\CStorage_{inv}^{k-1}(\CStorage))\\
    &\equiv_N
    \CStorage_{inv}(\CStorage_{inv}^{k-1}(\CStorage)) \equiv_N
    \CStorage_{inv}^k(\CStorage)
    \end{aligned}
  \end{equation*}
\end{proof}

\end{document}